\numberwithin{equation}{section}
\newtheorem{Theorem}{Theorem}[section]
\newtheorem{Corollary}[Theorem]{Corollary}
\newtheorem{Lemma}[Theorem]{Lemma}
\newtheorem{Proposition}[Theorem]{Proposition}
{ \theoremstyle{definition}
\newtheorem{Definition}[Theorem]{Definition}
\newtheorem{Remark}[Theorem]{Remark} }
\begin{document}
\allowdisplaybreaks

\newcommand{\arXivNumber}{1901.03117}

\renewcommand{\PaperNumber}{067}

\FirstPageHeading

\ShortArticleName{Ergodic Decomposition for Inverse Wishart Measures on Infinite Positive-Definite Matrices}

\ArticleName{Ergodic Decomposition for Inverse Wishart Measures\\ on Infinite Positive-Definite Matrices}

\Author{Theodoros ASSIOTIS}

\AuthorNameForHeading{T.~Assiotis}

\Address{Mathematical Institute, University of Oxford, Oxford, OX2 6GG, UK}
\Email{\href{mailto:theo.assiotis@maths.ox.ac.uk}{theo.assiotis@maths.ox.ac.uk}}
\URLaddress{\url{https://sites.google.com/view/theoassiotis/home}}

\ArticleDates{Received April 08, 2019, in final form September 04, 2019; Published online September 11, 2019}

\Abstract{The ergodic unitarily invariant measures on the space of infinite Hermitian matrices have been classified by Pickrell and Olshanski--Vershik. The much-studied complex inverse Wishart measures form a projective family, thus giving rise to a unitarily inva\-riant measure on infinite positive-definite matrices. In this paper we completely solve the corresponding problem of ergodic decomposition for this measure.}

\Keywords{infinite random matrices; ergodic measures; inverse Wishart measures; ortho\-gonal polynomials}

\Classification{60B15; 60G55}

\section{Introduction}\label{section1}

\subsection{Informal introduction and historical overview}\label{section1.1}
Since their introduction in the 1920's \cite{WishartOriginal} the Wishart measures have been ubiquitous in mathematics, physics and statistics. They appear in diverse fields, from statistical analysis \cite{StatisticalMatrixAnalysis, Johnstone} to stochastic processes \cite{Wishart, RiderValko} and free probability \cite{FreeCapitaine, FreeSpeicher}. More recently, there has been renewed interest in connection to their applications in quantum transport \cite{QuantumTransport1,QuantumTransport2} and their enumerative properties, in particular relations to Hurwitz numbers \cite{Hurwitz}.

This paper stems from the following remarkable property, and its consequences, of these measures: the inverse Wishart measures $\big\{\mathsf{M}^{(\nu),N}\big\}_{N\ge 1}$ ($\nu$ is a real parameter greater than~$-1$) defined in (\ref{MeasureDefinition}) on (positive-definite) Hermitian matrices form a projective family under the so-called corners maps given in (\ref{Corners}) below. The goal of this paper is to study the corresponding unitarily invariant, namely invariant under conjugation by unitary matrices, measure $\mathsf{M}^{(\nu)}$ on infinite (positive-definite) Hermitian matrices and describe explicitly how it decomposes into ergodic components.

These ergodic measures for the action by conjugation of the inductive limit of unitary groups on infinite Hermitian matrices have been classified in classical works of Pickrell~\cite{Pickrell} and Olshanski and Vershik~\cite{OlshanskiVershik}. They are parametrized by the infinite dimensional space $\Omega$ defined in (\ref{Omegadefinition}) below and depend on a set of parameters $\big(\{\alpha^+\},\{\alpha^-\},\gamma_1,\gamma_2\big)\subset \mathbb{R}_+^{\infty}\times \mathbb{R}_+^{\infty}\times\mathbb{R}\times\mathbb{R}_+$. As we will see in Section~\ref{section3}, the $\alpha$ parameters are asymptotic eigenvalues and $\gamma_1$ and $\gamma_2$ are related to the asymptotic trace and asymptotic sum of squares respectively; the parameter $\gamma_2$ is also called the `Gaussian component'. The study of $\gamma_1$ especially and also~$\gamma_2$ is a much more difficult task compared to describing the $\alpha$'s.

The problem of ergodic decomposition of certain distinguished unitarily invariant measures on infinite Hermitian matrices was initiated by Borodin and Olshanski in \cite{BorodinOlshanskiErgodic}. They considered the Hua--Pickrell measures depending on a complex parameter $\mathsf{s}$. These measures were first studied by Hua in his classical book \cite{Hua} and implicit in his calculations is consistency under the corners maps, see also Neretin's generalization~\cite{Neretin}. Borodin and Olshanski described the $\alpha$ parameters and proved that for $\mathsf{s}=0$, $\gamma_2=0$. The determination of $\gamma_1$ and $\gamma_2$ was left open for many years until recently in a breakthrough work~\cite{Qiu} Qiu proved that $\gamma_2=0$ for general parameter $\mathsf{s}$ and completely described $\gamma_1$ for real $\mathsf{s}$ (see Remark~\ref{EstimatesHuaPickrell} for more on this restriction). In the case of the inverse Wishart measures $\mathsf{M}^{(\nu)}$ we are able to completely describe all the parameters for all $\nu>-1$. This is achieved in Theorem~\ref{MainTheorem}, the main result of this paper.

A closely related problem is the ergodic decomposition of the so-called Pickrell measures \cite{Pickrellmeasure} (depending on a real parameter $\mathfrak{s}$) on infinite square complex matrices. The ergodic unitarily invariant (by multiplication to the left and to the right) measures on infinite square complex matrices have also been classified. These are parametrized by a different infinite dimensional space that is a subset of $\mathbb{R}_+^{\infty} \times \mathbb{R}_+$ (there is no analogue of $\gamma_1$). The explicit description of the ergodic decomposition has been settled in a series of papers by Bufetov \cite{BufetovI,BufetovII,BufetovIII} (see also~\cite{BufetovQiu}), which have been very influential for us. We should mention that the papers of Bufetov \cite{BufetovI, BufetovII,BufetovIII} and Qiu~\cite{Qiu} also study the infinite case of the problem of ergodic decomposition, namely when the corresponding matrix measures no longer have finite mass. Since this requires quite different techniques we will not consider it in this work.

Finally, before closing this informal introduction we remark that a key role in all these papers \cite{BorodinOlshanskiErgodic,BufetovI,BufetovII,BufetovIII,BufetovQiu,Qiu} is played by orthogonal polynomials. In the case of the Hua--Pickrell measures these are the pseudo-Jacobi polynomials and in the case of the Pickrell measures these are the Jacobi. The analogous role in this paper is played by the Bessel \cite{BesselPolynomials} and also Laguerre polynomials.

In the next subsections we give the necessary background to make the informal discussion above precise and state our results rigorously.

\subsection{Ergodic unitarily invariant measures on infinite Hermitian matrices}\label{section1.2}

Let $\mathbb{U}(N)$ and $H(N)$ be the group of $N \times N$ unitary matrices and the space of $N \times N$ Hermitian matrices respectively. Let $\mathbb{U}(\infty)$ be the inductive limit of unitary groups
\begin{gather*}
\mathbb{U}(\infty)=\mathop{\underset{\rightarrow}{\lim}}\mathbb{U}(N),
\end{gather*}
 under the natural embeddings. In more explicit terms an element of $\mathbb{U}(\infty)$ is an infinite block matrix whose top corner is an $N\times N$ unitary matrix for some finite $N$ and the other block is the (infinite) identity matrix.

Consider the so called corners maps $\pi^N_{N-1}\colon H(N)\to H(N-1)$ defined by
\begin{align}\label{Corners}
\pi_{N-1}^N\big[(h_{ij})_{i,j=1}^N\big]= (h_{ij} )_{i,j=1}^{N-1}.
\end{align}
Let $H$ be the space of infinite Hermitian matrices defined as the projective limit
\begin{gather*}
H=\mathop{\underset{\leftarrow}{\lim}}H(N),
\end{gather*}
under the corners maps. Moreover, let $H_+(N)\subset H(N)$ denote the space of $N \times N$ positive-definite matrices; namely matrices with positive eigenvalues. By Cauchy's interlacing theorem we get that $\pi^N_{N-1}\colon H_+(N)\to H_+(N-1)$. Thus, we can also correctly define the projective limit $H_+=\underset{\leftarrow}{\lim}H_+(N) \subset H$.

Now, $\mathbb{U}(\infty)$ acts on $H$ by conjugation: for each $u \in \mathbb{U}(\infty)$ we have a map $T_u\colon H \to H$ given by $T_u(h)=u^*hu$. It is a classical theorem of Pickrell \cite{Pickrell} and also Olshanski and Vershik \cite{OlshanskiVershik} that ergodic measures on $H$ for this action (namely ones such that all $\mathbb{U}(\infty)$-invariant subsets have mass 0 or 1) are parametrized by the infinite dimensional space $\Omega \subset \mathbb{R}^{2\infty+2}$:
\begin{gather}
\Omega =\Big\{\omega=(\alpha^+,\alpha^-,\gamma_1,\delta)\in \mathbb{R}^{2\infty+2}=\mathbb{R}^\infty \times \mathbb{R}^\infty \times \mathbb{R} \times \mathbb{R}\,|\nonumber\\
\hphantom{\Omega =\Big\{}{}\alpha^+ =(\alpha_1^+\ge \alpha_2^+\ge \cdots \ge 0) ; \, \alpha^-=(\alpha_1^-\ge \alpha_2^-\ge \cdots \ge 0);\nonumber\\
\hphantom{\Omega =\Big\{}{} \gamma_1 \in \mathbb{R} ;\, \delta \ge 0 ;\,\sum_{}^{}(\alpha_i^+)^2 + \sum_{}^{}(\alpha_i^-)^2\le \delta \Big\},\nonumber\\
 \gamma_2=\delta-\sum_{}^{}\big(\alpha_i^+\big)^2 - \sum_{}^{}\big(\alpha_i^-\big)^2.\label{Omegadefinition}
\end{gather}
Observe that $\Omega$ is a locally compact space. We then have the following classification theorem, see \cite{OlshanskiVershik,Pickrell}, also \cite{Defosseux}.

\begin{Theorem}[Pickrell, Olshanski--Vershik] There exists a parametrization of ergodic $\mathbb{U}(\infty)$-invariant probability measures on the space $H$ by the points of the space $\Omega$ described as follows. Given $\omega \in \Omega$ the characteristic function of the ergodic measure $M_{\omega}$ is given by
\begin{gather*}
\int_{X \in H}^{}{\rm e}^{{\rm i}\operatorname{Tr}(\operatorname{diag}(r_1,\dots,r_n,0,0,\dots)X)}M_{\omega}({\rm d}X)=\prod_{j=1}^{n}F_{\omega}(r_j),
\end{gather*}
where
\begin{gather*}
F_{\omega}(x)={\rm e}^{{\rm i}\gamma_1x-\frac{\gamma_2}{2}x^ 2}\prod_{k=1}^{\infty}\frac{{\rm e}^{-{\rm i}\alpha_k^+x}}{1-{\rm i}\alpha_k^+x}\prod_{k=1}^{\infty}\frac{{\rm e}^{{\rm i}\alpha_k^-x}}{1+{\rm i}\alpha_k^-x} .
\end{gather*}
\end{Theorem}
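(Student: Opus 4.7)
The plan is to reduce the problem to a one-variable classification by exploiting $\mathbb{U}(\infty)$-invariance together with ergodicity, and then identify the one-variable factor via a Pólya-frequency / Edrei-type characterization. As a first step, for any $\mathbb{U}(\infty)$-invariant probability measure $M$ on $H$, the Fourier transform
\[
A\mapsto \int_{H} {\rm e}^{{\rm i}\operatorname{Tr}(AX)}M({\rm d}X),
\]
defined on finite-rank Hermitian test matrices $A$, depends only on the (nonzero) eigenvalues of $A$ since conjugation by any $u\in\mathbb{U}(\infty)$ fixes $M$. Specializing to $A=\operatorname{diag}(r_1,\dots,r_n,0,0,\dots)$ produces a consistent family of symmetric functions $\Phi^{(n)}_M(r_1,\dots,r_n)$ which, by Bochner's theorem and the definition of $H$ as the projective limit of the $H(N)$, uniquely determines $M$.

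The second step extracts a product structure from ergodicity. For any permutation $\sigma$ of $\{1,\dots,N\}$, the corresponding permutation matrix sits inside $\mathbb{U}(\infty)$, so the family $(X_{jj})_{j\ge 1}$ of diagonal entries is exchangeable under $M$. More generally, one can rotate any collection of pairwise orthogonal one-dimensional subspaces independently using elements of $\mathbb{U}(\infty)$. A Hewitt--Savage / de Finetti argument then shows that for ergodic $M$ the symmetric characteristic function must factorize,
\[
\Phi^{(n)}_M(r_1,\dots,r_n)=\prod_{j=1}^{n}F(r_j),
\]
for a single continuous function $F\colon\mathbb{R}\to\mathbb{C}$ with $F(0)=1$ (the triviality of the $\mathbb{U}(\infty)$-invariant $\sigma$-algebra kills all cross-correlations between distinct one-dimensional directions).

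The third and most involved step is the classification of $F$. Since $F$ arises as a pointwise limit of $\mathbb{U}(N)$-spherical characteristic functions of probability measures on $H(N)$ as $N\to\infty$, it is forced into a Pólya frequency (total positivity) class. An Edrei-type theorem for these limiting spherical functions, which is the analytic heart of the proof, then yields the explicit product representation
\[
F(x)={\rm e}^{{\rm i}\gamma_1 x-\tfrac{\gamma_2}{2}x^2}\prod_{k\ge 1}\frac{{\rm e}^{-{\rm i}\alpha_k^+ x}}{1-{\rm i}\alpha_k^+ x}\prod_{k\ge 1}\frac{{\rm e}^{{\rm i}\alpha_k^- x}}{1+{\rm i}\alpha_k^- x},
\]
with parameters $\alpha_k^\pm\ge 0$ monotone, $\gamma_1\in\mathbb{R}$, $\gamma_2\ge 0$, and the convergence conditions $\sum(\alpha_k^\pm)^2<\infty$ and $\gamma_2=\delta-\sum(\alpha_k^+)^2-\sum(\alpha_k^-)^2$ emerging as the necessary conditions for the infinite product of centered exponential characteristic functions to converge to a well-defined limit. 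Conversely, each $\omega\in\Omega$ is realized by explicit building blocks: a rank-one projector of weight $\alpha_k^\pm$ contributes the corresponding factor, a deterministic multiple of the identity contributes $\gamma_1$, and a Gaussian Unitary Ensemble limit contributes $\gamma_2$; infinite convolutions then assemble the general case, and ergodicity of $M_\omega$ follows from a Kolmogorov zero-one argument applied to these independent ingredients.

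The main obstacle is this final total-positivity classification: ruling out every other analytic form for $F$ and simultaneously pinning down the convergence inequality $\sum(\alpha_i^+)^2+\sum(\alpha_i^-)^2\le\delta$. This is the delicate part requiring complex-analytic input (an Edrei--Schoenberg-type theorem for the inductive limit group $\mathbb{U}(\infty)$), and it is precisely where the work of Pickrell and of Olshanski--Vershik does the heavy lifting. Once $F$ is identified, the remaining assembly of ergodic measures $M_\omega$ and the verification of their characteristic functions are then essentially bookkeeping.
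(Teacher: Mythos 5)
This theorem is stated in the paper as a classical result of Pickrell and Olshanski--Vershik and is \emph{not} proved there; the paper simply cites \cite{OlshanskiVershik,Pickrell,Defosseux}. So there is no in-paper argument to compare against, and your proposal must be judged as a standalone sketch. As a roadmap it is faithful to the Olshanski--Vershik strategy (reduce to the spherical/characteristic function on diagonal finite-rank test matrices, establish multiplicativity for ergodic measures, classify the one-variable factor by a total-positivity theorem of Edrei--Schoenberg type, then realize each $\omega$ by explicit building blocks). But as a proof it has two genuine gaps.

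First, the factorization step. You argue that exchangeability of the diagonal entries $(X_{jj})$ plus ergodicity ``kills all cross-correlations'' via Hewitt--Savage/de Finetti. De Finetti only gives that $(X_{jj})$ is a \emph{mixture} of i.i.d.\ sequences; to upgrade to genuine independence you need the exchangeable $\sigma$-algebra of the diagonal process to be trivial. Ergodicity under $\mathbb{U}(\infty)$ gives triviality of the $\mathbb{U}(\infty)$-invariant $\sigma$-algebra, but a permutation-symmetric event in the diagonal entries is invariant only under the subgroup of permutation matrices, not under all of $\mathbb{U}(\infty)$, so triviality does not follow as stated. The actual multiplicativity theorem is proved by quite different means (the Gelfand-pair structure of $(\mathbb{U}(\infty)\ltimes H,\mathbb{U}(\infty))$ and the ergodic method of approximating $M_\omega$ by orbital measures, where multiplicativity emerges from asymptotic decoupling of corners of large random orbits), and this needs to be supplied. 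Second, the classification of $F$ --- which you yourself identify as the analytic heart --- is asserted, not proven: you invoke ``an Edrei-type theorem'' and defer to Pickrell and Olshanski--Vershik. Since that step is the entire content of the theorem, the proposal is an accurate outline rather than a proof; it ultimately rests on the same citations the paper does.
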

Note that the characteristic function $F_{\omega}$ is well defined for all $\omega \in \Omega$ by the fact that the sum of squares of the $\alpha$'s is finite. Also, observe that parameter $\gamma_2$ corresponds to an infinite random Hermitian matrix with independent (subject to the Hermitian constraint) complex Gaussian entries with mean 0 and variance $\gamma_2$ on the diagonal.

\subsection[The inverse Wishart measures $\mathsf{M}^{(\nu)}$ on infinite positive-definite matrices]{The inverse Wishart measures $\boldsymbol{\mathsf{M}^{(\nu)}}$ on infinite positive-definite matrices}\label{section1.3}

For $\nu>-1$, consider the complex Wishart (or Laguerre ensemble) probability measure on $N \times N$ Hermitian matrices, supported on $H_+(N)$:
\begin{gather*}
\mathcal{M}^{(\nu),N}({\rm d}Y)=\widetilde{\operatorname{const}_{\nu, N}}\det(Y)^{\nu} {\rm e}^{-\operatorname{Tr}Y}\mathbf{1}_{\{Y \in H_+(N)\}}{\rm d}Y\,
\end{gather*}
where throughout the paper for a Hermitian matrix $Y$, ${\rm d}Y$ denotes Lebesgue measure on $H(N)$ (we suppress dependence on~$N$):
\begin{gather*}
{\rm d}Y=\prod_{j=1}^{N}{\rm d}Y_{jj} \prod_{1 \le j <k \le N}^{}{\rm d} \operatorname{Re} Y_{jk}{\rm d} \operatorname{Im} Y_{jk}.
\end{gather*}
The restriction $\nu>-1$ is so that the normalization constant $\widetilde{\operatorname{const}_{\nu, N}}$ is finite.

Under the change of variables $Y=2X^{-1}$ we obtain the central object of study in this paper, the inverse Wishart probability measures on $H_+(N)$
\begin{align}\label{MeasureDefinition}
\mathsf{M}^{(\nu),N}({\rm d}X)= \operatorname{const}_{\nu, N} \det(X)^{-\nu-2N} {\rm e}^{-2\operatorname{Tr}X^{-1}}\mathbf{1}_{ \{X \in H_+(N) \}}{\rm d}X.
\end{align}
Observe that, for all $N \in \mathbb{N}$ the measure $\mathsf{M}^{(\nu),N}$ is unitarily invariant, namely invariant under the action of $\mathbb{U}(N)$ by conjugation. Then, we have the following consistency result:

\begin{Proposition}\label{ConsistencyIntro} For $\nu >-1$ the measures $\big\{\mathsf{M}^{(\nu),N}\big\}_{N\ge 1}$ form a projective family
\begin{gather*}
\big(\pi_{N-1}^N\big)_*\mathsf{M}^{(\nu),N}=\mathsf{M}^{(\nu),N-1}.
\end{gather*}
\end{Proposition}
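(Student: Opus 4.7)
The plan is to compute the pushforward $(\pi_{N-1}^N)_*\mathsf{M}^{(\nu),N}$ by integrating out the ``new'' entries of $X\in H_+(N)$ explicitly and checking that the resulting density on $H_+(N-1)$ agrees with that of $\mathsf{M}^{(\nu),N-1}$ up to normalization; equality of the normalizing constants then follows because both sides are probability measures.

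More precisely, I would write an element $X\in H_+(N)$ in block form
\begin{gather*}
X=\begin{pmatrix} A & b \\ b^* & c \end{pmatrix},\qquad A=\pi_{N-1}^N(X)\in H_+(N-1),\quad b\in\mathbb{C}^{N-1},\quad c\in\mathbb{R},
\end{gather*}
and change variables from $(b,c)$ to $(u,d)$ via $u=A^{-1}b$, $d=c-b^*A^{-1}b$ (the Schur complement). Positive-definiteness of $X$ given positive-definite $A$ is equivalent to $d>0$ with no constraint on $u\in\mathbb{C}^{N-1}$, and the Jacobian of this change of variables contributes a factor $(\det A)^2$ so that ${\rm d}X={\rm d}A\,(\det A)^2\,{\rm d}u\,{\rm d}d$.

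The next step is to re-express $\det(X)$ and $\operatorname{Tr}(X^{-1})$ in these coordinates using the standard block formulas:
\begin{gather*}
\det(X)=\det(A)\,d,\qquad \operatorname{Tr}\big(X^{-1}\big)=\operatorname{Tr}\big(A^{-1}\big)+d^{-1}\big(1+|u|^2\big),
\end{gather*}
the latter because the $(N,N)$-block of $X^{-1}$ is $d^{-1}$ and its upper-left block is $A^{-1}+ud^{-1}u^*$. Substituting into (\ref{MeasureDefinition}) the density factorizes and the $u$- and $d$-integrals decouple:
\begin{gather*}
\int_{\mathbb{C}^{N-1}}{\rm e}^{-2d^{-1}|u|^2}\,{\rm d}u=\left(\frac{\pi d}{2}\right)^{N-1},\qquad \int_0^\infty d^{-\nu-N-1}{\rm e}^{-2/d}\,{\rm d}d=2^{-\nu-N}\Gamma(\nu+N),
\end{gather*}
both of which converge precisely because $\nu>-1$.

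Collecting everything, the remaining $A$-density is proportional to $\det(A)^{-\nu-2N+2}{\rm e}^{-2\operatorname{Tr}(A^{-1})}=\det(A)^{-\nu-2(N-1)}{\rm e}^{-2\operatorname{Tr}(A^{-1})}$, which is exactly the shape of the density of $\mathsf{M}^{(\nu),N-1}$. Since $(\pi_{N-1}^N)_*\mathsf{M}^{(\nu),N}$ and $\mathsf{M}^{(\nu),N-1}$ are both probability measures with proportional densities, they coincide, which also yields the ratio $\operatorname{const}_{\nu,N-1}/\operatorname{const}_{\nu,N}$ as a by-product. The only minor obstacle is the bookkeeping of powers of $\det(A)$ and the verification that the two one-dimensional integrals in $u$ and $d$ combine to give the correct Gamma factor; everything else is standard block-matrix calculus.
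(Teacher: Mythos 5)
Your proof is correct. The Schur-complement change of variables $(b,c)\mapsto(u,d)$ with $b=Au$, $c=d+u^*Au$ indeed has real Jacobian $(\det A)^2$, the block formulas $\det X=\det(A)\,d$ and $\operatorname{Tr}\big(X^{-1}\big)=\operatorname{Tr}\big(A^{-1}\big)+d^{-1}\big(1+|u|^2\big)$ are right, and the Gaussian and inverse-gamma integrals decouple exactly as you claim, leaving a density on $H_+(N-1)$ proportional to that of $\mathsf{M}^{(\nu),N-1}$; normalization then forces equality. (One small quibble: the $d$-integral converges for all $\nu>-N$, not ``precisely'' because $\nu>-1$; the restriction $\nu>-1$ is what makes the $N=1$ measure itself normalizable.) However, your route is genuinely different from the paper's. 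The paper explicitly flags Schur complementation as the ``folklore'' argument from the statistical literature for the real symmetric analogue, and deliberately takes another path: it first proves consistency of the induced eigenvalue measures $\mu_N^{\nu}$ under the interlacing Markov kernel $\Lambda_N^{N+1}$ (Proposition~\ref{eigenvalueconsistency}), by expanding the Vandermonde determinant in the monic Bessel polynomials and using their backward shift equation to telescope the integrals over the interlacing region, and then lifts this to the matrix level via Baryshnikov's description of the corners map acting on orbital measures (Proposition~\ref{consistencymatrices}). Your computation is shorter, more elementary, and yields the ratio $\operatorname{const}_{\nu,N-1}/\operatorname{const}_{\nu,N}$ as a by-product, but it leans on the very particular multiplicative structure of the density. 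The paper's orthogonal-polynomial argument is instead a template that transfers to any weight admitting a backward shift relation (e.g., the pseudo-Jacobi polynomials for the Hua--Pickrell measures, as the paper notes), and it introduces the eigenvalue-level determinantal structure on which the asymptotic analysis of the rest of the paper is built.
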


Thus, by Kolmogorov's theorem we obtain a unitarily invariant measure $\mathsf{M}^{(\nu)}$ on $H$ that is supported on $H_{+}$.

Proposition \ref{ConsistencyIntro} is proven in Section~\ref{section2} as Proposition \ref{consistencymatrices}. A key role in the proof is played by the Bessel orthogonal polynomials \cite{BesselPolynomials}.

\begin{Remark}Although we have not been able to locate Proposition~\ref{ConsistencyIntro} anywhere in the lite\-ra\-ture, in an equivalent form a close variant of it, for the analogous measure on real symmetric positive-definite matrices, appears to be folklore in the statistical literature, see for example \cite[Chapter~3]{StatisticalMatrixAnalysis}. The argument there makes use of the technique of Schur complementation, the formula for the determinant of a block matrix and the special form of the density of $\mathcal{M}^{(\nu),N}$. The proof presented in Section~\ref{section2} is completely different and makes a novel use of the connection to orthogonal polynomials.
\end{Remark}

\subsection[Description of ergodic decomposition of $\mathsf{M}^{(\nu)}$]{Description of ergodic decomposition of $\boldsymbol{\mathsf{M}^{(\nu)}}$}\label{section1.4}

Consider the measure $\mathsf{M}^{(\nu)}$ defined above for $\nu >-1$. It is a result of Borodin and Olshanski, see Proposition~4.4 in~\cite{BorodinOlshanskiErgodic}, that there exists a unique probability measure $\mathfrak{m}^{(\nu)}$ on $\Omega$ such that
\begin{gather*}
\mathsf{M}^{(\nu)}=\int_{\Omega}^{}M_{\omega}\mathfrak{m}^{(\nu)}({\rm d}\omega).
\end{gather*}
Here, the equality is interpreted as integrated against a test function, see Section~\ref{section3}.
The main result of this paper is the explicit description of~$\mathfrak{m}^{(\nu)}$. To proceed and state it precisely we need some more definitions and background on determinantal measures.

\begin{Definition} Define the subset $\Omega_0^+ \subset \Omega$ such that $\omega=\left(\alpha^+,\alpha^-, \gamma_1, \delta\right) \in \Omega_0^+$ iff
\begin{gather*}
\alpha_i^{-}(\omega) \equiv 0, \qquad \alpha_j^{+}(\omega) \neq 0, \qquad \text{for all} \quad i,j \in \mathbb{N};\\
\gamma_2(\omega) =\delta(\omega)-\sum_{i}^{}\big(\alpha_i^+(\omega)\big)^2-\sum_{i}^{}\big(\alpha_i^-(\omega)\big)^2=0;\\
\gamma_1(\omega) = \sum_{i }^{}\alpha_i^{+}(\omega)<\infty.
\end{gather*}
\end{Definition}
In particular, on $\Omega_0^+$ there is no `Gaussian component' namely $\gamma_2(\omega)\equiv 0$ and the para\-me\-ter~$\gamma_1(\omega)$ is completely determined by the $\alpha_i^+(\omega)$.

\paragraph{Determinantal measures}
Let $\mathcal{X}$ be a locally compact Polish space equipped with a $\sigma$-finite reference measure $\mu$. Let $\mathsf{Conf}(\mathcal{X})$ be the space of point configurations over $\mathcal{X}$. Points in a point configuration $\mathfrak{X} \in \mathsf{Conf}(\mathcal{X})$ will be called particles. We can embed $\mathsf{Conf}(\mathcal{X})$ in the space of finite measures on $\mathcal{X}$ by $\mathfrak{X}\mapsto \sum\limits_{x \in \mathfrak{X}}^{}\delta_x$ and with the induced topology $\mathsf{Conf}(\mathcal{X})$ is a Polish space.

A Borel probability measure $\mathbb{P}$ on $\mathsf{Conf}(\mathcal{X})$ is called a determinantal point process or measure, see \cite{Soshnikov}, with (Hermitian) kernel $\mathsf{K}\colon \mathcal{X} \times \mathcal{X} \to \mathbb{C}$ if for any $n\in \mathbb{N}$ and function $\Phi\in C_c\big(\mathcal{X}^n\big)$, continuous and of compact support, we have
\begin{gather}
\int_{\mathsf{Conf}(\mathcal{X})}^{} \sum_{x_1, \dots,x_n \in \mathfrak{X}}\Phi(x_1,\dots,x_n)\mathbb{P}({\rm d}\mathfrak{X})\nonumber\\
\qquad{} =\int_{\mathcal{X}^n}^{}\Phi(x_1,\dots,x_n)\det (\mathsf{K}(x_i,x_j) )_{i,j=1}^n{\rm d}\mu(x_1)\cdots {\rm d}\mu(x_n),\label{determinantaldefinition}
\end{gather}
where the sum is taken over ordered $n$-tuples of particles with pairwise distinct labels. The measure $\mathbb{P}$ satisfying (\ref{determinantaldefinition}) completely determines the pair $(\mathsf{K},\mu)$ and dropping dependence on $\mu$ (usually fixed), we denote it by $\mathbb{P}_{\mathsf{K}}$.

Throughout this paper the reference measure will be Lebesgue measure. We now define a~distinguished determinantal measure on $(0,\infty)$.

\begin{Definition}Define the Bessel kernel $\mathbb{J}_{\nu}$ by
\begin{gather*}
\mathbb{J}_{\nu}(x,y)=\frac{\sqrt{x}J_{\nu+1}(\sqrt{x})J_{\nu}(\sqrt{y})-\sqrt{y}J_{\nu+1}(\sqrt{y})J_{\nu}(\sqrt{x})}{2(x-y)},
\end{gather*}
where $J_{\nu}$ is the Bessel function of order $\nu$. This kernel gives rise to the Bessel determinantal point process $\mathbb{P}_{\mathbb{J}_{\nu}}$ on $(0,\infty)$ of Tracy and Widom \cite{TracyWidom}.
\end{Definition}

We are ready to give the explicit description of the ergodic decomposition of the inverse Wishart measures $\mathsf{M}^{(\nu)}$.

\begin{Theorem}\label{MainTheorem}
Let $\nu> -1$. The spectral measure $\mathfrak{m}^{(\nu)}$ associated to $\mathsf{M}^{(\nu)}$ is concentrated on~$\Omega_0^+$
\begin{gather*}
\mathfrak{m}^{(\nu)}\big(\Omega_0^+\big)=1.
\end{gather*}
Moreover, the law of the parameters $\alpha^+=\big(\alpha_1^+\ge \alpha_2^+ \ge \alpha_3^+ \ge \cdots \ge 0 \big)$ under $\mathfrak{m}^{(\nu)}$ viewed as a~point configuration on $(0,\infty)$ is given by the determinantal point process $\mathbb{P}_{K^{\nu}_{\infty}}$ with correlation kernel $K_{\infty}^{\nu}(x,y)$ defined by
\begin{gather*}
K_{\infty}^{\nu}(x,y)=\frac{8}{xy} \mathbb{J}_{\nu}\left(\frac{8}{x},\frac{8}{y}\right).
\end{gather*}
\end{Theorem}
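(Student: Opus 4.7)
The plan is to use the standard reduction, going back to Olshanski--Vershik and developed systematically in \cite{BorodinOlshanskiErgodic,BufetovI,BufetovII,BufetovIII,Qiu}, that identifies $\mathfrak{m}^{(\nu)}$ as the weak limit of the joint law on $\Omega$ of the parameters $\big(\alpha^{+,N},\alpha^{-,N},\gamma_1^N,\delta^N\big)$ extracted from a matrix $X\sim\mathsf{M}^{(\nu),N}$: namely $\alpha_i^{\pm,N}=|\lambda_i^{\pm}(X)|/N$ with $\lambda_i^{\pm}(X)$ the positive/negative eigenvalues, $\gamma_1^N=\operatorname{Tr}(X)/N$ and $\delta^N=\operatorname{Tr}\big(X^2\big)/N^2$. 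Because $\mathsf{M}^{(\nu),N}$ is supported on $H_+(N)$ the sequence $\alpha^{-,N}$ vanishes identically, so the problem reduces to two subproblems: (i) computing the joint limit of the ordered positive rescaled eigenvalues $\lambda_i(X)/N$, and (ii) upgrading this distributional convergence to convergence of the normalized first and second power sums $\operatorname{Tr}(X)/N$ and $\operatorname{Tr}\big(X^2\big)/N^2$ so as to identify $\gamma_1=\sum_i\alpha_i^+$ and $\gamma_2=0$.

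For step (i), unitary invariance together with the Weyl integration formula reduces $\mathsf{M}^{(\nu),N}$ to the eigenvalue density on $(0,\infty)^N$ proportional to $\prod_i\lambda_i^{-\nu-2N}{\rm e}^{-2/\lambda_i}\prod_{i<j}(\lambda_i-\lambda_j)^2$. The substitution $y_i=2/\lambda_i$ turns this into the standard $\beta=2$ Laguerre ensemble with weight $y^\nu{\rm e}^{-y}$, so the spectrum of $X$ is a determinantal point process whose correlation kernel can either be written out directly via the Bessel orthogonal polynomials~\cite{BesselPolynomials} that underlie the proof of Proposition~\ref{ConsistencyIntro}, or inferred from the classical Laguerre kernel by the change of variables. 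Under the hard-edge scaling $4Ny_i\to u_i$ with $\{u_i\}$ distributed as the Bessel process $\mathbb{P}_{\mathbb{J}_\nu}$~\cite{TracyWidom}, one gets $\lambda_i/N=2/(Ny_i)\to 8/u_i$. A short Jacobian computation shows that the image of $\mathbb{P}_{\mathbb{J}_\nu}$ under $u\mapsto 8/u$ is precisely the determinantal process with kernel $K_\infty^\nu(x,y)=(8/xy)\,\mathbb{J}_\nu(8/x,8/y)$ of the statement, which therefore identifies the joint law of the $\alpha_i^+$.

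For step (ii), I would show that $\operatorname{Tr}(X)/N$ and $\operatorname{Tr}\big(X^2\big)/N^2$ form uniformly integrable sequences under $\mathsf{M}^{(\nu),N}$ and that the contribution of the bottom eigenvalues of $X$ (equivalently, of the bulk/soft-edge Laguerre eigenvalues $y_i$) to these traces vanishes as $N\to\infty$. The expectations $\mathbb{E}\big[\operatorname{Tr}(X)\big]$ and $\mathbb{E}\big[\operatorname{Tr}\big(X^2\big)\big]$ are accessible either from classical inverse-Wishart moment identities or by integrating the finite-$N$ correlation kernel against $x$ and $x^2$; passing to the limit, the Bessel particles satisfy $u_i\sim(\pi i)^2$ (since the integrated Bessel density grows like $\sqrt{R}/\pi$), so $\sum_i1/u_i$ and $\sum_i1/u_i^2$ are almost surely absolutely convergent, and the analogous finite-$N$ tail estimates should inherit this summability. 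Combined with step (i) this forces $\gamma_1=\sum_i\alpha_i^+<\infty$ and $\delta=\sum_i(\alpha_i^+)^2$, i.e.\ $\gamma_2=0$, so that $\mathfrak{m}^{(\nu)}$ is concentrated on $\Omega_0^+$.

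The principal obstacle is step (ii), and especially ruling out the Gaussian component $\gamma_2$: in the parallel Hua--Pickrell setting this was for many years the main unresolved difficulty and was only settled in general by Qiu~\cite{Qiu}. Here the situation is substantially more tractable than in Hua--Pickrell, because positive-definiteness precludes any signed cancellation between $\alpha^+$ and $\alpha^-$, and because the limiting particles $\alpha_i^+=8/u_i$ are bounded above by rigidity of the Bessel process. All the delicate analysis is therefore concentrated near the bottom of the spectrum of $X$, where the Bessel-polynomial representation of the finite-$N$ determinantal kernel, the same tool that drives Proposition~\ref{ConsistencyIntro}, provides the precise control needed to close the moment estimates.
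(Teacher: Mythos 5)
Your overall architecture coincides with the paper's: reduce to eigenvalues by unitary invariance, map the inverse Wishart eigenvalue ensemble to the Laguerre ensemble via $y=2/\lambda$, obtain the $\alpha^+$ law from the hard-edge scaling limit to the Bessel process and the change-of-variables rule for determinantal kernels, kill $\alpha^-$ by positive-definiteness, and then control $\gamma_1$ and $\gamma_2$ by showing that no trace mass is carried by the small eigenvalues of $X$ (equivalently by the non-hard-edge Laguerre particles). Step (i) of your proposal is complete and matches Proposition~\ref{PropositionLimitCorrelationAlphaPlus}.

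The genuine gap is in step (ii): the entire argument hinges on a \emph{uniform-in-$N$} estimate on the first correlation function near $x=0$, namely that for every $\epsilon>0$ there is $\delta>0$ with $\int_0^{\delta}xK_N^{\nu}(x,x)\,{\rm d}x<\epsilon$ for \emph{all} $N$ (equivalently $\int_{R/N}^{\infty}N^{-1}y^{-1}\mathsf{L}_N^{\nu}(y,y)\,{\rm d}y<\epsilon$ for $R$ large, uniformly in $N$), and you assert rather than prove it. You correctly identify this as "the principal obstacle", but saying that the finite-$N$ kernel "provides the precise control needed" is not an argument: the pointwise hard-edge convergence of the kernel gives no information about the uniformity of the tail, and this is exactly where the Hua--Pickrell analogue of the $\gamma_1$ problem remained open for years. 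The paper devotes all of Section~\ref{section5} to this bound, using Szeg\H{o}'s uniform estimates for the Laguerre wavefunctions $\mathcal{W}_n^{\nu}$ in the two regimes $0\le y\le n^{-1}$ and $n^{-1}\le y\le\mathsf{w}$, and a three-way splitting of the sum over $n$ and the integral over $y$ with a careful, $\nu$-dependent choice of cutoffs; note it works with the Laguerre representation of the kernel, not the Bessel-polynomial one you invoke. A secondary, smaller gap: even granted this estimate, passing from "uniform integrability plus no mass near zero" to the almost-sure identities $\gamma_1(X)=\sum_i\alpha_i^+(X)$ and $\gamma_2(X)=0$ requires an argument — the paper first proves $\gamma_1<\infty$ a.e.\ via Fatou and Lemma~\ref{LemmaUniformBoundedness}, and then runs the positivity/Fatou sandwich of Proposition~\ref{AbstractGamma1} (showing the nonnegative defect $\Delta=\gamma_1-\sum_i\alpha_i^+$ has zero expectation), together with the Borodin--Olshanski criterion for $\gamma_2$; your proposal gestures at the conclusion without supplying this mechanism.
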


Theorem \ref{MainTheorem} above will be proven by an approximation procedure, the so-called `ergodic method' of Olshanski and Vershik \cite{OlshanskiVershik}, that is explained in Section~\ref{section3}. For this asymptotic analysis we will exploit the relation with the Laguerre ensemble.

\subsection{Organisation of the paper}\label{section1.5}

In Section~\ref{section2}, we prove consistency for the inverse Wishart measures. This is proved at the level of eigenvalue measures first and then lifted to matrices. The key ingredient is the backward shift equation for Bessel polynomials. In Section~\ref{section3}, we recall in detail the approximation procedure of Vershik, Borodin and Olshanski. In Section~\ref{section4}, we study the $\alpha$ parameters. We obtain the parameters~$\alpha^+$ as limits of the Bessel orthogonal polynomial ensemble, which is a determinantal point process. In Section~\ref{section5}, we obtain certain estimates for the kernel of the Bessel orthogonal polynomial ensemble. These are essential for the study of $\gamma_1$ and $\gamma_2$. In Section~\ref{section6}, we prove that $\gamma_2 \equiv 0$. In Section~\ref{section7}, we treat the $\gamma_1$ parameter. A key role is played by positivity along with the estimates from Section~\ref{section5}. Finally, in Section~\ref{section8} we simply put everything together to conclude the proof of Theorem~\ref{MainTheorem}.

\section[Consistency of $\mathsf{M}^{(\nu),N}$]{Consistency of $\boldsymbol{\mathsf{M}^{(\nu),N}}$}\label{section2}

Let $\nu>-1$. Recall that, we are interested in the normalized (probability) measure
\begin{gather*}
\mathsf{M}^{(\nu),N}({\rm d}X)= \operatorname{const}_{\nu, N} \det(X)^{-\nu-2N} {\rm e}^{-2\operatorname{Tr}X^{-1}}\mathbf{1}_{\{X \in H_+(N)\}}{\rm d}X,
\end{gather*}
where the normalizing constant $\operatorname{const}_{\nu, N}$ will be given explicitly below.

By Weyl's integration formula we get that the induced probability measure on eigenvalues in the Weyl chamber $W^N$ (we write $W^N_+$ if all coordinates are positive)
\begin{gather*}
W^N=\big\{x=(x_1,\dots,x_N) \in \mathbb{R}^N\colon x_1 \ge x_2 \ge \cdots \ge x_N \big\}
\end{gather*}
is given by
\begin{gather*}
\mu_N^{\nu}({\rm d}x)=\overline{\operatorname{const}}_{\nu, N}\prod_{i=1}^{N}w_N^{\nu}(x_i)\Delta_N(x)^2\mathbf{1}_{ \{x \in W_+^N \}}{\rm d}x_1\cdots {\rm d}x_N
\end{gather*}
with the Bessel weight $w_N^{\nu}(\cdot)$ on $(0,\infty)$
\begin{gather*}
w_N^{\nu}(x)=x^{-\nu-2N}{\rm e}^{-\frac{2}{x}}
\end{gather*}
and we will write
\begin{gather*}
\Delta_N(x)=\prod_{1 \le i<j \le N}^{}(x_i-x_j)
\end{gather*}
for the Vandermonde determinant.

Moreover, the constants $\operatorname{const}_{\nu, N}$ and $\overline{\operatorname{const}}_{\nu, N}$ are related by
\begin{gather*}
\operatorname{const}_{\nu, N}=\overline{\operatorname{const}}_{\nu, N}\frac{1}{(2\pi)^\frac{N(N-1)}{2}}\prod_{k=1}^{N}k!.
\end{gather*}

It will be convenient to introduce the following notation: we define the map $\mathsf{eval}_N\colon H(N) \to W^N$ by $\mathsf{eval}_N(H)=x=(x_1\ge x_2 \ge \cdots \ge x_N )$ the vector of eigenvalues of $H$ in weakly decreasing order. In this notation we have
\begin{gather*}
 (\mathsf{eval}_N )_*\mathsf{M}^{(\nu),N}=\mu_N^{\nu}.
\end{gather*}

{\bf Bessel polynomials.} The orthogonal polynomials with respect to $w_N^{\nu}(x)$, that are called the Bessel polynomials \cite{BesselPolynomials}, will be important for us. The reference for all the facts stated below is \cite[Chapter~9.13, pp.~244--247]{HypergeometricOrthogonalPolynomials}.

 For $\nu>-1$, there exist $p_0(\cdot;\nu,N),\dots,p_{N-1}(\cdot;\nu,N)$ \textit{monic} orthogonal polynomials of degree $0,\dots, N-1$ with respect to $w_N^{\nu}$. These can be expressed in terms of hypergeometric functions but we shall not need any explicit expression here. Their norms are given by, for $n=0, \dots, N-1$,
\begin{align*}
\|p_n(\cdot;\nu,N)\|_2^2&= \int_{0}^{\infty}w_N^{\nu}(x)p_n^2(x;\nu,N){\rm d}x\\
&=-\frac{2^{2n-\nu-2N+1}}{(n-\nu-2N+1)_n^2(2n-\nu-2N+1)}\Gamma(-n+\nu+2N)n!.
\end{align*}
Here, $(a)_k=\prod\limits_{i=1}^{k}(a+i-1), (a)_0=1$ is the Pochhammer symbol.

Also, we have the following key relation, called the backward shift equation, that relates orthogonal polynomials with respect to $w_N^{\nu}$ to orthogonal polynomials with respect to~$w_{N+1}^\nu$
\begin{align*}
\frac{{\rm d}}{{\rm d}x}\big[w_N^{\nu}(x)p_n(x;\nu,N)\big]&=\frac{(n+1-\nu-2N+1)_{n+1}}{(n-\nu-2N+1)_n}w_{N+1}^{\nu}(x)p_{n+1}(x;\nu, N+1)\nonumber\\
&=c(n,\nu,N)w_{N+1}^{\nu}(x)p_{n+1}(x;\nu, N+1).
\end{align*}

Furthermore, by writing the Vandermonde determinant in terms of the monic orthogonal polynomials $p_n(\cdot; \nu, N)$ a standard calculation gives that{\samepage
\begin{gather*}
\overline{\operatorname{const}}_{\nu, N}=\prod_{n=1}^{N}\frac{1}{\|p_{n-1}(\cdot;\nu,N)\|_2^2}=\prod_{n=1}^{N}-\frac{(n-\nu-2N+1)_n^2(2n-\nu-2N+1)}{2^{2n-\nu-2N+1}\Gamma(-n+\nu+2N)n!}
\end{gather*}
and thus we also obtain an explicit expression for $\operatorname{const}_{\nu, N}$.}

Now, we introduce the following Markov kernel $\Lambda_N^{N+1}$ from $W^{N+1}$ to $W^N$, defined for \mbox{$x \in \mathring{W}^{N+1}$} (the interior of $W^{N+1}$) as
\begin{gather*}
\Lambda_N^{N+1}(x,{\rm d}y)=\frac{N!\Delta_N(y)}{\Delta_{N+1}(x)}\textbf{1}(y\prec x){\rm d}y,
\end{gather*}
where $y\prec x$ denotes interlacing
\begin{gather*}
x_1 \ge y_1 \ge x_2 \ge \cdots \ge y_N \ge x_{N+1}.
\end{gather*}
This Markov kernel has a random matrix interpretation, due to Baryshnikov~\cite{Baryshnikov} (the required computation is already implicit in the classical book of Gelfand and Naimark \cite{GelfandNaimark}), which easily extends to any $x\in W^{N+1}$ (even when the coordinates coincide)
\begin{gather*}
\Lambda_N^{N+1}(x,\cdot)=\mathsf{Law} \big[\mathsf{eval}_N \big(\pi_N^{N+1}\big(U^*\mathsf{diag}(x_1,\dots, x_{N+1})U\big)\big)\big],
\end{gather*}
where $U$ is a random (Haar distributed) unitary matrix from $\mathbb{U}(N+1)$.

We will first prove consistency at the level of eigenvalues:

\begin{Proposition}\label{eigenvalueconsistency}
For $\nu>-1$
\begin{gather*}
\mu_{N+1}^{\nu} \Lambda_N^{N+1}=\mu_N^{\nu}, \qquad \forall\, N\ge 1.
\end{gather*}
\end{Proposition}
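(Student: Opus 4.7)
The plan is to work with the explicit eigenvalue densities and reduce $\mu_{N+1}^{\nu}\Lambda_N^{N+1}=\mu_N^{\nu}$ to a concrete integral identity. Substituting the densities of $\mu_{N+1}^{\nu}$, $\mu_N^{\nu}$ and $\Lambda_N^{N+1}$, the Vandermonde factor $\Delta_{N+1}(x)$ cancels between the density of $\mu_{N+1}^{\nu}$ and the denominator of $\Lambda_N^{N+1}$, and matching the overall constant becomes automatic at the end since both sides are probability measures on $W_+^N$. The claim therefore reduces to verifying
\begin{gather*}
\int_{\{x\in W_+^{N+1}:\, y\prec x\}}\prod_{i=1}^{N+1}w_{N+1}^{\nu}(x_i)\,\Delta_{N+1}(x)\,{\rm d}x_1\cdots {\rm d}x_{N+1}=K_{N,\nu}\prod_{i=1}^{N}w_N^{\nu}(y_i)\,\Delta_N(y)
\end{gather*}
for every $y\in W_+^N$ and some explicit constant $K_{N,\nu}$.

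To evaluate the left hand side I would write the Vandermonde as $\Delta_{N+1}(x)=\det[p_{j-1}(x_i;\nu,N+1)]_{i,j=1}^{N+1}$ using the monic Bessel polynomials, and absorb the row weights $w_{N+1}^{\nu}(x_i)$ into the determinant. With the conventions $y_0:=+\infty$ and $y_{N+1}:=0$, the interlacing region $y\prec x$ factorises as the product of disjoint intervals $x_i\in[y_i,y_{i-1}]$, so by multilinearity of the determinant and Fubini the integral of the determinant equals the determinant of the entrywise integrals,
\begin{gather*}
\det\left[\int_{y_i}^{y_{i-1}}w_{N+1}^{\nu}(x)\,p_{j-1}(x;\nu,N+1)\,{\rm d}x\right]_{i,j=1}^{N+1}.
\end{gather*}

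For $j\ge 2$ the backward shift equation recalled in the excerpt gives
\begin{gather*}
w_{N+1}^{\nu}(x)\,p_{j-1}(x;\nu,N+1)=\frac{1}{c(j-2,\nu,N)}\frac{{\rm d}}{{\rm d}x}\bigl[w_N^{\nu}(x)\,p_{j-2}(x;\nu,N)\bigr],
\end{gather*}
so the $(i,j)$ entry evaluates to $c(j-2,\nu,N)^{-1}[G_{j-2}(y_{i-1})-G_{j-2}(y_i)]$, where $G_n(y):=w_N^{\nu}(y)\,p_n(y;\nu,N)$. The first column $j=1$ has $p_0=1$ and reads simply $W(y_{i-1})-W(y_i)$ for an antiderivative $W$ of $w_{N+1}^{\nu}$. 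The main obstacle is this asymmetry between the first column and the remaining ones, and I would resolve it by a telescoping row operation: replace row $1$ by the sum of all $N+1$ rows. For $j\ge 2$ this collapses to $G_{j-2}(+\infty)-G_{j-2}(0)=0$, since $w_N^{\nu}(y)=y^{-\nu-2N}{\rm e}^{-2/y}$ times any polynomial of degree at most $N-1$ decays to zero at both endpoints provided $\nu>-1$; for $j=1$ it gives $\int_0^{\infty}w_{N+1}^{\nu}(x)\,{\rm d}x$, which is finite under the same restriction on $\nu$. Cofactor expansion along the new first row extracts that constant, and the remaining $N\times N$ determinant, whose entries are still of difference form with bottom row equal to $c(j-1,\nu,N)^{-1}G_{j-1}(y_N)$ (using $G_n(y_{N+1})=0$), can be reduced by a unit upper triangular row operation to $\det[G_{j-1}(y_i)]_{i,j=1}^{N}=\prod_{i}w_N^{\nu}(y_i)\cdot\Delta_N(y)$. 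Collecting the scalar factors yields the integral identity above, and the constant $K_{N,\nu}$ is then forced to match by the probabilistic normalization of both sides.
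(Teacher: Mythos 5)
Your proposal is correct and follows essentially the same route as the paper: reduce to an identity up to a constant, expand $\Delta_{N+1}$ in the monic Bessel polynomials, use multilinearity over the interlacing region to get a determinant of entrywise integrals, and apply the backward shift equation so the entries telescope. The only (cosmetic) difference is in how the telescoping is resolved -- you sum all rows into the first row and then expand and telescope the minor from the bottom, whereas the paper successively adds column $j$ to column $j+1$; both are determinant-preserving unit triangular operations and yield the same conclusion.
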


\begin{proof}Since both sides are probability measures, it suffices to prove that they are equal up to multiplicative constant (we will use the notation $\propto$ for this). Since $\mu_{N+1}^{\nu}$ is supported on $\mathring{W}^{N+1}$ we can write (note that since $y\prec x$ we have $\big\{x \in W^{N+1}_+\big\} \Rightarrow \big\{y \in W^{N}_+\big\}$)
\begin{gather*}
\big[\mu_{N+1}^{\nu}\Lambda_N^{N+1}\big]({\rm d}y)=N! \Delta_N(y)\mathbf{1}_{\{y \in W_+^N \}}{\rm d}y \int_{y\prec x \in W^{N+1}_+}^{}\prod_{i=1}^{N+1}w_{N+1}^{\nu}(x_i) \Delta_{N+1}(x){\rm d}x.
\end{gather*}
Note that we can write
\begin{gather*}
\Delta_{N+1}(x)=\det [p_{i-1} (x_{N+2-j};\nu,N+1 ) ]^{N+1}_{i,j=1}.
\end{gather*}
We thus need to show
\begin{gather}
\int_{y\prec x \in W^{N+1}_+}^{}\prod_{i=1}^{N+1}w_{N+1}^{\nu}(x_i) \det [p_{i-1} (x_{N+2-j};\nu,N+1 ) ]^{N+1}_{i,j=1}{\rm d}x\nonumber\\
\qquad {}\propto \prod_{i=1}^{N}w_N^{\nu}(y_i)\det [p_{i-1} (y_{N+1-j};\nu,N ) ]^{N}_{i,j=1}
 =\prod_{i=1}^{N}w_N^{\nu}(y_i)\Delta_N(y) .\label{proportionalmeasures}
\end{gather}
By multilinearity of the determinant we obtain that the l.h.s.\ of (\ref{proportionalmeasures}) is
\begin{gather*}
\det\left[\int_{y_{N+2-j}}^{y_{N+1-j}}w_{N+1}^{\nu}(z)p_{i-1} (z;\nu,N+1 ){\rm d}z\right]^{N+1}_{i,j=1},
\end{gather*}
where $y_0=\infty, y_{N+1}=0$.

Now, by the backward shift equation we can perform the integral inside the determinant and obtain for $i\ge 2$:
\begin{gather*}
\int_{y_{N+2-j}}^{y_{N+1-j}}w_{N+1}^{\nu}(z)p_{i-1} (z;\nu,N+1 ){\rm d}z=\frac{1}{c(i-2,\nu,N)}\big[w_{N}^{\nu}(z)p_{i-2}(z;\nu, N)\big]^{y_{N+1-j}}_{y_{N+2-j}}.
\end{gather*}
We note that when evaluated at $y_0=\infty$ and $y_{N+1}=0$ the terms above vanish. Hence, we get that the l.h.s.\ of (\ref{proportionalmeasures}) is proportional to (note that the entry with index $(2,2)$ is the difference $w_N^{\nu}(y_{N-1})p_0(y_{N-1};\nu, N)-w_N^{\nu}(y_N)p_0(y_N;\nu, N)$ and not just $w_N^{\nu}(y_{N-1})p_0(y_{N-1};\nu, N)$, similarly for the entry with index~$(N+1,2)$)
\begin{gather*}
\det\begin{bmatrix}
\displaystyle \int_{0}^{y_N}\!w_{N+1}^{\nu}(z){\rm d}z &\displaystyle \int_{y_N}^{y_{N-1}}\!w_{N+1}^{\nu}(z){\rm d}z &\cdots &\displaystyle \int_{y_1}^{\infty}\!w_{N+1}^{\nu}(z){\rm d}z\vspace{1mm}\\
 w_N^{\nu}(y_N)p_0(y_N;\nu, N) & \begin{matrix} w_N^{\nu}(y_{N-1})p_0(y_{N-1};\nu, N)\\ -w_N^{\nu}(y_N)p_0(y_N;\nu, N)\end{matrix} &\cdots & -w_N^{\nu}(y_1)p_0(y_1;\nu, N)\\
\vdots &\vdots &\vdots &\vdots\\
 w_N^{\nu}(y_N)p_{N-1}(y_N;\nu, N)\!\! & \begin{matrix} w_N^{\nu}(y_{N-1})p_{N-1}(y_{N-1};\nu, N)\\ -w_N^{\nu}(y_N)p_{N-1}(y_N;\nu, N)\end{matrix}\!\! &\cdots & -w_N^{\nu}(y_1)p_{N-1}(y_1;\nu, N)
\end{bmatrix}.
\end{gather*}
Successively adding column $j$ to column $j+1$ we obtain that this is equal to
\begin{gather*}
\det\begin{bmatrix}
\displaystyle \int_{0}^{y_N}w_{N+1}^{\nu}(z){\rm d}z & \cdots\ &\cdots \ &\displaystyle \int_{0}^{\infty}w_{N+1}^{\nu}(z){\rm d}z\vspace{1mm}\\
 w_N^{\nu}(y_N)p_0(y_N;\nu, N) &\cdots & w_N^{\nu}(y_1)p_{0}(y_1;\nu, N) & 0\\
\vdots &\vdots &\vdots &\vdots\\
 w_N^{\nu}(y_N)p_{N-1}(y_N;\nu, N) & \cdots & w_N^{\nu}(y_1)p_{N-1}(y_1;\nu, N) & 0
\end{bmatrix}
\end{gather*}
from which (\ref{proportionalmeasures}) immediately follows.
\end{proof}

\begin{Remark}
As pointed out to me by Grigori Olshanski a similar idea of using the backwards shift equation to prove consistency appears in the study of the q-zw measures on the quantized Gelfand--Tsetlin graph \cite{GorinOlshanski}. The corresponding orthogonal polynomials are the pseudo big $q$-Jacobi, see \cite{GorinOlshanski} and the references therein.
\end{Remark}

We will now, making use of Baryshnikov's result \cite{Baryshnikov}, prove consistency for the matrix measures.

\begin{Proposition}\label{consistencymatrices}For $\nu>-1$, the inverse Wishart measures form a projective family
\begin{gather*}
\big(\pi_{N}^{N+1}\big)_*\mathsf{M}^{(\nu),N+1}=\mathsf{M}^{(\nu),N}, \qquad \forall\, N\ge 1.
\end{gather*}
Thus, by Kolmogorov's theorem there exists a unique unitarily invariant measure $\mathsf{M}^{(\nu)}$ on $H$ that is supported on~$H_{+}$.
\end{Proposition}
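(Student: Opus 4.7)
The plan is to reduce the matrix-level consistency to the eigenvalue-level consistency (Proposition~\ref{eigenvalueconsistency}) already established, using Baryshnikov's random matrix interpretation of the kernel $\Lambda_N^{N+1}$ together with the unitary invariance of $\mathsf{M}^{(\nu),N+1}$.

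First, I would observe that both $\mathsf{M}^{(\nu),N}$ and the pushforward $(\pi_N^{N+1})_*\mathsf{M}^{(\nu),N+1}$ are $\mathbb{U}(N)$-invariant: the former by construction, and the latter because conjugation by $u\in\mathbb{U}(N)$ on $H(N)$ is the restriction under $\pi_N^{N+1}$ of conjugation by $\mathrm{diag}(u,1)\in\mathbb{U}(N+1)$ on $H(N+1)$, and $\mathsf{M}^{(\nu),N+1}$ is invariant under the latter. A $\mathbb{U}(N)$-invariant Borel probability measure on $H(N)$ is determined uniquely by its eigenvalue distribution on $W^N$ (Weyl integration plus disintegration over the Haar measure on $\mathbb{U}(N)/\mathbb{T}^N$). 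Hence it suffices to check that the two measures have the same eigenvalue distribution on $W^N$.

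Next, by the definition of $\mathsf{M}^{(\nu),N+1}$ and Weyl's integration formula, the disintegration of $\mathsf{M}^{(\nu),N+1}$ over $\mathsf{eval}_{N+1}$ expresses a sample as $U^*\mathrm{diag}(x_1,\dots,x_{N+1})U$, where $x$ is drawn from $\mu_{N+1}^\nu$ and $U$ is an independent Haar-distributed element of $\mathbb{U}(N+1)$. Applying $\pi_N^{N+1}$ and then taking eigenvalues, Baryshnikov's identity
\begin{gather*}
\Lambda_N^{N+1}(x,\cdot)=\mathsf{Law}\big[\mathsf{eval}_N\big(\pi_N^{N+1}\big(U^*\mathrm{diag}(x_1,\dots,x_{N+1})U\big)\big)\big]
\end{gather*}
recorded in the excerpt yields
\begin{gather*}
(\mathsf{eval}_N)_*(\pi_N^{N+1})_*\mathsf{M}^{(\nu),N+1}=\mu_{N+1}^\nu\Lambda_N^{N+1}.
\end{gather*}
By Proposition~\ref{eigenvalueconsistency} the right-hand side equals $\mu_N^\nu=(\mathsf{eval}_N)_*\mathsf{M}^{(\nu),N}$, so by the uniqueness observation above we conclude $(\pi_N^{N+1})_*\mathsf{M}^{(\nu),N+1}=\mathsf{M}^{(\nu),N}$.

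Finally, having verified the projective consistency for all $N\ge 1$, Kolmogorov's extension theorem on the projective limit $H=\varprojlim H(N)$ produces a unique Borel probability measure $\mathsf{M}^{(\nu)}$ on $H$ whose pushforward to each $H(N)$ is $\mathsf{M}^{(\nu),N}$; unitary invariance under $\mathbb{U}(\infty)$ follows from the compatible invariance of each finite-level measure, and support on $H_+=\varprojlim H_+(N)$ is automatic since each $\mathsf{M}^{(\nu),N}$ is supported on $H_+(N)$ and the corners maps preserve positive-definiteness by Cauchy interlacing. The main content of the argument is really Proposition~\ref{eigenvalueconsistency} together with Baryshnikov's identification; the lifting from eigenvalues to matrices, which is the substance of the present proposition, is then immediate from unitary invariance.
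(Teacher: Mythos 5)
Your proposal is correct and follows essentially the same route as the paper: reduce to equality of eigenvalue distributions via the $\mathbb{U}(N)$-invariance of the pushforward, use Baryshnikov's identity (applied to the orbital decomposition of $\mathsf{M}^{(\nu),N+1}$ as $U^*\mathrm{diag}(x)U$ with $x\sim\mu_{N+1}^{\nu}$ and $U$ Haar) to identify $(\mathsf{eval}_N)_*(\pi_N^{N+1})_*\mathsf{M}^{(\nu),N+1}$ with $\mu_{N+1}^{\nu}\Lambda_N^{N+1}$, and conclude by Proposition~\ref{eigenvalueconsistency}. The paper phrases the averaging step via the orbital measures $\mathsf{n}_{\lambda}$ and linearity, but this is the same disintegration you use.
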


\begin{proof}First, observe that if a measure $\mathfrak{N}$ is unitarily invariant on $H(N+1)$ then $\big(\pi_{N}^{N+1}\big)_*\mathfrak{N}$ is unitarily invariant on $H(N)$. Thus, it suffices to show that
\begin{gather*}
 (\mathsf{eval}_N )_*\big(\pi_N^{N+1}\big)_*\mathsf{M}^{(\nu),N+1}= (\mathsf{eval}_N )_*\mathsf{M}^{(\nu),N}.
\end{gather*}
On the other hand from Proposition~\ref{eigenvalueconsistency} we have
\begin{gather*}
\left(\mathsf{eval}_{N+1}\right)_*\mathsf{M}^{(\nu),N+1}\circ \Lambda_N^{N+1}=\left(\mathsf{eval}_N\right)_*\mathsf{M}^{(\nu),N}.
\end{gather*}
Hence, we need to show
\begin{gather*}
 (\mathsf{eval}_N )_*\big(\pi_N^{N+1}\big)_*\mathsf{M}^{(\nu),N+1}= (\mathsf{eval}_{N+1} )_*\mathsf{M}^{(\nu),N+1}\circ \Lambda_N^{N+1}.
\end{gather*}
Now, let $\lambda=(\lambda_1\ge \lambda_2 \ge \cdots \ge \lambda_{N+1})$ be fixed and consider the orbital measure on $H(N+1)${\samepage
\begin{gather*}
\mathsf{n}_{\lambda}=\mathsf{Law} [U^*\mathsf{diag}(\lambda_1,\dots, \lambda_{N+1})U ],
\end{gather*}
where $U$ is Haar distributed in $\mathbb{U}(N+1)$.}

Then, since $ (\mathsf{eval}_{N+1} )_*\mathsf{n}_{\lambda}=\delta_{\lambda}$, Baryshnikov's result~\cite{Baryshnikov} can be written as
\begin{gather*}
\big(\mathsf{eval}_N\circ\pi_N^{N+1}\big)_*\mathsf{n}_{\lambda}= (\mathsf{eval}_{N+1} )_*\mathsf{n}_{\lambda}\circ \Lambda_N^{N+1}.
\end{gather*}
By Weyl's integration formula we have
\begin{gather*}
\mathsf{M}^{(\nu),N+1}({\rm d}X)=\int\mu_{N+1}^{\nu}({\rm d}\lambda)\mathsf{n}_{\lambda}({\rm d}X).
\end{gather*}
So we obtain
\begin{gather*}
\int \big(\mathsf{eval}_N\circ\pi_N^{N+1}\big)_*\mathsf{n}_{\lambda}(\cdot) \mu_{N+1}^{\nu}({\rm d}\lambda)= \int (\mathsf{eval}_{N+1} )_*\mathsf{n}_{\lambda}\circ \Lambda_N^{N+1}(\cdot) \mu_{N+1}^{\nu}({\rm d}\lambda).
\end{gather*}
By linearity of the operations involved
\begin{gather*}
\big(\mathsf{eval}_N\circ\pi_N^{N+1}\big)_*\int \mathsf{n}_{\lambda}(\cdot) \mu_{N+1}^{\nu}({\rm d}\lambda)= (\mathsf{eval}_{N+1} )_*\int \mu_{N+1}^{\nu}({\rm d}\lambda)\mathsf{n}_{\lambda}\circ \Lambda_N^{N+1}(\cdot) .
\end{gather*}
and so we finally arrive at
\begin{gather*}
\big(\mathsf{eval}\circ\pi_N^{N+1}\big)_*\mathsf{M}^{(\nu),N+1}= (\mathsf{eval}_{N+1} )_*\mathsf{M}^{(\nu),N+1}\circ \Lambda_N^{N+1}.\tag*{\qed}
\end{gather*}\renewcommand{\qed}{}
\end{proof}

\begin{Remark}The exact same scheme of proof, can be applied to the case of the Hua--Pickrell measures. One uses, instead the backward shift equation for the pseudo-Jacobi polynomials, see \cite[Section~9.9]{HypergeometricOrthogonalPolynomials}.
\end{Remark}

We finally reinterpret the proposition above to obtain the following matrix integral. This is an analogue of Hua's integrals \cite{Hua,Neretin} for the inverse Wishart measures. Let
\begin{gather*}
f_{N}^{\nu}(X)=\det(X)^{-\nu-2N} {\rm e}^{-2\operatorname{Tr}X^{-1}}\mathbf{1}_{\{X \in H_+(N)\}}.
\end{gather*}
For $X\in H(N+1)$ we write
\begin{gather*}
X=\begin{bmatrix}
\tilde{X} & \zeta \\
\zeta^* & s
\end{bmatrix},
\end{gather*}
where $\tilde{X}=\pi_N^{N+1}(X) \in H(N)$, $\zeta \in \mathbb{C}^{N}$, $s \in \mathbb{R}$. Note that, $\{X \in H_+(N+1)\}$ implies $\{\tilde{X} \in H_+(N)\}$.
\begin{Corollary}
For $\nu>-1$ and $\forall\, N \ge 1$
\begin{gather*}
\int_{(\zeta,s)\in \mathbb{C}^N\times \mathbb{R}}^{}f_{N+1}^{\nu}\left(\begin{bmatrix}
\tilde{X} & \zeta \\
\zeta^* & s
\end{bmatrix}\right)\prod_{i=1}^{N}{\rm d} \operatorname{Re} \zeta_i {\rm d}\operatorname{Im} \zeta_i {\rm d}s=\frac{\operatorname{const}_{\nu, N+1}}{\operatorname{const}_{\nu, N}}f_{N}^{\nu}\big(\tilde{X}\big).
\end{gather*}
\end{Corollary}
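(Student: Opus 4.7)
The plan is to derive this corollary as a direct density-level unpacking of Proposition~\ref{consistencymatrices}, which has just been established. I would begin from the projective identity $\bigl(\pi_N^{N+1}\bigr)_*\mathsf{M}^{(\nu),N+1}=\mathsf{M}^{(\nu),N}$, pair both sides with an arbitrary test function $\phi\in C_c(H(N))$, and substitute the explicit densities $\mathsf{M}^{(\nu),K}(dX)=\operatorname{const}_{\nu,K}f_K^\nu(X)\,dX$ for $K=N,N+1$.

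The key ingredient is the block decomposition of Lebesgue measure on $H(N+1)$. Writing $X=\left[\begin{smallmatrix}\tilde X&\zeta\\ \zeta^*&s\end{smallmatrix}\right]$, the Hermiticity constraint splits cleanly: the top-left $N\times N$ block is an arbitrary element of $H(N)$ (contributing $d\tilde X$), the new column $\zeta\in\mathbb{C}^N$ is unconstrained (contributing $\prod_i d\operatorname{Re}\zeta_i\,d\operatorname{Im}\zeta_i$), and the new diagonal entry $s\in\mathbb{R}$ is real. Directly from the definition of $dX$ recalled in Section~\ref{section1.3}, this gives the factorisation
\begin{gather*}
dX=d\tilde X\,\prod_{i=1}^{N}d\operatorname{Re}\zeta_i\,d\operatorname{Im}\zeta_i\,ds.
\end{gather*}
Moreover, the positivity constraint $X\in H_+(N+1)$ forces the principal submatrix $\tilde X$ to lie in $H_+(N)$, so for every $\tilde X\notin H_+(N)$ the integrand on the left-hand side vanishes identically on the fibre, matching the vanishing of $f_N^\nu(\tilde X)$ on the right.

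Applying Fubini to the pushforward identity then yields
\begin{gather*}
\operatorname{const}_{\nu,N+1}\!\int_{H(N)}\!\phi(\tilde X)\!\left(\int_{\mathbb{C}^N\times\mathbb{R}}\!f_{N+1}^\nu\!\left(\begin{bmatrix}\tilde X&\zeta\\ \zeta^*&s\end{bmatrix}\right)d\zeta\,ds\right)\!d\tilde X=\operatorname{const}_{\nu,N}\!\int_{H(N)}\!\phi(\tilde X)f_N^\nu(\tilde X)d\tilde X.
\end{gather*}
Since $\phi\in C_c(H(N))$ is arbitrary, the two densities in $\tilde X$ agree almost everywhere, and in fact pointwise on the open set $\{\tilde X\in H_+(N)\}$ by continuity of both sides (dominated convergence applies to the fibre integral, since $f_{N+1}^\nu$ decays fast in $(\zeta,s)$ uniformly on compacts in $\{\tilde X\in H_+(N)\}$). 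Rearranging gives the claimed identity with the stated ratio of normalising constants.

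There is really no serious obstacle: the statement is simply Proposition~\ref{consistencymatrices} read off at the level of unnormalised densities, and the only care required lies in the bookkeeping of the constants $\operatorname{const}_{\nu,N}$, $\operatorname{const}_{\nu,N+1}$ and in the straightforward verification that Lebesgue measure on $H(N+1)$ decomposes as a product in the block parametrisation $(\tilde X,\zeta,s)$.
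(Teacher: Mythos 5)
Your proof is correct and is essentially the argument the paper intends: the Corollary is presented as a direct reinterpretation of Proposition~\ref{consistencymatrices} at the level of unnormalised densities, using exactly the block factorisation of Lebesgue measure on $H(N+1)$ and the ratio of normalising constants. Your additional care in upgrading the a.e.\ identity to a pointwise one on $H_+(N)$ via continuity of the fibre integral is a reasonable filling-in of what the paper leaves implicit.
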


\section{Approximation of spectral measures}\label{section3}
In order to proceed we will need to get a handle on the abstract spectral measure $\mathfrak{m}^{(\nu)}$. The following approximation procedure of Olshanski--Vershik~\cite{OlshanskiVershik} and Borodin--Olshanski~\cite{BorodinOlshanskiErgodic}, based on the ergodic method of Vershik~\cite{VershikErgodic} allows us to do so.

For $\lambda^{(N)} \in W^N$ we define the numbers
\begin{gather*}
\alpha_{i,N}^+\big(\lambda^{(N)}\big) =\begin{cases}
\dfrac{\max\big\{\lambda^{(N)}_{i},0\big\}}{N},& i=1,\dots, N,\\
0 & i=N+1,N+2,\dots,
\end{cases} \\
\alpha_{i,N}^-\big(\lambda^{(N)}\big) =\begin{cases}
\dfrac{\max\big\{{-}\lambda^{(N)}_{N+1-i},0\big\}}{N}, & i=1,\dots, N,\\
0 & i=N+1,N+2,\dots.
\end{cases}
\end{gather*}

Equivalently, if $k$ and $l$ denote the number of strictly positive terms in $\big\{\alpha_{i,N}^+\big(\lambda^{(N)}\big) \big\}$ and $ \big\{\alpha_{i,N}^-\big(\lambda^{(N)}\big) \big\}$ respectively, we have
\begin{gather*}
\frac{\lambda^{(N)}}{N}=\big(\alpha_{1,N}^+\big(\lambda^{(N)}\big),\dots,\alpha_{k,N}^+\big(\lambda^{(N)}\big),0, \dots, 0, -\alpha_{l,N}^-\big(\lambda^{(N)}\big),\dots,-\alpha_{1,N}^-\big(\lambda^{(N)}\big)\big).
\end{gather*}

 Define the corner map $\pi_N^{\infty}\colon H \to H(N)$ by $\pi_N^{\infty}(X)=(X_{ij})^N_{j,j=1}$. Let $X\in H$ be given. Then, we define the numbers $\big\{\alpha_{i,N}^+(X) \big\}$, $\big\{\alpha_{i,N}^-(X) \big\}$ by
\begin{gather*}
\alpha_{i,N}^+(X)=\alpha_{i,N}^+\big(\mathsf{eval}_N\big(\pi_N^{\infty}(X)\big)\big), \qquad \forall\, i \ge 1,\\
\alpha_{i,N}^-(X)=\alpha_{i,N}^-\big(\mathsf{eval}_N\big(\pi_N^{\infty}(X)\big)\big), \qquad \forall\, i \ge 1.
\end{gather*}
We also set,
\begin{gather*}
c^{(N)}(X) =\frac{\operatorname{Tr}[\pi_N^{\infty}(X)]}{N}=\sum_{i}^{}\alpha_{i,N}^+(X) -\sum_{i}^{}\alpha_{i,N}^-(X),\\
d^{(N)}(X) =\frac{\operatorname{Tr}[\pi_N^{\infty}(X)^2]}{N^2}=\sum_{i}^{}\big(\alpha_{i,N}^+(X)\big)^2+\sum_{i}^{}\big(\alpha_{i,N}^-(X)\big)^2 .
\end{gather*}

Having all these notations in place we now arrive at the following important definition (see Theorem \ref{ErgodicMethodTheorem} below for the motivation behind it):

\begin{Definition}A matrix $X \in H$ is called regular and we will write $X \in H_{{\rm reg}}$ if the following limits exist
\begin{gather*}
\alpha_i^{\pm}(X) =\lim_{N \to \infty}\alpha_{i,N}^{\pm}(X), \qquad \forall\, i \ge 1,\\
\gamma_1(X) =\lim_{N \to \infty}c^{(N)}(X),\\
\delta(X) =\lim_{N \to \infty}d^{(N)}(X).
\end{gather*}
We can easily see that $\sum_{i}^{} (\alpha_{i}^+(X) )^2+ (\alpha_{i}^-(X) )^2 \le \delta(X)$ and we define
\begin{gather*}
\gamma_2(X)=\delta(X)-\sum_{i}^{}\big(\alpha_{i}^+(X)\big)^2-\sum_{i}^{}\big(\alpha_{i}^-(X)\big)^2\ge 0.
\end{gather*}
\end{Definition}

We also define $\mathfrak{r}_N\colon H \to \Omega$ by
\begin{gather*}
\mathfrak{r}_N(X)=\big(\big\{\alpha_{i,N}^{+}(X)\big\}, \big\{\alpha_{i,N}^{-}(X)\big\}, c^{(N)}(X), d^{(N)}(X)\big)
\end{gather*}
and similarly $\mathfrak{r}_{\infty}\colon H \to \Omega$, that is defined correctly on $H_{{\rm reg}}$ and thus almost everywhere on $H$ as we shall see in Theorem~\ref{ErgodicMethodTheorem} below
\begin{gather*}
\mathfrak{r}_\infty(X)=\big(\big\{\alpha_{i}^{+}(X)\big\}, \big\{\alpha_{i}^{-}(X)\big\}, \gamma_1(X), \delta(X)\big).
\end{gather*}

With all these definitions in place we can now state the following result from \cite[Section 5]{BorodinOlshanskiErgodic}.

\begin{Theorem}\label{ErgodicMethodTheorem}Let $\mathfrak{M}$ be any $\mathbb{U}(\infty)$-invariant probability measure on $H$. Then, $\mathfrak{M}$ is supported on $H_{{\rm reg}}$. Moreover, there exists a unique spectral measure $\mathbb{P}^{\mathfrak{M}}$ associated to~$\mathfrak{M}$ $($in the particular case $\mathfrak{M}=\mathsf{M}^{(\nu)}$ we write as we did in the introduction $\mathfrak{m}^{(\nu)}=\mathbb{P}^{\mathsf{M}^{(\nu)}})$ defined by
\begin{gather*}
\mathfrak{M}=\int_{\Omega}^{}M_{\omega} \mathbb{P}^{\mathfrak{M}}({\rm d}\omega),
\end{gather*}
where the equality is understood as follows: for all Borel functions $\mathfrak{F}$ on~$H$
\begin{gather*}
\int_{H}^{}\mathfrak{F}(X)\mathfrak{M}({\rm d}X)=\int_{\Omega}^{}\int_{H}^{}\mathfrak{F}(X)M_{\omega}({\rm d}X)\mathbb{P}^{\mathfrak{M}}({\rm d}\omega).
\end{gather*}
Furthermore, the spectral measure $\mathbb{P}^{\mathfrak{M}}$ is given explicitly by
\begin{gather*}
\mathbb{P}^{\mathfrak{M}}= (\mathfrak{r}_{\infty} )_* (\mathfrak{M}|_{H_{{\rm reg}}} ).
\end{gather*}
Finally, we have the following weak convergence of probability measures
\begin{gather*}
 (\mathfrak{r}_N )_* \mathfrak{M} \implies (\mathfrak{r}_{\infty} )_* (\mathfrak{M}|_{H_{{\rm reg}}} )=\mathbb{P}^{\mathfrak{M}}.
\end{gather*}
\end{Theorem}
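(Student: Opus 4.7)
The plan is to follow the abstract-then-concrete strategy of Vershik's ergodic method: first produce the spectral measure $\mathbb{P}^{\mathfrak{M}}$ by general ergodic decomposition together with the Pickrell--Olshanski--Vershik classification, then identify it with the pushforward $(\mathfrak{r}_\infty)_*\mathfrak{M}$ via an almost-sure convergence statement, and finally deduce the weak convergence from the a.s.\ convergence.

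\textbf{Step 1 (abstract decomposition).} I would invoke the general ergodic decomposition theorem (Varadarajan / Farrell--Varadarajan) for actions of Polish groups on standard Borel spaces to write $\mathfrak{M}$ as an integral of its ergodic components, each of which is a $\mathbb{U}(\infty)$-invariant ergodic probability measure on $H$. By the Pickrell--Olshanski--Vershik classification these are precisely the measures $\{M_\omega\}_{\omega\in\Omega}$, and $\omega\mapsto M_\omega$ is a Borel bijection: continuity in the weak topology is immediate from the explicit characteristic function $F_\omega$, and injectivity holds because distinct $\omega$'s give distinct $F_\omega$'s. Pushing the abstract decomposition through this parametrization produces a unique probability measure $\mathbb{P}^{\mathfrak{M}}$ on $\Omega$ with $\mathfrak{M}=\int_\Omega M_\omega\,\mathbb{P}^{\mathfrak{M}}(d\omega)$.

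\textbf{Step 2 (identification).} It suffices to show that for every $\omega$ the measure $M_\omega$ is concentrated on $H_{\mathrm{reg}}$ with $\mathfrak{r}_\infty=\omega$ almost surely; integrating against $\mathbb{P}^{\mathfrak{M}}$ then yields both the support claim for $\mathfrak{M}$ and $\mathbb{P}^{\mathfrak{M}}=(\mathfrak{r}_\infty)_*(\mathfrak{M}|_{H_{\mathrm{reg}}})$. Convergence of $c^{(N)}$ and $d^{(N)}$ reduces to an averaging argument: because $\mathbb{U}(\infty)$ contains the coordinate-permutation group, the diagonal entries of $X$ are exchangeable and the full array of entries is jointly exchangeable under $M_\omega$; de Finetti/Hewitt--Savage (or equivalently the $L^2$-ergodic theorem applied to $\operatorname{Tr}\pi_N^\infty/N$ and $\operatorname{Tr}(\pi_N^\infty)^2/N^2$) forces both quantities to converge a.s.\ to constants under the ergodic $M_\omega$. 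The constants are then identified with $\gamma_1(\omega)$ and $\delta(\omega)$ by differentiating the characteristic function $F_\omega$ at $0$. For the ordered eigenvalue parameters $\alpha_{i,N}^\pm$, I would exploit the fact that the $N$-dimensional marginal of $M_\omega$ admits an explicit integral representation over ``elementary'' ergodic measures parametrized by $\omega$, combine this with Cauchy interlacing to get monotonicity in $N$ of the truncated eigenvalues, and use a concentration estimate to conclude a.s.\ convergence to $\alpha_i^\pm(\omega)$.

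\textbf{Step 3 (weak convergence).} With Step 2 in hand, $\mathfrak{r}_N\to\mathfrak{r}_\infty$ holds $\mathfrak{M}$-almost surely on the full-measure set $H_{\mathrm{reg}}$. Since $\Omega$ carries the product topology, coordinatewise a.s.\ convergence is convergence in $\Omega$, and the continuous mapping theorem yields $(\mathfrak{r}_N)_*\mathfrak{M}\Rightarrow(\mathfrak{r}_\infty)_*(\mathfrak{M}|_{H_{\mathrm{reg}}})=\mathbb{P}^{\mathfrak{M}}$.

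The main obstacle is the a.s.\ convergence of the ordered top rescaled eigenvalues $\alpha_{i,N}^\pm$ in Step~2. Unlike $c^{(N)}$ and $d^{(N)}$, which are linear/quadratic functionals handled cleanly by exchangeability, the ordered eigenvalues are genuinely nonlinear spectral quantities of the growing corner $\pi_N^\infty(X)$. A clean route is to first verify the claim on the subfamily of elementary ergodic measures (finitely many nonzero $\alpha_i^\pm$ with $\gamma_2=0$), where the eigenvalue asymptotics reduce to a finite-rank perturbation of a Haar-unitary model analyzable by Weyl's inequalities, and then extend to general $\omega$ by approximation, using continuity of $\omega\mapsto M_\omega$ and the product topology on $\Omega$.
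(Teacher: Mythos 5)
First, a point of comparison: the paper does not prove this theorem at all. It is quoted verbatim from \cite[Section~5]{BorodinOlshanskiErgodic}, whose proof rests on Vershik's ergodic method and the Olshanski--Vershik approximation theorem for orbital measures. So your proposal must be measured against that argument. Your overall architecture is sound: abstract ergodic decomposition plus the classification gives existence and uniqueness of $\mathbb{P}^{\mathfrak{M}}$ (Step~1); if one can show each ergodic $M_\omega$ is concentrated on $H_{{\rm reg}}$ with $\mathfrak{r}_\infty=\omega$ a.s., then integration gives the support claim and the identification $\mathbb{P}^{\mathfrak{M}}=(\mathfrak{r}_\infty)_*(\mathfrak{M}|_{H_{{\rm reg}}})$, and the weak convergence in Step~3 follows from a.s.\ convergence by the continuous mapping theorem. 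The exchangeability treatment of $c^{(N)}$ and $d^{(N)}$ is also workable, modulo verifying square-integrability of the matrix entries under $M_\omega$ (which does hold, since $\operatorname{Var}(X_{11})$ is controlled by $\delta<\infty$).

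The genuine gap is in the a.s.\ convergence of the ordered parameters $\alpha_{i,N}^\pm$, which you correctly identify as the main obstacle but do not close. Two specific problems. (i) Cauchy interlacing gives monotonicity of the $i$-th largest eigenvalue $\lambda_i^{(N)}$ of the corners in $N$, but $\alpha_{i,N}^+=\lambda_i^{(N)}/N$; the division by $N$ destroys monotonicity, so there is no monotone-convergence route to the existence of the limit. (ii) The proposed passage from ``elementary'' $\omega$ (finitely many nonzero $\alpha$'s, $\gamma_2=0$) to general $\omega$ ``by approximation, using continuity of $\omega\mapsto M_\omega$'' cannot work as stated: an almost-sure property of a measure is not preserved under weak convergence of measures, so knowing $M_{\omega_n}(H_{{\rm reg}})=1$ for each $n$ and $M_{\omega_n}\Rightarrow M_\omega$ tells you nothing about $M_\omega(H_{{\rm reg}})$. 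The actual proof replaces this step by two harder ingredients that have no elementary finite-rank substitute: a reverse-martingale lemma showing that for $\mathfrak{M}$-a.e.\ $X$ the orbital measures $\mathsf{n}_{\mathsf{eval}_N(\pi_N^\infty(X))}$ converge weakly to the ergodic component containing $X$; and the Olshanski--Vershik approximation theorem, an asymptotic analysis of the characteristic functions of orbital measures, which asserts that such a sequence converges to a probability measure if and only if $\mathfrak{r}_N(X)$ converges in $\Omega$, in which case the limit is $M_{\lim_N\mathfrak{r}_N(X)}$. It is this combination, applied directly to an arbitrary invariant $\mathfrak{M}$ rather than component by component, that yields regularity and the explicit formula for $\mathbb{P}^{\mathfrak{M}}$ simultaneously. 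Without it, your Step~2 does not go through.
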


We will need some more definitions and notations used in later sections. For $X$ (deterministic or random) in $H$ and $H_{{\rm reg}}$ respectively define the point configurations
\begin{gather*}
\mathsf{C}_N(X) =\big\{\alpha_{i,N}^{+}(X)\big\} \sqcup \big\{{-}\alpha_{i,N}^{-}(X)\big\}, \\
\mathsf{C}(X) =\big\{\alpha_{i}^{+}(X)\big\} \sqcup \big\{{-}\alpha_{i}^{-}(X)\big\},
\end{gather*}
omitting possible zeroes. We will write $\mathsf{C}^{(\nu)}_N(X), \mathsf{C}^{(\nu)}(X)$ if $\mathsf{Law}(X)=\mathsf{M}^{(\nu)}$.

For $\mathfrak{M}$ on $H$ that is $\mathbb{U}(\infty)$-invariant we will let $\mathcal{P}_N$ and $\mathcal{P}$ (we drop dependence on~$\mathfrak{M}$) denote the corresponding random point configurations, namely $\mathsf{C}_N(X)$ and $\mathsf{C}(X)$ if $\mathsf{Law}(X)=\mathfrak{M}$.

We need a final definition for a general measure $\mathbb{P}$ on $\mathsf{Conf}(\mathcal{X})$. We define its correlation functions $\rho_n$ for each $n\ge 1$ (if they exist) with respect to $\mu$, by replacing the r.h.s.\ of~(\ref{determinantaldefinition}) by
\begin{gather*}
\int_{\mathcal{X}^n}^{}\Phi(x_1,\dots,x_n) \rho_n(x_1,\dots,x_n){\rm d}\mu(x_1) \cdots {\rm d}\mu(x_n).
\end{gather*}
In particular, for a determinantal measure $\mathbb{P}_{\mathsf{K}}$ we have $\forall\, n \ge 1$
\begin{gather*}
\rho_n(x_1,\dots,x_n)=\det (\mathsf{K}(x_i,x_j) )_{i,j=1}^n.
\end{gather*}

\section[Limit of the correlation kernel and the $\alpha$ parameters]{Limit of the correlation kernel and the $\boldsymbol{\alpha}$ parameters}\label{section4}
\subsection[The $\alpha^-$ parameters]{The $\boldsymbol{\alpha^-}$ parameters}\label{section4.1}
The following proposition is obvious from the definitions in Section~\ref{section3} and Theorem~\ref{ErgodicMethodTheorem}, since the measure~$\mathsf{M}^{(\nu)}$ is supported on~$H_+$:
\begin{Proposition}\label{PropositionAlphaMinus}
Let $\nu>-1$. Then,
\begin{gather*}
\alpha_i^-(X)=0,\qquad \forall\, i\ge 1 \qquad \text{for} \quad \mathsf{M}^{(\nu)}-\operatorname{a.e.} X \in H_{{\rm reg}}.
\end{gather*}
\end{Proposition}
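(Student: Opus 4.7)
The plan is essentially to unwind the definitions. By Proposition~\ref{consistencymatrices} (and the Kolmogorov extension argument noted there), the measure $\mathsf{M}^{(\nu)}$ is supported on $H_+$, the projective limit of the positive-definite cones $H_+(N)$ under the corners maps. So I would begin by reducing the statement to a deterministic fact about matrices in $H_+$: it suffices to show that $\alpha_i^-(X) = 0$ for every $X \in H_+ \cap H_{\text{reg}}$ and every $i \ge 1$.

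Next, I would invoke the definition of $H_+$ together with Cauchy interlacing (already used in Section~\ref{section1.2} to see that $\pi_{N-1}^N$ maps $H_+(N)$ into $H_+(N-1)$). For any $X \in H_+$, every principal $N \times N$ corner $\pi_N^\infty(X)$ lies in $H_+(N)$, hence all $N$ of its eigenvalues are strictly positive. Applied to the definition of $\alpha_{i,N}^-$, this gives $\max\{-\lambda^{(N)}_{N+1-i}(X),0\} = 0$ for all $1 \le i \le N$, so in fact
\begin{gather*}
\alpha_{i,N}^-(X) = 0 \qquad \text{for all } N \ge i.
\end{gather*}

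Finally, I would pass to the limit: since the sequence $\{\alpha_{i,N}^-(X)\}_{N\ge i}$ is identically zero, the limit $\alpha_i^-(X) = \lim_{N \to \infty} \alpha_{i,N}^-(X)$ exists trivially and equals zero, for every $i \ge 1$ and every $X \in H_+$. Combined with the fact (from Theorem~\ref{ErgodicMethodTheorem}) that $\mathsf{M}^{(\nu)}(H_{\text{reg}}) = 1$ and the fact that $\mathsf{M}^{(\nu)}$ is supported on $H_+$, this yields the claim. There is no real obstacle here — the only point to be careful about is the distinction between the matrix lying in $H_+$ (positive-definite corners) versus merely having positive eigenvalues of the full infinite matrix in some weaker sense; the projective-limit definition of $H_+$ makes this immediate.
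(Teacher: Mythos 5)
Your proposal is correct and is exactly the argument the paper has in mind: the paper dispatches this proposition in one line as ``obvious from the definitions in Section~\ref{section3} and Theorem~\ref{ErgodicMethodTheorem}, since the measure $\mathsf{M}^{(\nu)}$ is supported on $H_+$'', and your write-up simply unwinds that observation (positivity of the eigenvalues of every corner forces $\alpha_{i,N}^-(X)\equiv 0$, hence the limit vanishes). No gaps; same approach, just spelled out in more detail.
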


Thus, we can restrict our attention on the parameters $\alpha^+$ which form a random point configuration on $\mathcal{X}=(0,\infty)$ (throughout, the reference measure $\mu$ on $(0,\infty)$ will be Lebesgue measure), which we go on to study next.

\subsection[Explicit expression for the correlation kernel and the $\alpha^+$ parameters]{Explicit expression for the correlation kernel and the $\boldsymbol{\alpha^+}$ parameters}\label{section4.2}

 It is a standard result from random matrix theory that the measure $\mu^{\nu}_N$ gives rise to a determinantal point process on $(0,\infty)$ with $N$ particles. It is the orthogonal polynomial ensemble associated to the Bessel weight~$w_N^{\nu}$. We will denote its correlation kernel by~$\mathsf{K}_N^{\nu}$. This is given explicitly in terms of Bessel polynomials but for the purposes of the asymptotic analysis we will instead exploit the connection to the Laguerre ensemble.

First, we need to recall a simple fact about transformations of determinantal measures on $\mathcal{X}=(I_1,I_2) \subset \mathbb{R}$. Let $f\colon \mathcal{X} \to \mathcal{X}$ be a $C^1$ bijection and let $g=f^{-1}$ be its inverse. Then, $f$~induces a homeomorphism of $\mathsf{Conf}(\mathcal{X})$: for a configuration $\mathfrak{X}$ the particles of $f(\mathfrak{X})$ are of the form~$f(x)$, $x \in \mathfrak{X}$. Let $\mathbb{P}_{\mathsf{K}}$ be a determinantal measure on $\mathcal{X}$. Then, its pushforward under~$f$ is also a~determinantal measure
\begin{gather*}
f_*\mathbb{P}_{\mathsf{K}}=\mathbb{P}_{\hat{\mathsf{K}}^f}
\end{gather*}
with correlation kernel
\begin{gather*}
\hat{\mathsf{K}}^f(x,y)=\sqrt{g'(x)g'(y)}\mathsf{K}(g(x),g(y)).
\end{gather*}

The Laguerre ensemble $\mathfrak{L}_N^{\nu}$, defined for $\nu>-1$, is the probability measure on $W^N_+$
\begin{gather*}
\mathfrak{L}_N^{\nu}({\rm d}x)=\tilde{c}_{\nu,N} \prod_{i=1}^{N}x_i^{\nu}{\rm e}^{-x_i}\Delta_N(x)^2 \textbf{1}_{\{x \in W^N_+\}} {\rm d}x_1\cdots {\rm d}x_N.
\end{gather*}

We will denote for $n\ge 1$, by $L_n^{\nu}$ the monic Laguerre polynomials, orthogonal with respect to the measure (note that this, unlike $w^{\nu}_N$, does not depend on~$N$)
\begin{gather*}
\lambda_\nu(x){\rm d}x=x^{\nu} {\rm e}^{-x}\textbf{1}_{\{x>0\}}{\rm d}x.
\end{gather*}
Their squared norm is given by:
\begin{gather*}
\|L_n^{\nu}\|_2^2=n! \Gamma(n+\nu+1).
\end{gather*}

It is a standard result that the Laguerre ensemble $\mathfrak{L}_N^{\nu}({\rm d}x)$ gives rise to a determinantal point process on $(0,\infty)$ with $N$ particles and its correlation kernel $\mathsf{L}_{N}^{\nu}$, with respect to Lebesgue measure, is given by
\begin{gather*}
\mathsf{L}_{N}^{\nu}(x,y)=\sum_{i=0}^{N-1}\frac{L_i^{\nu}(x)L_i^{\nu}(y)}{\|L_n^{\nu}\|_2^2}\sqrt{\lambda_\nu(x)\lambda_\nu(y)}.
\end{gather*}
Using the Christoffel--Darboux formula we can further write
\begin{gather*}
\mathsf{L}_{N}^{\nu}(x,y)=\frac{1}{\|L_{N-1}^{\nu}\|_2^2}\frac{L_{N}^{\nu}(x)L_{N-1}^{\nu}(y)-L_{N-1}^{\nu}(x)L_{N}^{\nu}(y)}{x-y}\sqrt{\lambda_\nu(x)\lambda_\nu(y)}.
\end{gather*}
Now, note that under the transformation $f(x)=\frac{2}{x}$ we have
\begin{gather*}
(f_*\mathfrak{L}_N^{\nu})({\rm d}x)=\mu_N^{\nu}({\rm d}x), \qquad \forall\, N \ge 1.
\end{gather*}
Thus, the correlation kernel $\mathsf{K}_N^{\nu}$ of the determinantal point process associated to $\mu_N^{\nu}$ is given by
\begin{gather*}
\mathsf{K}_N^{\nu}(x,y)=\frac{2}{xy}\mathsf{L}_N^{\nu}\left(\frac{2}{x},\frac{2}{y}\right).
\end{gather*}
Furthermore, the scaled point process $\mathsf{C}_N^{(\nu)}(X)=\{\alpha_{i,N}^+(X) \}_{i=1}^N$ where $\mathsf{Law}(X)=\mathsf{M}^{(\nu)}$ is deter\-minantal on $(0,\infty)$ with correlation kernel $K_N^{\nu}$, with respect to Lebesgue measure, given by
\begin{gather*}
K_N^{\nu}(x,y)=N\mathsf{K}_N^{\nu}(Nx,Ny).
\end{gather*}
And also in terms of the Laguerre kernel
\begin{gather}\label{KernelRelation}
K_N^{\nu}(x,y)=\frac{2}{Nxy}\mathsf{L}_N^{\nu}\left(\frac{2}{Nx},\frac{2}{Ny}\right).
\end{gather}

\begin{Proposition}\label{PropositionLimitCorrelationAlphaPlus}
Let $\nu>-1$. Then, we have uniformly on compacts in $(0,\infty)$
\begin{gather*}
K_N^{\nu}(x,y)\underset{N\to \infty}{\longrightarrow}K_\infty^{\nu}(x,y)=\frac{8}{xy}\mathbb{J}_{\nu}\left(\frac{8}{x},\frac{8}{y}\right).
\end{gather*}
Furthermore, the determinantal point process $\mathsf{C}^{(\nu)}_{N}(X)=\{\alpha_{i,N}^+(X) \}$ with $\mathsf{Law}(X)=\mathsf{M}^{(\nu)}$ on $(0,\infty)$ convergences weakly as $N \to \infty$ to the determinantal point process $\mathsf{C}^{(\nu)}(X)=\{ \alpha_{i}^+(X) \}$ such that
\begin{gather*}
\mathsf{Law}\big(\mathsf{C}^{(\nu)}(X)\big)=\mathbb{P}_{K_{\infty}^{\nu}}
\end{gather*}
and so do all the correlation functions.
Finally, under the transformation $x \mapsto \frac{8}{x}$ we get the Bessel point process
\begin{gather*}
\mathsf{Law}\left(\frac{8}{\mathsf{C}^{(\nu)}(X)}\right)=\mathbb{P}_{\mathbb{J}_{\nu}}.
\end{gather*}
\end{Proposition}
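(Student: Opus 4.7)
The approach centers on reducing everything to the classical hard-edge scaling of the Laguerre ensemble via identity (\ref{KernelRelation}). The relevant standard asymptotic states that for $\nu>-1$,
\begin{gather*}
\lim_{N\to\infty}\frac{1}{4N}\mathsf{L}_N^\nu\!\left(\frac{u}{4N},\frac{v}{4N}\right)=\mathbb{J}_\nu(u,v),
\end{gather*}
uniformly in $(u,v)$ on compact subsets of $(0,\infty)^2$; this is derivable from the Plancherel--Rotach-type asymptotics of $L_N^\nu$ near the origin. I would apply this with $u=8/x$ and $v=8/y$, noting that $u/(4N)=2/(Nx)$ and $v/(4N)=2/(Ny)$, so that (\ref{KernelRelation}) rewrites as
\begin{gather*}
K_N^\nu(x,y)=\frac{8}{xy}\cdot\frac{1}{4N}\mathsf{L}_N^\nu\!\left(\frac{u}{4N},\frac{v}{4N}\right)\longrightarrow \frac{8}{xy}\,\mathbb{J}_\nu\!\left(\frac{8}{x},\frac{8}{y}\right),
\end{gather*}
with the convergence uniform on compact subsets of $(0,\infty)^2$ inherited from the Laguerre statement. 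This yields the first displayed assertion.

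Next I would deduce weak convergence of the point configurations together with convergence of all correlation functions. Uniform-on-compacts convergence of the Hermitian kernels $K_N^\nu$ to $K_\infty^\nu$ implies the same type of convergence of every finite determinant $\det[K_N^\nu(x_i,x_j)]_{i,j=1}^n$ to $\det[K_\infty^\nu(x_i,x_j)]_{i,j=1}^n$. In view of the characterization of correlation functions recalled at the end of Section~\ref{section3}, this is precisely convergence of the $n$-point correlation functions $\rho_n$ pointwise and uniformly on compacts. Combining with Hadamard's inequality (to supply local domination from the local boundedness of $K_\infty^\nu$) gives convergence of $\int\Phi\,\rho_n$ against all $\Phi\in C_c((0,\infty)^n)$, which is the standard criterion for weak convergence in $\mathsf{Conf}((0,\infty))$ when the limit factorial-moment measures come from a bona fide determinantal process. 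Together with Theorem~\ref{ErgodicMethodTheorem} applied to $\mathsf{M}^{(\nu)}$ and Proposition~\ref{PropositionAlphaMinus} (which kills the $\alpha^-$ part), this identifies $\mathsf{Law}(\mathsf{C}^{(\nu)}(X))=\mathbb{P}_{K_\infty^\nu}$.

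For the final Bessel statement, I would simply apply the transformation rule for determinantal kernels recalled at the beginning of Section~\ref{section4.2} to the involution $g(x)=8/x$ with $g'(x)=-8/x^2$, obtaining for the pushed-forward kernel
\begin{gather*}
\sqrt{\tfrac{8}{x^2}\cdot\tfrac{8}{y^2}}\,K_\infty^\nu\!\left(\frac{8}{x},\frac{8}{y}\right)=\frac{8}{xy}\cdot\frac{xy}{8}\,\mathbb{J}_\nu(x,y)=\mathbb{J}_\nu(x,y),
\end{gather*}
which is exactly the Bessel kernel of Tracy--Widom.

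I expect the main difficulty to be bookkeeping rather than conceptual: one must quote (or re-derive) the precise hard-edge Laguerre-to-Bessel limit with the correct $(4N)^{-1}$ normalization, and then carefully track the substitutions $u=8/x$, $v=8/y$ so that the prefactor $8/(xy)$ in the statement emerges exactly. The weak-convergence step, while requiring the usual Hadamard-bound domination argument, is essentially routine once uniform-on-compacts convergence of the kernels is established.
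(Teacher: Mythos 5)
Your proposal is correct and follows essentially the same route as the paper: rewrite $K_N^\nu$ via (\ref{KernelRelation}) in the form $\frac{8}{xy}\cdot\frac{1}{4N}\mathsf{L}_N^\nu\big(\frac{1}{4N}\frac{8}{x},\frac{1}{4N}\frac{8}{y}\big)$, invoke the classical hard-edge Laguerre-to-Bessel kernel asymptotics (uniform on compacts) and the corresponding convergence of the determinantal point processes, and finish with the transformation rule for determinantal kernels under $x\mapsto 8/x$. Your computation of the pushed-forward kernel and the identification of the limit with $\mathsf{C}^{(\nu)}(X)$ via Theorem~\ref{ErgodicMethodTheorem} match the paper's argument.
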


\begin{proof}
We first write using relation (\ref{KernelRelation})
\begin{gather*}
K_N^{\nu}(x,y)=\frac{8}{xy}\frac{1}{4N}\mathsf{L}_{N}^{\nu}\left(\frac{1}{4N}\frac{8}{x},\frac{1}{4N}\frac{8}{y}\right).
\end{gather*}
Then, the first statements of the proposition are immediate consequences of the following well-known facts: the uniform convergence on compacts in $(0,\infty)$, with $z_1=\frac{8}{x}, z_2=\frac{8}{y}$
\begin{gather*}
\lim_{N \to \infty}\frac{1}{4N}\mathsf{L}_N^{\nu}\left(\frac{1}{4N}z_1,\frac{1}{4N}z_2\right)=\mathbb{J}_{\nu}(z_1,z_2)
\end{gather*}
and convergence of the Laguerre ensemble determinantal point process at the hard edge scaling limit to the Bessel process, see \cite{ForresterBessel,Forrester, TracyWidom}. These results follow from uniform on compacts asymptotics for Laguerre polynomials, see~\cite{Szego}. The final statement is immediate from the transformation rule for determinantal point processes.
\end{proof}

\begin{Remark}Using, the representation of $\mathsf{K}_N^{\nu}$ and thus $K_N^\nu$ in terms of the Bessel polynomials it is possible to give an alternative proof of Proposition~\ref{PropositionLimitCorrelationAlphaPlus}. The analysis boils down to asymptotics for hypergeometric functions.
\end{Remark}

Finally, the following completes the description of the $\alpha$ parameters.

\begin{Proposition}\label{PropositionAlphaPlus} Let $\nu>-1$. Then,
\begin{gather*}
\alpha_i^+(X)\neq 0, \qquad \forall\, i\ge 1, \qquad \text{for}\quad \mathsf{M}^{(\nu)}-\operatorname{a.e.} X \in H_{{\rm reg}}.
\end{gather*}
\end{Proposition}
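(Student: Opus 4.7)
The plan is to show that the point configuration $\mathsf{C}^{(\nu)}(X)=\{\alpha_i^+(X)\}$, with zero entries dropped, is almost surely infinite; by the monotone ordering $\alpha_1^+(X)\ge \alpha_2^+(X)\ge \cdots \ge 0$ this is equivalent to the stated nonvanishing for every $i\ge 1$.

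By Proposition~\ref{PropositionLimitCorrelationAlphaPlus}, the homeomorphism $x\mapsto 8/x$ of $(0,\infty)$ pushes $\mathsf{C}^{(\nu)}(X)$ bijectively onto a configuration with law $\mathbb{P}_{\mathbb{J}_{\nu}}$. Since a bijection preserves the cardinality of a point configuration, it suffices to prove that $\mathbb{P}_{\mathbb{J}_{\nu}}$-almost every configuration on $(0,\infty)$ has infinitely many points.

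The plan for this last step is to compute a tail trace of the Bessel kernel. A short L'Hopital computation in the defining formula for $\mathbb{J}_{\nu}$, together with the large-argument Bessel asymptotics $J_\nu(z)\sim \sqrt{2/(\pi z)}\cos(z-\nu\pi/2-\pi/4)$, yields $\mathbb{J}_{\nu}(u,u)\sim (2\pi\sqrt{u})^{-1}$ as $u\to\infty$, whence $\int_R^{\infty}\mathbb{J}_{\nu}(u,u){\rm d}u=\infty$ for every $R>0$. Since $\mathbb{P}_{\mathbb{J}_{\nu}}$ is determinantal with a Hermitian, locally trace-class kernel, the Macchi--Soshnikov representation (see~\cite{Soshnikov}) writes the number of particles in $(R,\infty)$ as a sum of \emph{independent} Bernoulli variables with parameters $\{\lambda_k\}\subset [0,1]$ given by the spectrum of the compression of $\mathbb{J}_{\nu}$ to $(R,\infty)$. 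Their total mass $\sum_k \lambda_k$ equals the divergent integral above, so the second Borel--Cantelli lemma forces infinitely many of these Bernoullis to equal $1$ almost surely, which concludes the argument.

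The only step with any content is the diagonal asymptotics $\mathbb{J}_{\nu}(u,u)\sim (2\pi\sqrt{u})^{-1}$, and even this is routine. It can also be bypassed entirely by invoking the classical fact that the hard-edge Bessel process of Tracy and Widom \cite{TracyWidom} is almost surely an infinite point configuration, whose particles accumulate at $+\infty$ while every compact subset of $(0,\infty)$ contains only finitely many.
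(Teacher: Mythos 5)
Your proposal is correct and takes essentially the same route as the paper: both reduce, via the monotone ordering and the map $x\mapsto 8/x$ from Proposition~\ref{PropositionLimitCorrelationAlphaPlus}, to the almost sure infinitude of the Bessel point process $\mathbb{P}_{\mathbb{J}_{\nu}}$. The only divergence is in how that classical fact is justified --- the paper notes that $\mathbb{J}_{\nu}$ is a projection kernel of infinite rank and invokes Theorem~4 of~\cite{Soshnikov}, whereas you use the divergence of $\int_{R}^{\infty}\mathbb{J}_{\nu}(u,u)\,{\rm d}u$ plus the Bernoulli decomposition (where, for full rigour, one should exhaust $(R,\infty)$ by bounded sets since the compression to $(R,\infty)$ is not trace class) --- but since you also offer the paper's citation as an alternative, the two proofs essentially coincide.
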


\begin{proof}Observe that, since the $\alpha_i^+(X)$ are strictly decreasing (by Proposition~\ref{PropositionLimitCorrelationAlphaPlus} they form a~determinantal point process) it suffices to prove that
\begin{gather*}
\mathsf{C}^{(\nu)}(X)\  \text{has infinitely many points for} \ \mathsf{M}^{(\nu)}-\operatorname{a.e.} X \in H_{{\rm reg}}.
\end{gather*}
Under the map $x \mapsto \frac{8}{x}$, by Proposition~\ref{PropositionLimitCorrelationAlphaPlus}, it then suffices to prove that the Bessel point process~$\mathbb{P}_{\mathbb{J}_{\nu}}$ has infinitely many particles almost surely which is a well-known result (in fact stronger quantitative results are known on the number of particles in growing intervals, see \cite[Theorem~2]{SoshnikovGaussianFluctuations}). This is a simple consequence of Theorem~4 of~\cite{Soshnikov} and the fact that $\mathbb{J}_{\nu}$ is a projection kernel of infinite rank (more precisely~$\mathbb{J}_{\nu}$ induces on $L^2 ((0,\infty),{\rm d}x )$ the operator of orthogonal projection onto the subspace of functions whose Hankel transform is supported on~$[0,1]$, see~\cite{TracyWidom}).
\end{proof}

\section{An estimate on the correlation kernel}\label{section5}
In this section we obtain certain, uniform in $N$, estimates on $K_N^{\nu}$, that will be useful for the description of the parameters $\gamma_1$ and $\gamma_2$.

\begin{Proposition}\label{EstimateGamma1}
Let $\nu>-1$. For any $\epsilon>0$, there exists $\delta>0$ such that for all $N \in \mathbb{N}$
\begin{gather*}
\int_{0}^{\delta}xK^{\nu}_N(x,x){\rm d}x<\epsilon.
\end{gather*}
\end{Proposition}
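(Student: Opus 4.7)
The natural strategy is to transport the estimate to the Laguerre-polynomial side, where the explicit identity from Section~\ref{section4} and standard tools for orthogonal polynomial ensembles are available. Using relation~\eqref{KernelRelation} and the substitution $t=2/(Nx)$ one rewrites
\begin{gather*}
\int_0^\delta xK_N^\nu(x,x)\,dx \;=\; \int_{2/(N\delta)}^\infty \frac{2}{Nt}\,\mathsf{L}_N^\nu(t,t)\,dt,
\end{gather*}
and the problem becomes one of controlling this Laguerre-side integral uniformly in $N$. I would split it at a fixed threshold $M>0$, chosen at the end.

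The tail $t\ge M$ is easy, because $\mathsf{L}_N^\nu(t,t)$ is the one-particle density of the $N$-particle Laguerre determinantal ensemble, so $\int_0^\infty \mathsf{L}_N^\nu(t,t)\,dt=N$. Hence
\begin{gather*}
\int_M^\infty \frac{2}{Nt}\,\mathsf{L}_N^\nu(t,t)\,dt\;\le\;\frac{2}{NM}\cdot N\;=\;\frac{2}{M},
\end{gather*}
a bound independent of both $N$ and $\delta$; taking $M=4/\epsilon$ makes this contribution at most $\epsilon/2$.

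The inner piece $t\in[2/(N\delta),M]$ is harder and rests on the key estimate I would prove first, namely a uniform pointwise bound $\mathsf{L}_N^\nu(t,t)\le C_\nu\sqrt{N/t}$ valid for all $t>0$ and $N\in\mathbb N$. Such a bound is consistent with the expected Marchenko--Pastur bulk density $\sim(\pi\sqrt{t/N})^{-1}$ as $t/N\to 0^+$ and, after the rescaling $t=z/(4N)$, with the large-$z$ Bessel-kernel asymptotic $\mathbb{J}_\nu(z,z)\sim 1/(2\pi\sqrt z)$ underlying Proposition~\ref{PropositionLimitCorrelationAlphaPlus}. Once it is available,
\begin{gather*}
\int_{2/(N\delta)}^M \frac{2}{Nt}\,\mathsf{L}_N^\nu(t,t)\,dt \;\le\; \frac{2C_\nu}{\sqrt N}\int_{2/(N\delta)}^M t^{-3/2}\,dt \;\le\; 2\sqrt{2}\,C_\nu\sqrt\delta,
\end{gather*}
which is $\le\epsilon/2$ for $\delta$ small enough, uniformly in $N$. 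Adding the two pieces finishes the proof.

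The main obstacle is establishing the uniform pointwise bound $\mathsf{L}_N^\nu(t,t)\le C_\nu\sqrt{N/t}$. I would derive it from the Christoffel--Darboux representation of the Laguerre kernel combined with classical Szeg\H{o}-type uniform pointwise estimates for the Laguerre polynomials $L_n^\nu(t)$, which give power-law control of $t^{\nu/2}e^{-t/2}|L_n^\nu(t)|$ in terms of $n$ and $t$. The hard-edge regime $t\sim 1/N$ is compatible with this bound, since then $\sqrt{N/t}$ is of order $N$, matching the expected peak density $4N\mathbb{J}_\nu(4Nt,4Nt)$ of the scaled kernel; for $\nu\in(-1,-1/2)$ one needs a slightly more careful variant of the estimate at the extreme left endpoint, but the overall structure of the argument is unaffected.
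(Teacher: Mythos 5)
Your proposal is correct in outline and shares the paper's top-level strategy: transfer the estimate to the Laguerre side via (\ref{KernelRelation}) (your identity $\int_0^\delta xK_N^\nu(x,x)\,{\rm d}x=\int_{2/(N\delta)}^\infty\frac{2}{Nt}\mathsf{L}_N^\nu(t,t)\,{\rm d}t$ is exactly the paper's Proposition \ref{EstimateLaguerre} with $R=2/\delta$), kill the tail $t\ge M$ with $\int_0^\infty\mathsf{L}_N^\nu(t,t)\,{\rm d}t=N$, and control the inner piece with Szeg\H{o}'s uniform bounds. Where you genuinely diverge is in the organization of the inner estimate: the paper never asserts a pointwise bound on the kernel diagonal, but instead keeps the decomposition $\mathsf{L}_N^\nu(y,y)=\sum_{n<N}\mathcal{W}_n^\nu(y)$ and performs a three-way split ($J_1,J_2,J_3$) over ranges of both the index $n$ and the variable $y$, using the two regimes of Lemma \ref{wavefunctionestimate} separately. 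Your route -- first prove $\mathsf{L}_N^\nu(t,t)\le C_\nu\sqrt{N/t}$ on the relevant range, then integrate -- is cleaner and yields the quantitative rate $O(\sqrt\delta)$ (or $O(\delta^{|\nu|})$), at the cost of having to establish that pointwise bound; I would do so by summing the wavefunction bounds of Lemma \ref{wavefunctionestimate} over $n$ rather than via Christoffel--Darboux, since on the diagonal the latter involves derivatives of Laguerre polynomials and is more delicate. Two remarks on your caveats. First, your worry about $\nu\in(-1,-1/2)$ is real for the global bound (near the hard edge $t\ll 1/N$ one has $\mathsf{L}_N^\nu(t,t)\asymp N^{\nu+1}t^{\nu}$, which exceeds $\sqrt{N/t}$ there), but it is irrelevant to the proof: for $\delta\le 2$ the lower limit of integration $2/(N\delta)$ is $\ge 1/N$, so the problematic region lies entirely outside the domain of integration, and on $[1/N,M]$ the bound $C_\nu(M)\sqrt{N/t}$ holds for all $\nu>-1$ (the low-index terms $n<1/t$ contribute at most $C(t^\nu+1/t)\le C'\sqrt{N/t}$ there). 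Second, make sure the order of choices is $M=M(\epsilon)$ first, then $C_\nu=C_\nu(M)$, then $\delta$; this is consistent with your plan and mirrors the paper's careful bookkeeping of $(\mathsf{w},l,R)$ in (\ref{ChoiceOfConstants}).
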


Using relation (\ref{KernelRelation}), then in terms of the Laguerre kernel $\mathsf{L}_N^{\nu}$ it will suffice to prove:

\begin{Proposition}\label{EstimateLaguerre}
Let $\nu>-1$. For any $\epsilon>0$ there exists $R=R(\epsilon)$ large enough such that for all $N \in \mathbb{N}$
\begin{gather*}
\int_{\frac{R}{N}}^{\infty}\frac{1}{N}\frac{\mathsf{L}_N^{\nu}(y,y)}{y}{\rm d}y<\epsilon.
\end{gather*}
\end{Proposition}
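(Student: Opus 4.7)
My plan is to reduce the estimate to a uniform pointwise bound on the Laguerre kernel via its hard-edge scaling. The natural change of variable $u = 4Ny$ transforms the integral into
\begin{equation*}
\int_{R/N}^\infty \frac{\mathsf{L}_N^\nu(y,y)}{Ny}\,{\rm d}y = \int_{4R}^\infty \frac{4}{u}\cdot \frac{1}{4N}\mathsf{L}_N^\nu\!\left(\frac{u}{4N},\frac{u}{4N}\right){\rm d}u,
\end{equation*}
so that, modulo the factor $4/u$, the integrand is precisely the kernel appearing in the Bessel/hard-edge scaling. The scaling limit $\frac{1}{4N}\mathsf{L}_N^\nu(\frac{u}{4N},\frac{u}{4N}) \to \mathbb{J}_\nu(u,u)$ already recalled in the paper, combined with the standard asymptotic $\mathbb{J}_\nu(u,u) = O(u^{-1/2})$ as $u\to\infty$, suggests that the answer should be of order $R^{-1/2}$ in the limit $N\to\infty$. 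The real challenge is to make this estimate uniform in $N$.

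The core technical step is a uniform pointwise upper bound of the form
\begin{equation*}
\frac{1}{4N}\mathsf{L}_N^\nu\!\left(\frac{u}{4N},\frac{u}{4N}\right) \le \frac{C(\nu)}{\sqrt{u}}, \qquad u \ge u_0,\; N \ge 1,
\end{equation*}
valid throughout the spectral support, i.e.\ for $u$ up to order $N^2$. For $u$ in a bounded set this follows from the uniform-on-compacts hard-edge convergence together with the Bessel tail bound. For $u$ of order $N^2$ (i.e.\ $y$ of order one, in the bulk), the inequality is equivalent, after undoing the scaling, to the classical Marchenko--Pastur envelope $\mathsf{L}_N^\nu(y,y)\lesssim \sqrt{(4N-y)/y}$, a standard consequence of Plancherel--Rotach asymptotics or of Nevai-type estimates on Christoffel functions for Laguerre weights. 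Near the soft edge $y\approx 4N$ the kernel is as large as order $N^{1/3}$ but concentrates on a $y$-window of width $\sim N^{2/3}$, so this region contributes at most $O(N^{-1})$ to our integral. Beyond the soft edge, the super-exponential concentration of the largest Laguerre eigenvalue (Ledoux--Rider) makes the tail negligible.

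Assembling these pieces yields
\begin{equation*}
\int_{R/N}^\infty \frac{\mathsf{L}_N^\nu(y,y)}{Ny}\,{\rm d}y \le \int_{4R}^\infty \frac{4 C(\nu)}{u^{3/2}}\,{\rm d}u + O(N^{-1}) = \frac{8C(\nu)}{\sqrt{R}} + O(N^{-1}),
\end{equation*}
which is smaller than any prescribed $\epsilon$ for every $N\ge N_0(\epsilon)$ by choosing $R$ large; the finitely many remaining values $N<N_0$ are handled by enlarging $R$, since for each fixed $N$ the integrand has no singularity on $[R/N,\infty)$ and rapid decay at infinity. The main obstacle is therefore the uniform pointwise bound in the interpolation between hard-edge and bulk scales, where the $N$- and $\nu$-dependence must be tracked carefully; the paper most likely does this via the Christoffel--Darboux formula together with explicit monic Laguerre polynomial asymptotics, in much the same spirit as the backward-shift manipulations of Section~\ref{section2}.
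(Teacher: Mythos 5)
Your overall strategy---reduce everything to a pointwise bound of the form $\tfrac{1}{4N}\mathsf{L}_N^\nu\big(\tfrac{u}{4N},\tfrac{u}{4N}\big)\le C(\nu)u^{-1/2}$ and integrate against $4/u$ to get an $R^{-1/2}$ tail---is viable and, if the bound were established, would indeed yield the proposition. The problem is that this bound, which you yourself identify as ``the main obstacle,'' is never actually proved, and the justification you offer for part of its range does not work. Uniform-on-compacts convergence of the scaled kernel to $\mathbb{J}_\nu$ gives, for each \emph{fixed} compact $[u_0,U]$, control only for $N\ge N_0(U)$; it says nothing uniform about the intermediate regime where $u$ grows with $N$ (say $u\sim N^{\delta}$), which is exactly the interpolation window between hard edge and bulk where the whole difficulty lives. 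So the step ``for $u$ in a bounded set this follows from the uniform-on-compacts hard-edge convergence'' covers only a range that shrinks relative to what is needed, and the remaining regimes are delegated to Plancherel--Rotach asymptotics, Marchenko--Pastur envelopes, soft-edge Airy behaviour and Ledoux--Rider concentration without carrying out any of those estimates. (As a minor point, your soft-edge numerology is also off: for the Laguerre kernel with edge at $4N$ the diagonal values there are $O\big(N^{-1/3}\big)$ on a window of width $O\big(N^{1/3}\big)$, not $N^{1/3}$ on a window of width $N^{2/3}$; the contribution is still negligible, but it signals that the scalings were not tracked.)

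For comparison, the paper avoids the bulk and soft edge entirely. It first disposes of the region $y\ge\mathsf{w}$ (for a large fixed $\mathsf{w}$) by the one-line Chebyshev-type bound $\int_{\mathsf{w}}^{\infty}\frac{\mathsf{L}_N^{\nu}(y,y)}{Ny}\,{\rm d}y\le\frac{1}{\mathsf{w}N}\int_0^\infty\mathsf{L}_N^\nu(y,y)\,{\rm d}y=\frac{1}{\mathsf{w}}$, using only that the kernel integrates to $N$; no Marchenko--Pastur or edge asymptotics are needed. On the remaining compact range $[R/N,\mathsf{w}]$ it works term by term with the wavefunctions $\mathcal{W}_n^\nu$ via Szeg\H{o}'s Theorem~7.6.4, which provides exactly the uniform-in-$n$ bounds ($\mathcal{W}_n^\nu(x)\lesssim_{\mathsf{w}} x^{-1/2}n^{-1/2}$ for $n^{-1}\le x\le\mathsf{w}$ and $\mathcal{W}_n^\nu(x)\lesssim x^\nu n^\nu$ for $x\le n^{-1}$) that your argument is missing in the interpolation regime; a careful splitting of the sum over $n$ and a three-case analysis in the sign of $\nu$ then produce constants $(\mathsf{w},l,R)$ depending only on $\epsilon$. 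If you want to salvage your route, you should either import Szeg\H{o}'s bounds to prove your uniform pointwise estimate on $u_0\le u\le 4N\mathsf{w}$ (noting that the implied constant then depends on $\mathsf{w}$, so you still need a separate argument for $y>\mathsf{w}$), or adopt the paper's trivial tail bound for that region.
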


{\bf Bounds for Laguerre polynomials.} In order to prove Proposition~\ref{EstimateLaguerre} we first need to recall some bounds for Laguerre polynomials from the classical book of Szeg\"{o}~\cite{Szego}.

\begin{Lemma}\label{wavefunctionestimate}We have the following, uniform in $n$, estimate for the Laguerre wavefunction $\mathcal{W}_n^{\nu}$ $($defined by the first equality below$)$, with $\mathsf{w}\ge 1$ being an arbitrary fixed constant
\begin{gather*}
\mathcal{W}_{n}^{\nu}(x)=\frac{L_n^{\nu}(x)^2x^{\nu}{\rm e}^{-x}}{\|L_n^{\nu}\|_2^2}\le \begin{cases}
\mathsf{c}(\mathsf{w})\times x^{-\frac{1}{2}}n^{-\frac{1}{2}}, & n^{-1}\le x \le \mathsf{w},\\
C\times x^\nu n^{\nu}, & 0 \le x \le n^{-1}.
\end{cases}
\end{gather*}
Here, $\mathsf{c}(\mathsf{w})$ only depends on $\mathsf{w}$ and in particular is independent of $n$, while $C$ is a generic constant independent of all quantities involved in the statement above.
\end{Lemma}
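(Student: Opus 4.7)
The plan is to pass to the orthonormal normalisation and apply the classical pointwise bounds on Laguerre polynomials from \cite{Szego}. Writing $\tilde{L}_n^{(\nu)}$ for the standard (non-monic) Laguerre polynomials, related to the monic ones by $L_n^\nu = (-1)^n n!\, \tilde{L}_n^{(\nu)}$, one has $\|\tilde{L}_n^{(\nu)}\|_2^2 = \Gamma(n+\nu+1)/n!$, and a direct substitution gives
\begin{gather*}
\mathcal{W}_n^\nu(x) = \frac{n!}{\Gamma(n+\nu+1)}\, \tilde{L}_n^{(\nu)}(x)^2\, x^\nu {\rm e}^{-x}.
\end{gather*}
Since $n!/\Gamma(n+\nu+1) \asymp n^{-\nu}$ as $n\to\infty$ by Stirling, it suffices to establish the corresponding bounds for $x^\nu {\rm e}^{-x} \tilde{L}_n^{(\nu)}(x)^2$ and rescale by this factor.

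For the hard-edge range $0 \le x \le n^{-1}$, the standard Szeg\H{o} estimate (Section~7.6 of \cite{Szego}) yields $|\tilde{L}_n^{(\nu)}(x)| \le C\, \tilde{L}_n^{(\nu)}(0) = C\binom{n+\nu}{n} \le C' n^\nu$, uniformly in $x$ in this range (this uses the fact that the first positive zero of $\tilde{L}_n^{(\nu)}$ lies at distance at least $c/n$ from the origin, so $\tilde{L}_n^{(\nu)}$ is monotone and comparable to its value at $0$). Squaring, multiplying by $x^\nu {\rm e}^{-x} \le x^\nu$, and inserting the prefactor $\sim n^{-\nu}$ gives the desired $\mathcal{W}_n^\nu(x) \le C\, n^\nu x^\nu$.

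For the bulk/oscillatory range $n^{-1} \le x \le \mathsf{w}$, the relevant input is the Hilb-type asymptotic (Theorem~8.22.4 in \cite{Szego} together with its uniform companion bound), which delivers
\begin{gather*}
\bigl|x^{\nu/2} {\rm e}^{-x/2} \tilde{L}_n^{(\nu)}(x)\bigr| \le C(\mathsf{w})\, n^{\nu/2-1/4}\, x^{-1/4}
\end{gather*}
uniformly for $c n^{-1}\le x \le \mathsf{w}$. Squaring and combining with the prefactor $n^{-\nu}$ produces exactly $\mathsf{c}(\mathsf{w})\,n^{-1/2}x^{-1/2}$. The two bounds are consistent at the transition $x = n^{-1}$: both right-hand sides reduce to a constant there, which is the only way the final uniform statement can be true.

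The main obstacle is ensuring the bound is genuinely uniform up to the endpoint $x=n^{-1}$ rather than only on a compact sub-interval of $(0,\mathsf{w}]$. Szeg\H{o}'s inequalities are stated in several adjacent regimes and the matching zone $x\sim n^{-1}$ has to be handled carefully. The cleanest route is via the Hilb-type formula expressing ${\rm e}^{-x/2}\tilde{L}_n^{(\nu)}(x)$ in terms of $J_\nu\bigl(2\sqrt{(n+(\nu+1)/2)x}\,\bigr)$ plus a controllable error; the textbook bound $|J_\nu(z)|\le C\min\bigl(z^\nu, z^{-1/2}\bigr)$ then interpolates smoothly between the $x^\nu n^\nu$ regime and the $x^{-1/4} n^{\nu/2-1/4}$ regime and, together with the error estimate, delivers the single uniform statement required to cover both cases of the lemma.
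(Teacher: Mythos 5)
Your argument is correct and is essentially the paper's proof: both pass to the standard normalization of the Laguerre polynomials, invoke Szeg\H{o}'s uniform estimates from Section~7.6 (the paper cites Theorem~7.6.4 for both ranges at once), and absorb the normalization through $n!/\Gamma(n+\nu+1)\asymp n^{-\nu}$ together with ${\rm e}^{-x}\le 1$. One small caveat on your supplementary remarks: for $-1<\nu<-1/2$ the bound $|J_{\nu}(z)|\le C\min\big(z^{\nu},z^{-1/2}\big)$ fails as $z\to 0$ (there $J_{\nu}(z)\sim (z/2)^{\nu}/\Gamma(\nu+1)$ dominates $z^{-1/2}$), and the monotonicity claim on $[0,n^{-1}]$ is delicate since the first zero of $L_n^{(\nu)}$ sits at $\approx j_{\nu,1}^2/(4n)$, which can be smaller than $n^{-1}$; neither issue affects the main line, since the region-by-region bounds $|J_{\nu}(z)|\le Cz^{\nu}$ for $z\le 1$ and $|J_{\nu}(z)|\le Cz^{-1/2}$ for $z\ge 1$ (equivalently, Szeg\H{o}'s Theorem~7.6.4 quoted directly) give exactly what is needed.
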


\begin{proof}We make use of results from Szeg\"{o}~\cite{Szego}. We note that in~\cite{Szego} statements involving Laguerre polynomials are for the ones normalized to have leading coefficient~$\frac{(-1)^n}{n!}$. Since we are interested in the \textit{monic} Laguerre polynomials $L_n^{\nu}$ we simply need to multiply the formulae therein by the inverse of this coefficient. Then, Theorem~7.6.4 in~\cite{Szego} reads as follows in our setting, for some fixed constant $\mathsf{w}\ge 1$ and uniformly in~$n$
\begin{gather*}
L_n^{\nu}(x)\le \begin{cases}
\tilde{\mathsf{c}}(\mathsf{w})\times n! x^{-\frac{\nu}{2}-\frac{1}{4}}n^{\frac{\nu}{2}-\frac{1}{4}}, & n^{-1}\le x \le \mathsf{w},\\
\tilde{C} \times n! n^{\nu}, & 0 \le x \le n^{-1}.
\end{cases}
\end{gather*}
Here, $\tilde{\mathsf{c}}(\mathsf{w})$ only depends on $\mathsf{w}$ and is independent of~$n$, while~$\tilde{C}$ is a generic constant independent of all quantities involved in the statement above.

Now, recall that $\|L_n^{\nu}\|_2^2=n! \Gamma(n+\nu+1)$. Then, using the classical fact for ratios of Gamma functions
\begin{gather*}
\frac{n!}{\Gamma(n+\nu+1)}=\frac{\Gamma(n+1)}{\Gamma(n+1+\nu)}= n^{-\nu}+\mathcal{O}\left(\frac{1}{n^{\nu+1}}\right)
\end{gather*}
and the trivial bound ${\rm e}^{-x}\le 1$ we obtain the statement of the~lemma.
\end{proof}

With these preliminaries in place we are ready to prove Proposition~\ref{EstimateLaguerre}.

\begin{proof}[Proof of Proposition~\ref{EstimateLaguerre}]
Let $\epsilon>0$ be arbitrary (and for convenience assume $\epsilon<1$). Below, we will pick (in this order) constants $\mathsf{w}, l, R$ depending on $\epsilon$. This choice will be made in display~(\ref{ChoiceOfConstants}) based on the bound~(\ref{MainEstimate}).

We will use the notation $\lesssim$ to mean $\le$ up to a constant; with the implicit constant being independent of $\epsilon$ (and obviously of the quantities $\mathsf{w}$, $l$, $R$ depending on it) and uniform in~$N$ (in particular we keep track of the constant $\mathsf{c}(\mathsf{w})$ from Lemma~\ref{wavefunctionestimate} but not of the generic constant~$C$). It is clear that it suffices to prove that we can find $R$ large enough such that uniformly in $N \in \mathbb{N}$:
\begin{gather*}
\int_{\frac{R}{N}}^{\infty}\frac{1}{N}\frac{\mathsf{L}_N^{\nu}(y,y)}{y}{\rm d}y \lesssim \epsilon.
\end{gather*}
First, recall that
\begin{gather*}
\mathsf{L}_N^{\nu}(y,y)=\sum_{k=0}^{N-1}\mathcal{W}_k^\nu(y).
\end{gather*}
We split the integral, with $\mathsf{w}$ to be picked according to $\epsilon$ (large)
\begin{gather*}
\int_{\frac{R}{N}}^{\infty}\frac{1}{N}\frac{\mathsf{L}_N^{\nu}(y,y)}{y}{\rm d} y= \int_{\frac{R}{N}}^{\mathsf{w}}\frac{1}{N}\frac{\mathsf{L}_N^{\nu}(y,y)}{y}{\rm d}y +\int_{\mathsf{w}}^{\infty}\frac{1}{N}\frac{\mathsf{L}_N^{\nu}(y,y)}{y}{\rm d}y.
\end{gather*}
We then, using the fact that $\int_{0}^{\infty}\mathcal{W}_n^{\nu}(y){\rm d}y=1$ for all $n$, easily estimate
\begin{gather*}
\frac{1}{N}\int_{\mathsf{w}}^{\infty}\frac{\mathsf{L}_N^{\nu}(y,y)}{y}{\rm d}y\le\frac{1}{\mathsf{w}N} \int_{\mathsf{w}}^{\infty}\mathsf{L}_N^{\nu}(y,y){\rm d}y \le \frac{1}{\mathsf{w}N}N=\frac{1}{\mathsf{w}}.
\end{gather*}
We now focus on the integral
\begin{gather*}
\int_{\frac{R}{N}}^{\mathsf{w}}\frac{1}{N}\frac{\mathsf{L}_N^{\nu}(y,y)}{y}{\rm d}y =\frac{1}{N}\sum_{1 \le n < N}^{} \int_{\frac{R}{N}}^{\mathsf{w}}\frac{\mathcal{W}_n^{\nu}(y)}{y}{\rm d}y
\end{gather*}
and split the range of summation, for $l$ to be picked depending on $\epsilon$ (small), as follows
\begin{gather*}
\frac{1}{N}\sum_{1 \le n < lN}^{} \int_{\frac{R}{N}}^{\mathsf{w}}\frac{\mathcal{W}_n^{\nu}(y)}{y}{\rm d}y+\frac{1}{N}\sum_{l N \le n < N}^{} \int_{\frac{R}{N}}^{\mathsf{w}}\frac{\mathcal{W}_n^{\nu}(y)}{y}{\rm d}y.
\end{gather*}
We will moreover, in the range $1 \le n < l N$, split the integral further as
\begin{gather*}
\int_{\frac{R}{N}}^{\mathsf{w}}\frac{\mathcal{W}_n^{\nu}(y)}{y}{\rm d}y =\int_{\frac{R}{N}}^{\frac{1}{n}}\frac{\mathcal{W}_n^{\nu}(y)}{y}{\rm d}y+\int_{\frac{1}{n}}^{\mathsf{w}}\frac{\mathcal{W}_n^{\nu}(y)}{y}{\rm d}y.
\end{gather*}
Note that, since the integrand is positive, in case $\frac{R}{N}>\frac{1}{n}$, we simply estimate
\begin{gather*}
\int_{\frac{R}{N}}^{\mathsf{w}}\frac{\mathcal{W}_n^{\nu}(y)}{y}{\rm d}y \le \int_{\frac{1}{n}}^{\mathsf{w}}\frac{\mathcal{W}_n^{\nu}(y)}{y}{\rm d}y.
\end{gather*}
Thus, we have the bound
\begin{gather*}
\int_{\frac{R}{N}}^{\infty}\frac{1}{N}\frac{\mathsf{L}_N^{\nu}(y,y)}{y}{\rm d}y \le \frac{1}{\mathsf{w}}+J_1+J_2+J_3,
\end{gather*}
where we define
\begin{gather*}
J_1 = \frac{1}{N}\sum_{1 \le n < lN}^{} \textbf{1}_{\left(\frac{1}{n}\ge \frac{R}{N}\right)} \int_{\frac{R}{N}}^{\frac{1}{n}}\frac{\mathcal{W}_n^{\nu}(y)}{y}{\rm d}y,\\
J_2 = \frac{1}{N}\sum_{1 \le n < lN}^{} \int_{\frac{1}{n}}^{\mathsf{w}}\frac{\mathcal{W}_n^{\nu}(y)}{y}{\rm d}y,\qquad
J_3 = \frac{1}{N}\sum_{l N \le n < N}^{} \int_{\frac{R}{N}}^{\mathsf{w}}\frac{\mathcal{W}_n^{\nu}(y)}{y}{\rm d}y.
\end{gather*}
We now go on to estimate each of these terms individually. Using the bound for $\mathcal{W}_n^{\nu}$ from Lemma~\ref{wavefunctionestimate} in the range $0 \le x \le n^{-1}$, we estimate $J_1$:
\begin{gather*}
J_1 \lesssim \frac{1}{N} \sum_{1 \le n < lN}^{}\textbf{1}_{\left(\frac{1}{n}\ge \frac{R}{N}\right)} n^{\nu} \int_{\frac{R}{N}}^{\frac{1}{n}}y^{\nu-1}{\rm d}y.
\end{gather*}
We split into three cases. For $\nu>0$ we have
\begin{gather*}
J_1 \lesssim \frac{1}{N} \sum_{1 \le n < lN}^{}\textbf{1}_{\left(\frac{1}{n}\ge \frac{R}{N}\right)}n^{\nu} \left[\left(\frac{1}{n}\right)^\nu-\left(\frac{R}{N}\right)^\nu\right]\lesssim l.
\end{gather*}
While, for $\nu=0$
\begin{align*}
J_1 & \lesssim \frac{1}{N} \sum_{1 \le n <lN}^{}\textbf{1}_{\left(\frac{1}{n}\ge \frac{R}{N}\right)} \log \left(\frac{N}{nR}\right)=\frac{1}{N}\sum_{1 \le n <lN}^{}\left[-\textbf{1}_{\left(\frac{1}{n}\ge \frac{R}{N}\right)} \log \left(\frac{n}{N}\right)-\textbf{1}_{\left(\frac{1}{n}\ge \frac{R}{N}\right)} \log \left(R\right)\right]\\
 &\le \frac{1}{N}\sum_{1 \le n <lN}^{}- \log \left(\frac{n}{N}\right) \lesssim \int_{0}^{l}-\log(x){\rm d}x=-l\log(l)-l.
\end{align*}
Finally, for $-1<\nu<0$
\begin{gather*}
J_1 \lesssim \frac{1}{N} \sum_{1 \le n <lN}^{}\textbf{1}_{\left(\frac{1}{n}\ge \frac{R}{N}\right)} \left(\frac{Rn}{N}\right)^\nu \lesssim \frac{1}{N^{\nu+1}}\left(lN\right)^{\nu+1}R^\nu=l^{\nu+1}R^\nu.
\end{gather*}
Observe that, the last bound decreases as $R$ increases since $\nu$ is negative.

Now, using the bound for $\mathcal{W}_n^{\nu}$ from Lemma~\ref{wavefunctionestimate} in the range $n^{-1} \le x \le \mathsf{w}$, we estimate $J_2$:
\begin{gather*}
J_2 \lesssim \frac{\mathsf{c}(\mathsf{w})}{N} \sum_{1 \le n < lN}^{} \int_{\frac{1}{n}}^{\mathsf{w}} \frac{1}{y^{\frac{3}{2}}}n^{-\frac{1}{2}}{\rm d}y \lesssim \frac{\mathsf{c}(\mathsf{w})}{N} \sum_{1 \le n < lN}^{}n^{-\frac{1}{2}} \left[n^{\frac{1}{2}}-\frac{1}{\mathsf{w}^{\frac{1}{2}}}\right] \lesssim \mathsf{c}(\mathsf{w})l.
\end{gather*}

Finally, we turn to $J_3$. We assume that $R$ is large enough so that $R \ge \frac{1}{l}$. The fact that this is possible (not trivial apriori since both quantities depend on $\epsilon$) will be clear by the choices made in (\ref{ChoiceOfConstants}) below. Thus, the bound for $\mathcal{W}_n^{\nu}$ from Lemma \ref{wavefunctionestimate} valid in the range $n^{-1}\le x \le \mathsf{w}$, is valid throughout the range of integration $\frac{R}{N} \le x \le \mathsf{w}$. Hence, we estimate
\begin{align*}
J_3 &\lesssim \frac{\mathsf{c}(\mathsf{w})}{N} \sum_{lN \le n < N}^{}\int_{\frac{R}{N}}^{\mathsf{w}}n^{-\frac{1}{2}}\frac{1}{y^{\frac{3}{2}}}{\rm d}y\lesssim \frac{\mathsf{c}(\mathsf{w})}{N^{\frac{3}{2}}l^{\frac{1}{2}}} \sum_{lN \le n < N}^{}\int_{\frac{R}{N}}^{\mathsf{w}}\frac{1}{y^{\frac{3}{2}}}{\rm d}y\le \frac{\mathsf{c}(\mathsf{w})}{N^{\frac{3}{2}}l^{\frac{1}{2}}} \sum_{lN \le n < N}^{}\int_{\frac{R}{N}}^{\infty}\frac{1}{y^{\frac{3}{2}}}{\rm d}y \nonumber\\
&\lesssim \frac{\mathsf{c}(\mathsf{w})}{N^{\frac{3}{2}}l^{\frac{1}{2}}} N \times \frac{N^{\frac{1}{2}}}{R^{\frac{1}{2}}}=\frac{\mathsf{c}(\mathsf{w})}{R^{\frac{1}{2}}l^{\frac{1}{2}}}.
\end{align*}
Putting everything together we obtain
\begin{gather}\label{MainEstimate}
\int_{\frac{R}{N}}^{\infty}\frac{1}{N}\frac{\mathsf{L}_N^{\nu}(y,y)}{y}{\rm d}y \lesssim \begin{cases}
\dfrac{1}{\mathsf{w}}+l+\mathsf{c}(\mathsf{w})l+\dfrac{\mathsf{c}(\mathsf{w})}{R^{\frac{1}{2}}l^{\frac{1}{2}}}, &\text{for } \nu>0,\vspace{1mm}\\
\dfrac{1}{\mathsf{w}}-l\log(l)-l+\mathsf{c}(\mathsf{w})l+\dfrac{\mathsf{c}(\mathsf{w})}{R^{\frac{1}{2}}l^{\frac{1}{2}}}, &\text{for } \nu=0,\vspace{1mm}\\
\dfrac{1}{\mathsf{w}}+l^{\nu+1}R^{\nu}+\mathsf{c}(\mathsf{w})l+\dfrac{\mathsf{c}(\mathsf{w})}{R^{\frac{1}{2}}l^{\frac{1}{2}}}, &\text{for } -1<\nu<0.
\end{cases}
\end{gather}
We now pick the quantities $(\mathsf{w},l,R)$ depending on $\epsilon$ so that each summand in bound~(\ref{MainEstimate}) is of order at most $\epsilon$ (for $\epsilon$ small). We first choose $\mathsf{w}=\mathsf{w}(\epsilon)=\frac{1}{\epsilon}$. It is also convenient to define the following constant
 \begin{gather*}
\mathsf{c}_{\epsilon}=\max \left\{\mathsf{c}\left(\frac{1}{\epsilon}\right),1 \right\}.
 \end{gather*}
Observe that, with this choice of $\mathsf{w}(\epsilon)$ the bound in~(\ref{MainEstimate}) is clearly still valid if we replace $\mathsf{c}(\mathsf{w}(\epsilon))$ by $\mathsf{c}_{\epsilon}$.

Hence, it is not hard to see that we can choose $(\mathsf{w},l,R)$ as follows (of course other choices are possible)
\begin{gather}\label{ChoiceOfConstants}
(\mathsf{w}(\epsilon),l(\epsilon),R(\epsilon))=\begin{cases}
\displaystyle \left(\frac{1}{\epsilon},\frac{\epsilon}{\mathsf{c}_{\epsilon}}, \frac{\mathsf{c}_{\epsilon}^3}{\epsilon^3}\right), &\text{for } \nu>0,\vspace{1mm}\\
\displaystyle\left(\frac{1}{\epsilon},\frac{\epsilon^{\kappa}}{\mathsf{c}_{\epsilon}}, \frac{\mathsf{c}_{\epsilon}^3}{\epsilon^{2+\kappa}}\right), \ \kappa>1, &\text{for } \nu=0,\vspace{1mm}\\
\displaystyle\left(\frac{1}{\epsilon},\frac{\epsilon}{\mathsf{c}_{\epsilon}}, \frac{\mathsf{c}_{\epsilon}^3}{\epsilon^3}\right), &\text{for } -1<\nu<0.
\end{cases}
\end{gather}

Note that, in all cases above the requirement $R\ge \frac{1}{l}$, that was used to bound $J_3$, is satisfied. Thus, if we take $R=R(\epsilon)$ as above, we get uniformly in $N \in \mathbb{N}$
\begin{gather*}
\int_{\frac{R(\epsilon)}{N}}^{\infty}\frac{1}{N}\frac{\mathsf{L}_N^{\nu}(y,y)}{y}{\rm d}y \lesssim \epsilon.
\end{gather*}
The proof is complete.
\end{proof}

The proof of Proposition \ref{EstimateLaguerre}, in particular the inequality in display (\ref{MainEstimate}), also gives the following lemma, written in terms of $K_N^{\nu}$:

\begin{Lemma}\label{LemmaUniformBoundedness}
Let $\nu>-1$. Let $T>0$ be fixed. Then
\begin{gather*}
\int_{0}^{T}xK^{\nu}_N(x,x){\rm d}x \quad \text{is uniformly bounded in $N$}.
\end{gather*}
\end{Lemma}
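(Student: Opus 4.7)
The plan is to reduce Lemma~\ref{LemmaUniformBoundedness} to the bound \eqref{MainEstimate} already produced inside the proof of Proposition~\ref{EstimateLaguerre}, by a change of variables together with a \emph{fixed} (rather than $\epsilon$-dependent) choice of the auxiliary parameters $\mathsf{w}, l, R$.

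First I would translate the integral into one against the Laguerre kernel. Using \eqref{KernelRelation},
\begin{gather*}
xK_N^{\nu}(x,x)=\frac{2}{Nx}\mathsf{L}_N^{\nu}\left(\frac{2}{Nx},\frac{2}{Nx}\right),
\end{gather*}
and the substitution $y=2/(Nx)$ gives
\begin{gather*}
\int_0^T xK_N^{\nu}(x,x)\,{\rm d}x = 2\int_{2/(NT)}^{\infty}\frac{1}{N}\frac{\mathsf{L}_N^{\nu}(y,y)}{y}\,{\rm d}y.
\end{gather*}
Thus, setting $R:=2/T$ (a fixed constant depending only on $T$), the lemma reduces to showing that $\int_{R/N}^{\infty}N^{-1}\mathsf{L}_N^{\nu}(y,y)/y\,{\rm d}y$ is bounded uniformly in $N$.

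Next, I would re-run the decomposition $\mathsf{w}^{-1}+J_1+J_2+J_3$ from the proof of Proposition~\ref{EstimateLaguerre} \emph{verbatim}, but now freeze $\mathsf{w}=1$ and $l=1$ instead of tuning them as in \eqref{ChoiceOfConstants}. The crucial observation is that the $\epsilon$-dependent choices in \eqref{ChoiceOfConstants} were needed only to push the right-hand side of \eqref{MainEstimate} to zero; to keep it merely \emph{finite}, any fixed constants suffice. With $l=1$ the range $lN\le n<N$ in the sum defining $J_3$ is empty, so $J_3=0$; in particular the constraint $R\ge 1/l$ required in the original proof becomes vacuous (this matters, because here $R=2/T$ may well be small). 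The remaining pieces $J_1$ and $J_2$ are then controlled uniformly in $N$ directly by the wavefunction estimates of Lemma~\ref{wavefunctionestimate}, and a case-by-case inspection of \eqref{MainEstimate} in the regimes $\nu>0$, $\nu=0$, $-1<\nu<0$ yields a finite bound depending only on $T$ and $\nu$.

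I do not expect a genuine obstacle: all the hard analysis has already been done in Lemma~\ref{wavefunctionestimate} and the proof of Proposition~\ref{EstimateLaguerre}. The only point deserving a second look is the sub-case $-1<\nu<0$, where the bound on $J_1$ picks up a factor $R^{\nu}=(T/2)^{|\nu|}$; this depends on $T$ but is finite, so uniformity in $N$ is preserved. This is also in line with the remark preceding Lemma~\ref{LemmaUniformBoundedness} to the effect that the inequality \eqref{MainEstimate} of Proposition~\ref{EstimateLaguerre} essentially already contains the content of the present lemma.
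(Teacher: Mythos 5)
Your proposal is correct and is essentially the argument the paper intends: the paper gives no separate proof, asserting only that the decomposition and the bound in display (\ref{MainEstimate}) from the proof of Proposition~\ref{EstimateLaguerre} yield the lemma, and your route --- change variables to the Laguerre kernel, freeze $\mathsf{w}=l=1$ and $R=2/T$ so that $J_3$ is empty and the constraint $R\ge 1/l$ is irrelevant, then read off a finite ($T$- and $\nu$-dependent) bound from each case of (\ref{MainEstimate}) --- is exactly the right way to fill this in. You also correctly isolate the only delicate point, namely that for $-1<\nu<0$ the fixed small $R$ contributes a large but finite factor $R^{\nu}=(T/2)^{|\nu|}$, which does not affect uniformity in $N$.
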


\begin{Remark}[estimates for Hua--Pickrell measures]\label{EstimatesHuaPickrell}
For the Hua--Pickrell measures an estimate for the corresponding correlation kernel $K_{\rm HP}^{\mathsf{s},N}$ (the analogue of $K_{N}^{\nu}$) also holds (the integral is over the range $(-\epsilon,\epsilon)$ since the $\alpha_{i,N}^-$ in this case are not trivial)
\begin{gather}\label{HuaPickrellGamma2}
\int_{-\epsilon}^{\epsilon}x^2K_{\rm HP}^{\mathsf{s},N}(x,x){\rm d}x.
\end{gather}
This as we see in the next section gives that $\gamma_2 \equiv 0$.

On the other hand, in the analysis of the parameter $\gamma_1$, one is led to consider the following term
\begin{gather}\label{HuaPickrellGamma1}
\int_{-\epsilon}^{\epsilon}xK_{\rm HP}^{\mathsf{s},N}(x,x){\rm d}x.
\end{gather}
For $\mathsf{s} \in \mathbb{R}$, due to the symmetry of the kernel $K_{\rm HP}^{\mathsf{s},N}(-x,-x)=K_{\rm HP}^{\mathsf{s},N}(x,x)$, it is immediate that this term vanishes identically. Then, one is left with terms that can be easily estimated by~(\ref{HuaPickrellGamma2}). In general, the problem of estimating~(\ref{HuaPickrellGamma1}) (when it's not trivial) appears to be open.
\end{Remark}

\section[The parameter $\gamma_2$]{The parameter $\boldsymbol{\gamma_2}$}\label{section6}

We first recall the following result, see \cite[Proposition~7.2]{BorodinOlshanskiErgodic}:

\begin{Proposition}\label{BorodinOlshanskiGamm2}
Let $\mathfrak{M}$ be a $\mathbb{U}(\infty)$-invariant probability measure on $H$. Let $\mathcal{P}_N$ and $\mathcal{P}$ be the corresponding point processes on $\mathbb{R}^*=\mathbb{R}\backslash\{0\}$ defined in Section~{\rm \ref{section3}}. Let $\rho_1^{(N)}$ and $\rho_1$ be the first correlation functions with respect to Lebesgue measure $($assuming they exist$)$ of these point processes. Assume that, for any $\Phi\in C_c (\mathbb{R}^*)$ we have
\begin{gather}\label{ConvergenceCorrelationGamma2}
\int \Phi(x)\rho_1^{(N)}(x){\rm d}x\to \int \Phi(x)\rho_1(x){\rm d}x.
\end{gather}
Finally, assume that
\begin{gather}\label{UniformEstimateGamma2}
\lim_{\epsilon \to 0} \int_{-\epsilon}^{\epsilon}x^2 \rho_1^{(N)}(x){\rm d}x=0, \qquad \text{uniformly in $N$}.
\end{gather}
Then, we have
\begin{gather*}
\gamma_2(X)=0, \qquad \text{for} \quad \mathfrak{M}-\operatorname{a.e.} X \in H_{{\rm reg}}.
\end{gather*}
\end{Proposition}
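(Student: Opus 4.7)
The plan is to show that $\mathbb{E}_{\mathfrak{M}}[\gamma_2(X)] = 0$, which combined with the pointwise bound $\gamma_2 \ge 0$ on $H_{\mathrm{reg}}$ forces $\gamma_2(X) = 0$ $\mathfrak{M}$-a.e. The key identity is
$$\int_{-\epsilon}^\epsilon x^2 \rho_1^{(N)}(x)\,dx = \mathbb{E}_{\mathfrak{M}}\!\left[d^{(N)}(X) - A^{(N)}_\epsilon(X)\right], \qquad A^{(N)}_\epsilon(X) := \sum_{y \in \mathcal{P}_N(X),\; |y|>\epsilon} y^2,$$
obtained from $d^{(N)}(X) = \sum_{y \in \mathcal{P}_N(X)} y^2$ by splitting the sum at $|y| = \epsilon$. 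The strategy is to pass to $N \to \infty$ in this identity and then let $\epsilon \to 0$; the left-hand side vanishes by assumption (\ref{UniformEstimateGamma2}), while the right-hand side produces $\mathbb{E}[\gamma_2(X)]$.

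For the right-hand side, I would argue in two parts. By Theorem \ref{ErgodicMethodTheorem} one has $d^{(N)}(X) \to \delta(X)$ and $\alpha^{\pm}_{i,N}(X) \to \alpha^{\pm}_i(X)$ for each $i$, a.s.\ on $H_{\mathrm{reg}}$. Fatou's lemma gives $\liminf_N \mathbb{E}[d^{(N)}(X)] \ge \mathbb{E}[\delta(X)]$, and, applied termwise to the sum defining $A^{(N)}_\epsilon$, also $\liminf_N A^{(N)}_\epsilon(X) \ge A_\epsilon(X) := \sum_{i:\, |\alpha^{\pm}_i(X)| > \epsilon}(\alpha^{\pm}_i(X))^2$ almost surely. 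This pointwise bound should then be promoted to the expectation-level statement $\limsup_N \mathbb{E}[A^{(N)}_\epsilon] \le \mathbb{E}[A_\epsilon]$ by using assumption (\ref{ConvergenceCorrelationGamma2}) with compactly supported test functions $x^2 \phi(x) \in C_c(\mathbb{R}^*)$ approximating $x^2 \mathbf{1}_{\{|x|>\epsilon\}}$. Combining,
$$\liminf_N \int_{-\epsilon}^\epsilon x^2 \rho_1^{(N)}(x)\,dx \ge \mathbb{E}[\delta(X)] - \mathbb{E}[A_\epsilon(X)].$$

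Letting $\epsilon \to 0$: the left-hand side is $\le \sup_N \int_{-\epsilon}^\epsilon x^2 \rho_1^{(N)}(x)\,dx \to 0$ by (\ref{UniformEstimateGamma2}), and by monotone convergence $\mathbb{E}[A_\epsilon(X)] \uparrow \mathbb{E}[\sum_i (\alpha^{\pm}_i(X))^2]$, so the right-hand side tends to $\mathbb{E}[\delta(X)] - \mathbb{E}[\sum_i(\alpha^{\pm}_i(X))^2] = \mathbb{E}[\gamma_2(X)]$. Hence $\mathbb{E}[\gamma_2(X)] \le 0$, giving the conclusion.

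The principal obstacle is justifying the bound $\limsup_N \mathbb{E}[A^{(N)}_\epsilon] \le \mathbb{E}[A_\epsilon]$, specifically the tail control $\int_{|x|>R} x^2 \rho_1^{(N)}(x)\,dx \to 0$ uniformly in $N$ as $R \to \infty$. This is not automatic from the stated hypotheses, but in the applications of interest it follows from the structure of the limiting point process (for example, from the concentration of the Laguerre ensemble away from the outer edge in the inverse Wishart case treated in this paper).
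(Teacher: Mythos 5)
You have put your finger on the gap yourself, and it is a real one. The step $\limsup_N\mathbb{E}\big[A^{(N)}_\epsilon\big]\le\mathbb{E}[A_\epsilon]$ needs the uniform tail bound $\int_{|x|>R}x^2\rho_1^{(N)}(x)\,{\rm d}x\to 0$ as $R\to\infty$, and nothing in the hypotheses supplies it: (\ref{ConvergenceCorrelationGamma2}) only controls test functions compactly supported in $\mathbb{R}^*$, and (\ref{UniformEstimateGamma2}) only a neighbourhood of the origin. Since the proposition is a statement about an arbitrary $\mathbb{U}(\infty)$-invariant $\mathfrak{M}$, appealing to "the structure of the limiting point process in the applications of interest" is not admissible. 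There is a second, related defect: you repeatedly separate $\mathbb{E}\big[d^{(N)}-A^{(N)}_\epsilon\big]$ into $\mathbb{E}\big[d^{(N)}\big]-\mathbb{E}\big[A^{(N)}_\epsilon\big]$ and $\mathbb{E}[\gamma_2]$ into $\mathbb{E}[\delta]-\mathbb{E}\big[\sum_i(\alpha_i^{\pm})^2\big]$. These individual expectations can be $+\infty$ — and indeed are $+\infty$ in the paper's own application whenever $-1<\nu\le 1$, since $x^2K_N^{\nu}(x,x)\sim c\,x^{-\nu}$ as $x\to\infty$ — so several of your inequalities degenerate to $\infty-\infty$.

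The repair, which is how the cited source proves the statement (the paper itself does not reprove it; it quotes Proposition~7.2 of Borodin--Olshanski) and which the paper reproduces verbatim for $\gamma_1$ in Proposition~\ref{AbstractGamma1}, is to cut off at infinity on the \emph{function} side rather than truncating the particle sum only from below. Fix a continuous $\Phi\ge 0$ of compact support with $\Phi(x)=x^2$ near $0$, and set $\phi_N(X)=\sum_{y\in\mathsf{C}_N(X)}\Phi(y)$ and $\phi_{\infty}(X)=\sum_{y\in\mathsf{C}(X)}\Phi(y)$. The elementary lemma of Section~\ref{section7}, with $x$ replaced by $x^2$ and using $d^{(N)}\to\delta$ and $\alpha^{\pm}_{i,N}\to\alpha^{\pm}_i$ from Theorem~\ref{ErgodicMethodTheorem}, gives $\phi_N(X)\to\phi_{\infty}(X)+\gamma_2(X)$ pointwise on $H_{{\rm reg}}$, whence Fatou yields $\liminf_N\int\Phi\rho_1^{(N)}\ge\int\Phi\rho_1+\mathbb{E}[\gamma_2]$. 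The reverse bound $\limsup_N\int\Phi\rho_1^{(N)}\le\int\Phi\rho_1$ follows by writing $\Phi=\Phi_{\epsilon}+\Psi_{\epsilon}$ with $0\le\Phi_{\epsilon}(x)\le x^2$ supported in $[-\epsilon,\epsilon]$ and $\Psi_{\epsilon}\in C_c(\mathbb{R}^*)$, applying (\ref{UniformEstimateGamma2}) to $\Phi_{\epsilon}$ and (\ref{ConvergenceCorrelationGamma2}) to $\Psi_{\epsilon}$. Since $\int\Phi\rho_1<\infty$ (again by (\ref{UniformEstimateGamma2}) together with Fatou), the two bounds cancel to give $\mathbb{E}[\gamma_2]\le 0$, and positivity of $\gamma_2$ on $H_{{\rm reg}}$ finishes the proof. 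With the compactly supported $\Phi$ no tail control at infinity is ever required, which is precisely what your decomposition is missing.
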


The proposition above, along with the results of the previous sections, allows us to determine~$\gamma_2$:

\begin{Proposition}\label{PropositionGamma2}
Let $\nu>-1$. Then
\begin{gather*}
\gamma_2(X)=0, \qquad \text{for} \quad \mathsf{M}^{(\nu)}-\operatorname{a.e.} X \in H_{{\rm reg}}.
\end{gather*}
\end{Proposition}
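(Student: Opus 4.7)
The plan is to apply Proposition~\ref{BorodinOlshanskiGamm2} with $\mathfrak{M}=\mathsf{M}^{(\nu)}$, so the task reduces to verifying its two hypotheses: the convergence of first correlation functions tested against $\Phi \in C_c(\mathbb{R}^*)$, and the uniform-in-$N$ vanishing of $\int_{-\epsilon}^{\epsilon} x^2 \rho_1^{(N)}(x)\,dx$ as $\epsilon \to 0$. The essential simplification here is Proposition~\ref{PropositionAlphaMinus}: since $\mathsf{M}^{(\nu)}$ is supported on $H_+$, we have $\alpha^-_{i,N}(X) = 0$ for all $i,N$ and all $X$ in the support, so $\mathcal{P}_N$ (and $\mathcal{P}$) is carried by $(0,\infty)$. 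Consequently $\rho_1^{(N)}$ vanishes on $(-\infty,0)$ and equals the diagonal $K_N^{\nu}(x,x)$ of the Bessel-derived kernel on $(0,\infty)$.

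For hypothesis (\ref{ConvergenceCorrelationGamma2}), I would invoke Proposition~\ref{PropositionLimitCorrelationAlphaPlus}, which gives uniform convergence of $K_N^{\nu}(x,y)$ to $K_\infty^{\nu}(x,y)$ on compact subsets of $(0,\infty) \times (0,\infty)$. Since any $\Phi \in C_c(\mathbb{R}^*)$ has support inside a compact subset of $(0,\infty)$ (the part in $(-\infty,0)$ contributes nothing), passing to the limit under the integral is immediate, giving $\int \Phi(x) K_N^{\nu}(x,x)\,dx \to \int \Phi(x) K_\infty^{\nu}(x,x)\,dx$.

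For hypothesis (\ref{UniformEstimateGamma2}), the negative-$x$ part is identically zero, so one needs to control $\int_0^{\epsilon} x^2 K_N^{\nu}(x,x)\,dx$ uniformly in $N$. The key observation is the elementary bound $x^2 \le \epsilon x$ on $(0,\epsilon)$, whence
\begin{equation*}
\int_0^{\epsilon} x^2 K_N^{\nu}(x,x)\,dx \;\le\; \epsilon \int_0^{\epsilon} x\, K_N^{\nu}(x,x)\,dx.
\end{equation*}
By Lemma~\ref{LemmaUniformBoundedness} (applied with $T = 1$, say), the integral $\int_0^{1} x\, K_N^{\nu}(x,x)\,dx$ is bounded uniformly in $N$, so for $\epsilon \le 1$ the right side is $O(\epsilon)$ uniformly in $N$, and (\ref{UniformEstimateGamma2}) follows. (If one wanted to avoid even this modest appeal to Lemma~\ref{LemmaUniformBoundedness}, Proposition~\ref{EstimateGamma1} directly supplies a $\delta>0$ making $\int_0^{\delta} x\, K_N^{\nu}(x,x)\,dx < 1$ uniformly.)

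No step here is a real obstacle, because the heavy lifting has already been done: Proposition~\ref{EstimateGamma1} (ultimately Proposition~\ref{EstimateLaguerre} with the careful splitting of the Laguerre wavefunction estimates) is what makes the uniform estimate tractable, and the asymmetry of the support of $\mathsf{M}^{(\nu)}$ means the $x^2$-weighted bound is actually easier than the $x$-weighted one needed later for $\gamma_1$. Combining the two verified hypotheses, Proposition~\ref{BorodinOlshanskiGamm2} concludes $\gamma_2(X) = 0$ for $\mathsf{M}^{(\nu)}$-a.e.\ $X \in H_{\mathrm{reg}}$.
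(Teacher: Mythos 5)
Your proposal is correct and follows essentially the same route as the paper: apply Proposition~\ref{BorodinOlshanskiGamm2} with $\mathfrak{M}=\mathsf{M}^{(\nu)}$, obtain the convergence of first correlation functions from Proposition~\ref{PropositionLimitCorrelationAlphaPlus}, and deduce the uniform vanishing of $\int_0^{\epsilon}x^2K_N^{\nu}(x,x)\,{\rm d}x$ from the $x$-weighted estimate of Proposition~\ref{EstimateGamma1} (your bound $x^2\le\epsilon x$ just makes explicit what the paper calls ``immediate''). The only difference is cosmetic: you spell out the reduction more carefully than the paper does.
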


\begin{proof} We apply Proposition \ref{BorodinOlshanskiGamm2} with $\mathfrak{M}=\mathsf{M}^{(\nu)}$. The convergence of correlation functions~(\ref{ConvergenceCorrelationGamma2}) comes from Proposition \ref{PropositionAlphaPlus}. Moreover,
statement (\ref{UniformEstimateGamma2}) above becomes (recall that the first correlation function $\rho_1^{(N)}(x)=K_N^{\nu}(x,x)$ for $x\ge 0$ and vanishes identically for $x<0$):
\begin{gather*}
\lim_{\epsilon \to 0} \int_{0}^{\epsilon}x^2 K_N^{\nu}(x,x){\rm d}x \to 0, \qquad \text{uniformly in $N$},
\end{gather*}
which is an immediate consequence of Proposition~\ref{EstimateGamma1}.
\end{proof}

\section[The parameter $\gamma_1$]{The parameter $\boldsymbol{\gamma_1}$}\label{section7}

\begin{Proposition}\label{PropositionGamma1}
Let $\nu>-1$. Then
\begin{gather*}
\gamma_1(X)=\sum_{i=1}^{\infty}\alpha_i^+(X)<\infty, \qquad \text{for} \quad \mathsf{M}^{(\nu)}-\operatorname{a.e.} X \in H_{{\rm reg}}.
\end{gather*}
\end{Proposition}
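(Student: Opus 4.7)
The plan is to combine positivity of the $\alpha_{i,N}^+$ with the uniform near-origin estimate of Proposition~\ref{EstimateGamma1} via a truncation argument, sidestepping the fact that $\mathbb{E}[\gamma_1]$ may be infinite (for $\nu\le 0$ the top particle $\alpha_1^+$ has a heavy tail, since the Bessel density $\mathbb{J}_\nu(y,y)$ does not decay near $0$).

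Because $\mathsf{M}^{(\nu)}$ is supported on $H_+$, Proposition~\ref{PropositionAlphaMinus} gives $\alpha_{i,N}^-\equiv 0$, so $c^{(N)}(X)=\sum_{i=1}^N\alpha_{i,N}^+(X)$ is a sum of non-negative terms. Positivity immediately supplies the easy direction: for every $k$ one has $\sum_{i=1}^{k}\alpha_{i,N}^{+}(X)\le c^{(N)}(X)$, and passing to the limit in $N$ on $H_{{\rm reg}}$ and then in $k$ yields
\begin{gather*}
\sum_{i=1}^{\infty}\alpha_i^+(X)\le \gamma_1(X)<\infty, \qquad \mathsf{M}^{(\nu)}\text{-a.s.},
\end{gather*}
so in particular $\sum_i\alpha_i^+(X)<\infty$ a.s.

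For the reverse inequality, for each $\delta>0$ I would introduce the truncated sums $S_\delta^N(X)=\sum_{i\colon\alpha_{i,N}^+(X)>\delta}\alpha_{i,N}^+(X)$, $T_\delta^N(X)=c^{(N)}(X)-S_\delta^N(X)\ge 0$, and the limiting $S_\delta^\infty(X)=\sum_{i\colon\alpha_i^+(X)>\delta}\alpha_i^+(X)$. For any fixed $\delta>0$ the value $\delta$ a.s.\ lies outside $\{\alpha_i^+(X)\}$, so combining monotonicity of the sorted particles with the pointwise convergence $\alpha_{i,N}^+(X)\to\alpha_i^+(X)$ on $H_{{\rm reg}}$ shows that for all $N$ large the same (finitely many) indices contribute to $S_\delta^N$ and to $S_\delta^\infty$; consequently $S_\delta^N\to S_\delta^\infty$ and $T_\delta^N\to \gamma_1-S_\delta^\infty\ge 0$ a.s. Fatou's lemma applied to the non-negative sequence $T_\delta^N$ then gives
\begin{gather*}
\mathbb{E}\bigl[\gamma_1-S_\delta^\infty\bigr]\le \liminf_{N}\mathbb{E}[T_\delta^N]=\liminf_{N}\int_0^\delta x\, K_N^\nu(x,x)\,{\rm d}x,
\end{gather*}
and by Proposition~\ref{EstimateGamma1} the right-hand side can be made arbitrarily small uniformly in $N$ by taking $\delta$ small.

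To conclude, as $\delta\downarrow 0$ the non-negative random variables $\gamma_1-S_\delta^\infty$ decrease monotonically to $\gamma_1-\sum_i\alpha_i^+\ge 0$, and each of them is integrable for $\delta>0$ by the bound just obtained. Monotone convergence for a decreasing sequence therefore forces $\mathbb{E}[\gamma_1-\sum_i\alpha_i^+]=0$, and together with the a.s.\ inequality from the second paragraph this gives $\gamma_1=\sum_i\alpha_i^+$ a.s. The step I expect to be the main obstacle is the stabilisation $S_\delta^N\to S_\delta^\infty$: one must show not merely that each individual $\alpha_{i,N}^+$ converges (automatic on $H_{{\rm reg}}$), but also that for $N$ large \emph{no extra} particle of $\mathsf{C}_N^{(\nu)}(X)$ crosses $\delta$ from below. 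This is exactly what monotonicity of the sorted particles delivers, once $\delta$ is chosen outside the at-most-countable bad set $\{\alpha_i^+(X)\colon i\ge 1\}$.
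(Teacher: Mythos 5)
Your argument is correct, and although it rests on the same two pillars as the paper's proof --- positivity of the $\alpha_{i,N}^+$ together with the uniform near-origin kernel estimate of Proposition~\ref{EstimateGamma1} --- it is packaged genuinely differently and is in one respect leaner. The paper routes the statement through the abstract Proposition~\ref{AbstractGamma1}: it truncates with a \emph{continuous} compactly supported $\Phi$ equal to $x$ near $0$, proves the pointwise stabilisation $\phi_N\to\phi_\infty+\Delta$ in a separate lemma, applies Fatou to get $\liminf_N\int\Phi\rho_1^{(N)}\ge\int\Phi\rho_1+\mathbb{E}[\Delta]$, and then needs the convergence of first correlation functions (Proposition~\ref{PropositionLimitCorrelationAlphaPlus}) plus the splitting $\Phi=\Phi_\epsilon+\Psi_\epsilon$ to establish the matching $\limsup$ bound. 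You instead apply Fatou directly to the non-negative small-particle mass $T_\delta^N$, obtaining $\mathbb{E}\bigl[\gamma_1-S_\delta^\infty\bigr]\le\liminf_N\int_0^\delta xK_N^\nu(x,x)\,{\rm d}x$, which dispenses entirely with the convergence of correlation functions; the price is the sharp-cutoff stabilisation $S_\delta^N\to S_\delta^\infty$, which you correctly identify as the delicate step and resolve (only a sequence of good levels $\delta\downarrow0$ is needed, and Fubini shows Lebesgue-a.e.\ $\delta$ avoids the countable particle set). Two points should be made explicit. First, the finiteness $\gamma_1(X)<\infty$, which you read off from the definition of $H_{{\rm reg}}$ and Theorem~\ref{ErgodicMethodTheorem}, is treated by the paper as a separate nontrivial step (Proposition~\ref{FinitenessGamma1}, via Lemma~\ref{LemmaUniformBoundedness} and Fatou on the event $\{\alpha_1^+<R\}$); your argument genuinely needs it, both to ensure only finitely many limit particles exceed $\delta$ and to avoid an $\infty-\infty$ in $\gamma_1-S_\delta^\infty$, so you should either invoke the finiteness built into $H_{{\rm reg}}$ explicitly or reproduce that short argument. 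Second, the identity $\mathbb{E}[T_\delta^N]=\int_0^\delta xK_N^\nu(x,x)\,{\rm d}x$ requires extending the defining relation (\ref{determinantaldefinition}) from $C_c$ test functions to $x\mathbf{1}_{(0,\delta]}(x)$, a routine monotone-class step. With these made precise your proof is complete.
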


Proposition \ref{PropositionGamma1} will be an easy consequence of the next result, Propositions \ref{FinitenessGamma1} and~\ref{EstimateGamma1}.

\begin{Proposition}\label{AbstractGamma1}
Let $\mathfrak{M}$ be a $\mathbb{U}(\infty)$-invariant measure on $H$ such that
\begin{gather*}
\big(\pi_N^{\infty}\big)_*\mathfrak{M} \quad \text{is supported on} \quad H_+(N),\quad \forall\, N \ge 1.
\end{gather*}
In particular,
\begin{gather*}
\alpha_{i,N}^-(X)\equiv 0, \quad \forall\, i\ge 1, \quad N\ge 1, \qquad \alpha_i^-(X)\equiv 0, \quad \forall\, i \ge 1 \quad \text{for} \quad \mathfrak{M}-\operatorname{a.e.} X \in H_{{\rm reg}}.
\end{gather*}
Moreover, assume that
\begin{gather*}
\gamma_1(X)<\infty, \qquad \text{for} \quad \mathfrak{M}-\operatorname{a.e.} X \in H_{{\rm reg}}.
\end{gather*}
Let $\mathcal{P}_N, \mathcal{P}$ be the corresponding point processes $($of $\alpha^+$'s$)$ on $(0,\infty)$ and let $\rho_1^{(N)}$ and $\rho_1$ be their first correlation functions with respect to Lebesgue measure $($assuming they exist$)$. Assume that for any $\Phi \in C_c ((0,\infty) )$
\begin{gather*}
\int \Phi(x)\rho_1^{(N)}(x){\rm d}x\to \int \Phi(x)\rho_1(x){\rm d}x.
\end{gather*}
Finally, assume that
\begin{gather*}
\lim_{\epsilon \to 0} \int_{0}^{\epsilon}x \rho_1^{(N)}(x){\rm d}x=0, \qquad \text{uniformly in $N$}.
\end{gather*}
Then, we have
\begin{gather*}
\gamma_1(X)=\sum_{i=1}^{\infty}\alpha_i^+(X), \qquad \text{for} \quad \mathfrak{M}-\operatorname{a.e.} X \in H_{{\rm reg}}.
\end{gather*}
\end{Proposition}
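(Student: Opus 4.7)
The plan is to establish the two inequalities $\sum_i \alpha_i^+(X) \leq \gamma_1(X)$ and $\gamma_1(X) \leq \sum_i \alpha_i^+(X)$ separately for $\mathfrak{M}$-a.e.\ $X \in H_{{\rm reg}}$. The lower bound is immediate: for every fixed $k$ and $N \geq k$ one has $c^{(N)}(X) = \sum_{i \geq 1} \alpha_{i,N}^+(X) \geq \sum_{i=1}^{k} \alpha_{i,N}^+(X)$, so letting $N \to \infty$ (using $c^{(N)} \to \gamma_1$ and the coordinatewise convergence $\alpha_{i,N}^+ \to \alpha_i^+$, both guaranteed by Theorem~\ref{ErgodicMethodTheorem}) and then $k \to \infty$ yields $\gamma_1(X) \geq \sum_i \alpha_i^+(X)$ a.s. Combined with the standing hypothesis $\gamma_1(X) < \infty$ a.s., this already forces $\sum_i \alpha_i^+(X) < \infty$ a.s.

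For the reverse inequality I would split $c^{(N)}$ into a ``bulk'' piece, handled by a.s.\ convergence, and a ``small-particle'' piece controlled by the uniform integrability hypothesis. Fix $\delta > 0$ and pick a continuous cutoff $\psi_{\delta}\colon [0,\infty) \to [0,1]$ with $\psi_{\delta} \equiv 0$ on $[0, \delta/2]$ and $\psi_{\delta} \equiv 1$ on $[\delta,\infty)$. Writing $B_N(\delta)(X) := \sum_i \alpha_{i,N}^+(X) \mathbf{1}_{\{\alpha_{i,N}^+(X) < \delta\}}$, the pointwise bound $x(1-\psi_{\delta}(x)) \leq x \mathbf{1}_{\{x<\delta\}}$ gives
\begin{gather*}
c^{(N)}(X) \leq \sum_i \alpha_{i,N}^+(X) \psi_{\delta}\bigl(\alpha_{i,N}^+(X)\bigr) + B_N(\delta)(X).
\end{gather*}
Because $\sum_i \alpha_i^+(X) < \infty$ and the $\alpha_{i,N}^+(X)$ are weakly decreasing in $i$, only finitely many $\alpha_i^+(X)$ exceed $\delta/4$, so using monotonicity in $i$ together with coordinatewise convergence, the first sum reduces to a finite sum of continuous functions evaluated at convergent sequences; thus $\lim_N \sum_i \alpha_{i,N}^+(X) \psi_{\delta}(\alpha_{i,N}^+(X)) = \sum_i \alpha_i^+(X) \psi_{\delta}(\alpha_i^+(X)) \leq \sum_i \alpha_i^+(X)$ a.s. Taking $N \to \infty$ in the displayed inequality yields a.s.
\begin{gather*}
\gamma_1(X) \leq \sum_i \alpha_i^+(X) \psi_{\delta}\bigl(\alpha_i^+(X)\bigr) + \liminf_{N \to \infty} B_N(\delta)(X).
\end{gather*}

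The final task is to send $\delta \to 0$ in this a.s.\ bound. By the defining property of the first correlation function one has $\mathbb{E}[B_N(\delta)] = \int_0^{\delta} x \rho_1^{(N)}(x) {\rm d}x$, which by hypothesis tends to $0$ uniformly in $N$ as $\delta \to 0$. Pick a sequence $\delta_m \downarrow 0$ with $\sup_N \int_0^{\delta_m} x \rho_1^{(N)}(x) {\rm d}x \leq 2^{-m}$; Fatou's lemma then gives $\mathbb{E}[\liminf_N B_N(\delta_m)] \leq 2^{-m}$, so $\sum_m \liminf_N B_N(\delta_m) < \infty$ a.s., and in particular $\liminf_N B_N(\delta_m) \to 0$ a.s.\ as $m \to \infty$. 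Applying the bound above along the subsequence $(\delta_m)$ and using the monotone convergence $\sum_i \alpha_i^+(X) \psi_{\delta_m}(\alpha_i^+(X)) \nearrow \sum_i \alpha_i^+(X)$ yields $\gamma_1(X) \leq \sum_i \alpha_i^+(X)$ a.s., which together with the first step closes the proof. The main obstacle is precisely this last step: the uniform integrability is only an $L^1$-type control, so it is crucial that the cutoff bound uses $\liminf_N$ (not $\lim_N$), which is what allows Fatou plus the summability $\sum_m 2^{-m} < \infty$ to deliver an a.s.\ conclusion along $(\delta_m)$.
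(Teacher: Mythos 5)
Your argument is correct, and while it rests on the same pillars as the paper's proof --- positivity of the defect $\Delta(X)=\gamma_1(X)-\sum_i\alpha_i^+(X)$ (your first step), reduction of the large-particle part of the sum to finitely many coordinates via monotonicity in $i$ plus coordinatewise convergence, and Fatou's lemma combined with the uniform small-particle estimate --- the execution is genuinely different. The paper fixes a single compactly supported $\Phi\ge 0$ with $\Phi(x)=x$ near $0$, proves pathwise that $\phi_N(X)=\sum_i\Phi\big(\alpha_{i,N}^+(X)\big)\to\phi_\infty(X)+\Delta(X)$, and then sandwiches the expectations: Fatou gives $\liminf_N\int\Phi\rho_1^{(N)}\ge\int\Phi\rho_1+\mathbb{E}[\Delta]$, while the decomposition $\Phi=\Phi_{\epsilon}+\Psi_{\epsilon}$, the uniform estimate, and the assumed convergence $\int\Psi_{\epsilon}\rho_1^{(N)}\to\int\Psi_{\epsilon}\rho_1$ give $\limsup_N\int\Phi\rho_1^{(N)}\le\int\Phi\rho_1$, forcing $\mathbb{E}[\Delta]=0$ and hence $\Delta=0$ a.e.\ by positivity. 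You instead keep the argument pathwise for the bulk and only pass to expectations for the tail $B_N(\delta)$, upgrading the uniform $L^1$ bound to an a.s.\ statement along a summable sequence $\delta_m$ via Fatou. This buys you something real: your route never invokes the hypothesis $\int\Phi\rho_1^{(N)}\to\int\Phi\rho_1$ for $\Phi\in C_c\big((0,\infty)\big)$, so you prove the proposition under strictly weaker assumptions. One cosmetic point: the ``monotone convergence'' $\sum_i\alpha_i^+\psi_{\delta_m}\big(\alpha_i^+\big)\nearrow\sum_i\alpha_i^+$ requires the cutoffs $\psi_{\delta_m}$ to be nested, which your construction does not guarantee; but dominated convergence, with dominating function the summable sequence $\alpha_i^+$ supplied by your first step, closes this immediately.
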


We first need an elementary lemma.
\begin{Lemma}Assume we are given numbers $\forall\, N \ge 1$
\begin{gather*}
\alpha_{1,N}^+\ge \alpha_{2,N}^+\ge \cdots \ge 0
\end{gather*}
such that
\begin{gather*}
\lim_{N \to \infty}\alpha_{i,N}^+=\alpha_i^+, \qquad \forall\, i\ge 1.
\end{gather*}
Moreover, assume the following limit exists and is finite
\begin{gather*}
\lim_{N\to \infty} \sum_{i=1}^{\infty}\alpha_{i,N}^+=\gamma_1<\infty.
\end{gather*}
Note that by Fatou's lemma $($and positivity$)$
\begin{gather*}
\Delta=\gamma_1-\sum_{i=1}^{\infty}\alpha_i^+\ge 0.
\end{gather*}
Let $\Phi$ be a continuous function on $(0,\infty)$ such that
\begin{gather*}
\Phi(x)=x, \qquad x<\epsilon
\end{gather*}
for a certain $\epsilon>0$. Then,
\begin{gather*}
\lim_{N \to \infty}\sum_{i=0}^{\infty} \Phi\big(\alpha_{i,N}^+\big)=\sum_{i=1}^{\infty} \Phi\big(\alpha_i^+\big)+\Delta.
\end{gather*}
\end{Lemma}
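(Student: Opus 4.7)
The plan is to reduce the infinite sum to a finite-dimensional problem by truncating at an index $M$ large enough that, past $M$, both the limit and pre-limit $\alpha$'s fall below the threshold $\epsilon$ where $\Phi$ acts as the identity. The entire ``defect'' $\Delta$ should then be absorbed into the tails.

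First, I observe that since $\sum_i \alpha_i^+ \le \gamma_1 < \infty$, we have $\alpha_i^+ \to 0$ as $i \to \infty$. Pick $M$ large enough so that $\alpha_{M+1}^+ < \epsilon/2$. Because $\alpha_{M+1,N}^+ \to \alpha_{M+1}^+$, there exists $N_0$ such that for all $N\ge N_0$, $\alpha_{M+1,N}^+ < \epsilon$, and by the monotonicity assumption $\alpha_{i,N}^+ < \epsilon$ for every $i \ge M+1$. Hence $\Phi(\alpha_{i,N}^+)=\alpha_{i,N}^+$ for all $i\ge M+1$ and $N\ge N_0$, and similarly $\Phi(\alpha_i^+)=\alpha_i^+$ for all $i \ge M+1$. (This also ensures the sum $\sum_i \Phi(\alpha_{i,N}^+)$ converges, since its tail is controlled by the summable $\sum_i \alpha_{i,N}^+$.)

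For $N\ge N_0$, I would split
\begin{gather*}
\sum_{i=1}^{\infty} \Phi\big(\alpha_{i,N}^+\big)
=\sum_{i=1}^{M} \Phi\big(\alpha_{i,N}^+\big)
+\left(\sum_{i=1}^{\infty} \alpha_{i,N}^+ - \sum_{i=1}^{M}\alpha_{i,N}^+\right).
\end{gather*}
The first (finite) sum converges to $\sum_{i=1}^M \Phi(\alpha_i^+)$ by continuity of $\Phi$ and the pointwise convergence $\alpha_{i,N}^+\to \alpha_i^+$. The bracketed expression converges to $\gamma_1 - \sum_{i=1}^M \alpha_i^+$ by assumption on the total sum and the same pointwise convergence.

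Putting the pieces together,
\begin{gather*}
\lim_{N\to\infty}\sum_{i=1}^{\infty} \Phi\big(\alpha_{i,N}^+\big)
= \sum_{i=1}^{M}\Phi\big(\alpha_i^+\big)+\gamma_1-\sum_{i=1}^{M}\alpha_i^+
= \sum_{i=1}^{M}\Phi\big(\alpha_i^+\big)+\sum_{i=M+1}^{\infty}\alpha_i^+ + \Delta,
\end{gather*}
where the last equality uses $\Delta=\gamma_1-\sum_{i\ge 1}\alpha_i^+$. Since $\Phi(\alpha_i^+)=\alpha_i^+$ for $i\ge M+1$, the right-hand side equals $\sum_{i=1}^{\infty}\Phi(\alpha_i^+)+\Delta$, which is the desired identity. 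There is no real obstacle here; the only subtlety is the choice of $M$, which relies precisely on the finiteness of $\gamma_1$ forcing $\alpha_i^+\to 0$ so that past $M$ the function $\Phi$ is linear and the tail mass is captured cleanly by the total-sum convergence.
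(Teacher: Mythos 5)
Your proof is correct and follows essentially the same route as the paper's: truncate at an index past which monotonicity forces all $\alpha_{i,N}^+$ and $\alpha_i^+$ below $\epsilon$ so that $\Phi$ is the identity on the tail, handle the finite head by continuity, and recover the tail limit (and hence $\Delta$) from the convergence of the total sum. The only cosmetic difference is that you write the tail as the total sum minus the head, whereas the paper states the tail convergence directly; the content is identical.
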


\begin{proof}Observe that, there exists $k$ such that $\alpha_{k+1}^{+}<\epsilon$. Then $\alpha_{k+1,N}^+<\epsilon$ for $N$ sufficiently large and $\alpha_{i,N}^+<\epsilon$ for $i\ge k+1$ by monotonicity. Also, $\alpha_i^+<\epsilon$ for $i\ge k+1$. Therefore
\begin{gather*}
\Phi(\alpha_{i,N}^+)=\alpha_{i,N}^+, \quad N \ \text{large},\qquad \Phi(\alpha_{i}^+)=\alpha_{i}^+, \quad \forall\, i \ge k+1.
\end{gather*}
Thus,
\begin{gather*}
\sum_{i=1}^{\infty}\Phi\big(\alpha_{i,N}^+\big)=\sum_{i=1}^{k}\Phi\big(\alpha_{i,N}^+\big)+\sum_{i=k+1}^{\infty}\alpha_{i,N}^+
\end{gather*}
and
\begin{gather*}
\sum_{i=1}^{\infty}\Phi\big(\alpha_{i}^+\big)=\sum_{i=1}^{k}\Phi\big(\alpha_{i}^+\big)+\sum_{i=k+1}^{\infty}\alpha_{i}^+.
\end{gather*}
As $N \to \infty$ by continuity of $\Phi$
\begin{gather*}
\sum_{i=1}^{k}\Phi\big(\alpha_{i,N}^+\big) \to \sum_{i=1}^{k}\Phi\big(\alpha_{i}^+\big)
\end{gather*}
and by the assumptions of the lemma
\begin{gather*}
\sum_{i=k+1}^{\infty}\alpha_{i,N}^+\to \sum_{i=k+1}^{\infty}\alpha_{i}^++\Delta.
\end{gather*}
The statement now follows.
\end{proof}

\begin{proof}[Proof of Proposition \ref{AbstractGamma1}]
First, observe that $\mathfrak{M}$ is supported on the subset $H^{*}_{{\rm reg}}\subset H_{{\rm reg}}$ that we now define. An element $X \in H^*_{{\rm reg}}$ iff
\begin{gather*}
\alpha_{i,N}^-(X)\equiv 0, \quad \forall\, i\ge 1, \quad N\ge 1, \qquad \alpha_i^-(X) \equiv 0, \quad \forall\, i \ge 1, \qquad \gamma_1(X)<\infty.
\end{gather*}
Fix a continuous function $\Phi(x)\ge 0$, vanishing for $x$ large enough, such that $\Phi(x)=x$ near 0. For any $X \in H^{*}_{{\rm reg}}$ we set
\begin{gather*}
\phi_N(X) =\sum_{i=1}^{\infty}\Phi\big(\alpha_{i,N}^+(X)\big),\qquad
\phi_{\infty}(X) =\sum_{i=1}^{\infty}\Phi\big(\alpha_i^+(X)\big).
\end{gather*}
Apply the previous lemma to the sequences $\alpha_{i,N}^+=\alpha_{i,N}^+(X)$, $\alpha_{i}^+=\alpha_{i}^+(X)$ for $X \in H^{*}_{{\rm reg}}$ (note that all conditions are satisfied) to get
\begin{gather*}
\phi_N(X)\to \phi_{\infty}(X)+\Delta(X).
\end{gather*}
Observe that all three functions $\phi_N$, $\phi_{\infty}$, $\Delta$ are non-negative and thus Fatou's lemma gives
\begin{gather*}
\liminf_{N\to \infty} \int_{X \in H^{*}_{{\rm reg}}}^{}\phi_N(X)\mathfrak{M}({\rm d}X) \ge \int_{X \in H^{*}_{{\rm reg}}}^{}\phi_\infty(X)\mathfrak{M}({\rm d}X)+\int_{X \in H^{*}_{{\rm reg}}}^{}\Delta(X)\mathfrak{M}({\rm d}X).
\end{gather*}
Associate the point configurations $\mathsf{C}_N(X)$, $\mathsf{C}(X)$ to $X\in H^{*}_{{\rm reg}}$. Then
\begin{gather*}
\phi_{N}(X)=\sum_{i=1}^{\infty} \Phi\big(\alpha_{i,N}^+(X)\big)=\sum_{x \in \mathsf{C}_N(X)}^{}\Phi(x)
\end{gather*}
so that by the definition of the correlation functions
\begin{gather*}
\int_{X \in H^{*}_{{\rm reg}}}^{}\phi_N(X)\mathfrak{M}({\rm d}X)=\int\Phi(x)\rho_1^{(N)}(x){\rm d}x
\end{gather*}
and similarly
\begin{gather*}
\int_{X \in H^{*}_{{\rm reg}}}^{}\phi_\infty(X)\mathfrak{M}({\rm d}X)= \int \Phi(x)\rho_1(x){\rm d}x.
\end{gather*}
Thus
\begin{gather*}
\liminf_{N \to \infty}\int\Phi(x)\rho_1^{(N)}(x){\rm d}x \ge \int \Phi(x)\rho_1(x){\rm d}x+\int_{X \in H^{*}_{{\rm reg}}}^{}\Delta(X)\mathfrak{M}({\rm d}X).
\end{gather*}
We now proceed to show that
\begin{gather*}
\limsup_{N \to \infty}\int\Phi(x)\rho_1^{(N)}(x){\rm d}x \le \int \Phi(x)\rho_1(x){\rm d}x.
\end{gather*}
Since $\Delta(X)\ge 0$ on $H^{*}_{{\rm reg}}$ we get that $\Delta(X)\equiv 0$ for $\mathfrak{M}-\operatorname{a.e.} X \in H^*_{{\rm reg}}$, from which, recalling that $\mathfrak{M}$ is supported on $H^*_{{\rm reg}}$ the conclusion of the proposition follows.

To this end, decompose $\Phi(x)$ as follows, for arbitrary $\epsilon>0$
\begin{gather*}
\Phi(x)=\Phi_{\epsilon}(x)+\Psi_{\epsilon}(x),
\end{gather*}
where $0\le \Phi_{\epsilon}(x)\le x$, $\operatorname{supp} \Phi_{\epsilon} \subset [0,\epsilon]$, $\Phi_{\epsilon}(x)=x$ near $0$ and $\Psi_{\epsilon}\in C_{c} ((0,\infty) )$ and positive. From the assumption of the proposition we obtain
\begin{gather*}
\lim_{\epsilon \to 0} \limsup_{N \to \infty}\int\Phi_{\epsilon}(x)\rho_1^{(N)}(x){\rm d}x=0.
\end{gather*}
Thus by Fatou's lemma for any $\epsilon>0$
\begin{gather*}
\limsup_{N \to \infty}\int\Phi(x)\rho_1^{(N)}(x){\rm d}x\le \limsup_{N \to \infty}\int\Phi_{\epsilon}(x)\rho_1^{(N)}(x){\rm d}x+\limsup_{N \to \infty}\int\Psi_{\epsilon}(x)\rho_1^{(N)}(x){\rm d}x.
\end{gather*}
Taking the limit $\epsilon \to 0$ we finally get, by convergence of the first correlation function $\rho_1^{(N)} \to \rho_1$,
\begin{align*}
\limsup_{N \to \infty}\int\Phi(x)\rho_1^{(N)}(x){\rm d}x&\le \lim_{\epsilon \to 0}\limsup_{N \to \infty}\int\Phi_{\epsilon}(x)\rho_1^{(N)}(x){\rm d}x+ \lim_{\epsilon \to 0}\limsup_{N \to \infty}\int\Psi_{\epsilon}(x)\rho_1^{(N)}(x){\rm d}x\\
&=\lim_{\epsilon \to 0}\int\Psi_{\epsilon}(x)\rho_1(x){\rm d}x=\int \Phi(x)\rho_1(x){\rm d}x.\tag*{\qed}
\end{align*}\renewcommand{\qed}{}
\end{proof}

\begin{Proposition}\label{FinitenessGamma1}Let $\nu>-1$. Then
\begin{gather*}
\gamma_1(X)<\infty, \qquad \text{for} \quad \mathsf{M}^{(\nu)}-\operatorname{a.e.} X \in H_{{\rm reg}}.
\end{gather*}
\end{Proposition}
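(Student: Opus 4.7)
The plan is to apply Fatou's lemma to the non-negative linear statistic
\begin{gather*}
\phi_N(X) := \sum_{i \ge 1} \Phi\bigl(\alpha_{i,N}^+(X)\bigr),
\end{gather*}
where $\Phi \colon [0,\infty) \to [0,\infty)$ is a fixed continuous test function with $\Phi(x)=x$ on $[0,\epsilon_0]$, $0 \le \Phi(x) \le x$ everywhere, and $\operatorname{supp}\Phi \subset [0,T]$ for chosen constants $\epsilon_0,T>0$. Since $\{\alpha_{i,N}^+(X)\}$ is determinantal with kernel $K_N^{\nu}$,
\begin{gather*}
\mathbb{E}_{\mathsf{M}^{(\nu)}}[\phi_N] = \int_0^\infty \Phi(x) K_N^{\nu}(x,x)\, {\rm d}x \le \int_0^T x K_N^{\nu}(x,x)\, {\rm d}x \le C(T)
\end{gather*}
uniformly in $N$, by Lemma \ref{LemmaUniformBoundedness}.

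The core step is to prove the dichotomy that $\gamma_1(X)=+\infty$ forces $\phi_N(X) \to +\infty$. By Proposition \ref{PropositionLimitCorrelationAlphaPlus} the $\{\alpha_i^+(X)\}$ form a determinantal process with kernel $K_\infty^{\nu}$, and from the explicit formula $K_\infty^{\nu}(x,x)=\frac{8}{x^2}\mathbb{J}_\nu(\frac{8}{x},\frac{8}{x})$ together with the standard asymptotics $\mathbb{J}_\nu(y,y)\sim C y^\nu$ as $y \to 0^+$ one has $\int_\epsilon^\infty K_\infty^{\nu}(x,x)\, {\rm d}x < \infty$ for every $\epsilon > 0$ (since $\nu > -1$), so the configuration has no accumulation point outside $0$ and $\alpha_i^+(X) \to 0$ as $i \to \infty$ for $\mathsf{M}^{(\nu)}$-a.e.\ $X \in H_{{\rm reg}}$. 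For such $X$ pick $k=k(X)$ with $\alpha_{k+1}^+(X) < \epsilon_0$; by the pointwise convergence $\alpha_{k+1,N}^+(X) \to \alpha_{k+1}^+(X)$ and the monotonicity $\alpha_{i,N}^+(X) \le \alpha_{k+1,N}^+(X)$ for $i>k$, we obtain $\alpha_{i,N}^+(X) < \epsilon_0$ and hence $\Phi(\alpha_{i,N}^+(X)) = \alpha_{i,N}^+(X)$ for all $i>k$ and $N$ sufficiently large. Therefore
\begin{gather*}
\phi_N(X) \ge \sum_{i>k} \alpha_{i,N}^+(X) = c^{(N)}(X) - \sum_{i=1}^k \alpha_{i,N}^+(X) \underset{N\to\infty}{\longrightarrow} \gamma_1(X) - \sum_{i=1}^k \alpha_i^+(X),
\end{gather*}
and on $\{\gamma_1=+\infty\}$ the right-hand side equals $+\infty$ because $\sum_{i=1}^k\alpha_i^+(X)$ is a finite sum of finite terms.

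Applying Fatou's lemma to the non-negative sequence $\phi_N$ then yields $\mathbb{E}_{\mathsf{M}^{(\nu)}}[\liminf_N \phi_N] \le \liminf_N \mathbb{E}[\phi_N] \le C(T)<\infty$; if $\mathsf{M}^{(\nu)}\{\gamma_1 = \infty\}$ were positive the left-hand side would be $+\infty$, a contradiction. Hence $\gamma_1(X) < \infty$ for $\mathsf{M}^{(\nu)}$-a.e.\ $X \in H_{{\rm reg}}$. The main obstacle, in my view, is the coupling between the two independent limits (in $N$ for the approximation and in $i$ for the tail of the $\alpha^+$ sequence); this is resolved by the monotonicity of $\alpha_{i,N}^+$ in $i$, which allows a single cutoff $k(X)$ to control the entire tail $\{i>k\}$ simultaneously for all large $N$, and by Proposition \ref{PropositionLimitCorrelationAlphaPlus} to guarantee $\alpha_i^+(X)\to 0$.
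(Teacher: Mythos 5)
Your argument is correct, and it rests on the same two pillars as the paper's proof: Fatou's lemma applied to a truncated linear statistic of the determinantal process $\big\{\alpha_{i,N}^+(X)\big\}$, and the uniform-in-$N$ bound on $\int_0^T xK_N^{\nu}(x,x)\,{\rm d}x$ from Lemma~\ref{LemmaUniformBoundedness}. The only genuine difference is the truncation device. The paper decomposes $H^{+}_{{\rm reg}}=\bigcup_k\big\{\alpha_1^+<k\big\}$ and multiplies the full statistic $\sum_i\alpha_{i,N}^+$ by the indicator $\mathbf{1}\big(\alpha_{1,N}^+<R\big)$, which it then dominates by the truncated statistic $\sum_{x\in\mathsf{C}_N}x\mathbf{1}(x<R)$; you instead cap the summand by a continuous $0\le\Phi(x)\le x$ of compact support and recover the divergence of $\phi_N$ on $\{\gamma_1=\infty\}$ by observing that only the finitely many indices $i\le k(X)$ are affected by the cap, the tail lying below $\epsilon_0$ for large $N$ by monotonicity in $i$ together with $\alpha_i^+(X)\to 0$, which you correctly extract from the integrability of $K_\infty^{\nu}(x,x)$ away from the origin (via Proposition~\ref{PropositionLimitCorrelationAlphaPlus}). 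Your version buys a little: it avoids the countable decomposition over $R$ and the need to reconcile the indicators $\mathbf{1}\big(\alpha_1^+<R\big)$ and $\mathbf{1}\big(\alpha_{1,N}^+<R\big)$ inside the almost-sure limit, and your tail-splitting step is essentially the elementary lemma the paper proves separately en route to Proposition~\ref{AbstractGamma1}, so the two could be merged. Both proofs are equally rigorous; the paper's is marginally shorter because the indicator truncation makes the identification of the expectation with $\int_0^R xK_N^{\nu}(x,x)\,{\rm d}x$ immediate.
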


\begin{proof}First of all, we note that $\mathsf{M}^{(\nu)}$ is supported on the subset $H^{+}_{{\rm reg}}\subset H_{{\rm reg}}$ that we now define. An element $X \in H^+_{{\rm reg}}$ iff
\begin{gather*}
\alpha_{i,N}^-(X)\equiv 0, \quad \forall\, i\ge 1, \quad N\ge 1, \qquad \alpha_i^-(X) \equiv 0, \quad \forall\, i \ge 1.
\end{gather*}
Moreover, if we define for $R>0$ the subset $H^{+,R}_{{\rm reg}}\subset H^{+}_{{\rm reg}}$ such that $X \in H^{+,R}_{{\rm reg}}$ iff $\alpha_1^+(X)<R$ we easily see that
\begin{gather*}
H^{+}_{{\rm reg}}=\bigcup_{k \in \mathbb{N}} H^{+,k}_{{\rm reg}}.
\end{gather*}
Hence it will suffice to show that for any fixed $R>0$
\begin{gather*}
\gamma_1(X)<\infty, \qquad \text{for} \quad \mathsf{M}^{(\nu)}-\operatorname{a.e.} X \in H^{+,R}_{{\rm reg}}.
\end{gather*}
Furthermore, by positivity it actually suffices to show
\begin{gather*}
\mathbb{E}\big[\gamma_1(X) \textbf{1}\big(X \in H_{{\rm reg}}^{+,R}\big)\big]<\infty,
\end{gather*}
where the expectation $\mathbb{E}$ is with respect to $\mathsf{M}^{(\nu)}$. We calculate, using Fatou's lemma and the underlying determinantal structure
\begin{align*}
\mathbb{E}\big[\gamma_1(X) \textbf{1}\big(X \in H_{{\rm reg}}^{+,R}\big)\big]&=\mathbb{E}\big[\gamma_1(X) \textbf{1}\big(\alpha_1^+(X)<R\big)\big]\\
&=\mathbb{E}\left[\lim_{N\to \infty} \left(\textbf{1}\big(\alpha_1^+(X)<R\big)\sum_{i=1}^{\infty}\alpha_{i,N}^+(X)\right)\right]\\
&=\mathbb{E}\left[\lim_{N\to \infty} \left(\textbf{1}\big(\alpha_{1,N}^+(X)<R\big)\sum_{i=1}^{\infty}\alpha_{i,N}^+(X)\right)\right]\\
&\le\liminf_{N \to \infty}\mathbb{E}\left[\textbf{1}\big(\alpha_{1,N}^+(X)<R\big)\sum_{i=1}^{\infty}\alpha_{i,N}^+(X)\right]\\
&=\liminf_{N \to \infty}\mathbb{E}\left[\sum_{x \in \mathsf{C}_N^{(\nu)}(X)}x\textbf{1} (x<R )\right]\\
&=\liminf_{N \to \infty}\int_{0}^{R}x K_N^{\nu}(x,x){\rm d}x<\infty.
\end{align*}
The last claim is the statement of Lemma~\ref{LemmaUniformBoundedness}.
\end{proof}

\begin{proof}[Proof of Proposition \ref{PropositionGamma1}] We apply Proposition \ref{AbstractGamma1}. The first assumptions follow from Pro\-po\-si\-tions~\ref{PropositionAlphaMinus},~\ref{PropositionAlphaPlus} and~\ref{FinitenessGamma1} above, while the fact that
\begin{gather*}
\lim_{\epsilon \to 0} \int_{0}^{\epsilon}x \rho_1^{(N)}(x)\,{\rm d}x=0, \qquad \text{uniformly in $N$}.
\end{gather*}
follows from Proposition \ref{EstimateGamma1}.
\end{proof}
\section{Proof of main theorem}\label{section8}

\begin{proof}[Proof of Theorem \ref{MainTheorem}]
The fact that
\begin{gather*}
\mathfrak{m}^{(\nu)}\big(\Omega_0^+\big)=1
\end{gather*}
follows from combining Propositions~\ref{PropositionAlphaMinus},~\ref{PropositionAlphaPlus},~\ref{PropositionGamma2}, and~\ref{PropositionGamma1}.

The description of the law of the parameters $\alpha^+=\left(\alpha^+_1\ge \alpha_2^+\ge\alpha_3^+\ge \cdots \ge 0\right)$ under $\mathfrak{m}^{(\nu)}$ viewed as a point configuration on $(0,\infty)$ is given by Proposition \ref{PropositionLimitCorrelationAlphaPlus} (after making use of Theorem~\ref{ErgodicMethodTheorem}). The proof is complete.
\end{proof}

\subsection*{Acknowledgements}
I would like to thank Alexei Borodin and Grigori Olshanski for some useful comments and pointers to the literature. Finally, I would like to thank the anonymous referees for a careful reading of the paper and a number of useful suggestions and remarks. Research supported by ERC Advanced Grant 740900 (LogCorRM).

\pdfbookmark[1]{References}{ref}
\LastPageEnding

\end{document}